\def\expect{{\mathbb  E}}
\def\var{{\mathbb Var}}
\def\Pr{{\mathbb P}}
\def\nat{{\mathbb  N}}
\def\ind{{\bf 1}}
\def\12{\frac{1}{2}}
\newfont{\bbb}{msbm10 scaled 500}
\newfont{\bb}{msbm10 scaled 1100}
\newcommand{\hv}{{\bf h}}
\newcommand{\xv}{{\bf x}}
\newcommand{\yv}{{\bf y}}
\newcommand{\zv}{{\bf z}}
\newcommand{\Hm}{{\bf H}}
\newcommand{\Wm}{{\bf W}}
\newcommand{\Xm}{{\bf X}}
\newcommand{\Ym}{{\bf Y}}
\newcommand{\Zm}{{\bf Z}}
\newcommand{\Ac}{{\cal A}}
\newcommand{\Bc}{{\cal B}}
\newcommand{\Gc}{{\cal G}}
\newcommand{\Kc}{{\cal K}}
\newcommand{\Oc}{{\cal O}}
\newcommand{\Pc}{{\cal P}}
\newcommand{\Wc}{{\cal W}}
\newcommand{\Rate}{R}
\newtheorem{theorem}{Theorem}
\newtheorem{lemma}{Lemma}%[chapter]
\newtheorem{definition}{Definition}%[chapter]
\newtheorem{remark}{Remark}%[chapter]
\newtheorem{example}{Example}%[chapter]
\author{
Onur Gungor, Jian Tan, Can Emre Koksal, Hesham El-Gamal, Ness B. Shroff\\
\begin{tabular} {c}
\small Department of Electrical and Computer Engineering\\
\small The Ohio State University, Columbus, 43210\\
\end{tabular}
%\thanks{
%The authors are with the Department of Electrical and Computer Engineering,
%The Ohio State University, Columbus, OH, 43210.
%}
}
\title{Secrecy Outage Capacity of Fading Channels \footnote{This work was published in part
 in the Proceedings of INFOCOM 2010, San Diego, CA.}}
\begin{document}
\maketitle

%%%%%%%%%%%%%%%%%%%%%%%%%%%%%%%%%%%%%%%%%%%%%%%%%%%%%%%%%%%%%%%%%%%%%%%%%%%%%%
%%%%%%%%%%%%%%%%%%%%%%%%%%%%%%%%%%%%%%%%%%%%%%%%%%%%%%%%%%%%%%%%%%%%%%%%%%%%%%

\begin{abstract}
%In recent years, the famous {\em wiretap channel} has been revisited
%by many researchers and information theoretic secrecy has become an
%active area of research in this setting.

% In this paper, we design a
%wireless communication system that achieves constant bit rate data
%transmission over a block fading channel, securely from an
%eavesdropper that listens to the transmitter over another
%independent block fading channel.
This paper considers point to point secure communication over flat fading channels
under an outage constraint. More specifically,
we extend the definition of outage capacity to account for the secrecy constraint and obtain  
sharp characterizations of the corresponding fundamental limits under two different assumptions on the transmitter CSI (Channel state information).
First, we find the outage secrecy capacity assuming that the transmitter has perfect knowledge of the legitimate and eavesdropper
channel gains. In this scenario, the capacity achieving scheme relies on opportunistically exchanging private keys between the legitimate nodes. These keys are stored in a key buffer and later used to secure delay sensitive data using the Vernam's one time pad technique. We then extend our results to the more practical scenario where the transmitter is assumed to know only the legitimate channel gain. Here, our achievability arguments rely on privacy amplification techniques to generate secret key bits. In the two cases, we also characterize the optimal power control policies which, interestingly, turn out to be a judicious combination of channel inversion and the optimal ergodic strategy. Finally, we analyze the effect of key buffer overflow on the overall outage probability. 

%study a practical scenario, where the transmitter and receiver only has a
%finite key buffer, and propose a slightly suboptimal scheme.

\end{abstract}

%%%%%%%%%%%%%%%%%%%%%%%%%%%%%%%%%%%%%%%%%%%%%%%%%%%%%%%%%%%%%%%%%%%%%%%%%%%%%%
%%%%%%%%%%%%%%%%%%%%%%%%%%%%%%%%%%%%%%%%%%%%%%%%%%%%%%%%%%%%%%%%%%%%%%%%%%%%%%

%\begin{keywords}
%
%\end{keywords}

%%%%%%%%%%%%%%%%%%%%%%%%%%%%%%%%%%%%%%%%%%%%%%%%%%%%%%%%%%%%%%%%%%%%%%%%%%%%%%
%%%%%%%%%%%%%%%%%%%%%%%%%%%%%%%%%%%%%%%%%%%%%%%%%%%%%%%%%%%%%%%%%%%%%%%%%%%%%%

\section{Introduction}
\label{sec:Introduction}

Secure communication is a topic that is becoming increasingly
important thanks to the proliferation of wireless devices. Over the years, several secrecy protocols have been developed and incorporated in several wireless standards; e.g., the IEEE 802.11 specifications for Wi-Fi. However, as new schemes are being
developed, methods to counter the specific techniques also
appear. Breaking this cycle is critically dependent on the design of protocols that offer provable secrecy guarantees. The information theoretic secrecy paradigm adopted here, allows for a systematic approach for the design of low complexity and provable secrecy protocols that fully exploit the intrinsic properties of the wireless medium.

Most of the recent work on information theoretic secrecy is, arguably, inspired by Wyner's wiretap channel~\cite{Wyner}. In this setup, a passive eavesdropper listens to the communication between two legitimate nodes over a separate communication channel. While attempting to decipher the message, no limit is imposed on the computational resources available to the eavesdropper. This assumption led to defining 
{\bf perfect secrecy capacity} as the maximum achievable rate
subject to  zero mutual information rate between the transmitted message
and the signal received by the eavesdropper. In the additive Gaussian noise scenario~\cite{GaussianWiretap}, the perfect secrecy capacity turned out to be the difference between the capacities of the legitimate and eavesdropper channels. Therefore,
if the eavesdropper channel has a higher channel gain, information theoretic secure communication is not possible over the
main channel. Recent works have shown how to exploit multipath fading to avoid this limitation~\cite{Secrecy:08,KeysArq,khisti}. The basic idea is to opportunistically exploit the instants when the main channel enjoys a higher gain than the eavesdropper channel to exchange secure messages. This opportunistic secrecy approach was shown to achieve non-zero {\bf ergodic secrecy capacity} even when 
{\bf on average} the eavesdropper channel has favorable
conditions over the legitimate channel. Remarkably, this result still holds even when the channel state information of the eavesdropper channel is not available at the legitimate nodes.

The ergodic result in~\cite{Secrecy:08} applies only to delay tolerant traffic, e.g., file downloads. 
Early attempts at characterizing the delay limited secrecy capacity drew the negative conclusion that non-zero delay limited secrecy rates are not achievable, over almost all channel distributions, due to {\bf secrecy outage} events corresponding to the instants when the eavesdropper channel gain is larger than the main one~\cite{barros, liang}. Later, it was shown in~\cite{DelaySecrecy:09} that,
interestingly, a non-zero delay limited secrecy rate could be achieved by
introducing {\bf private key queues} at both the transmitter and the
receiver. These queues are used to store private key bits that are shared {\bf opportunistically} between the legitimate nodes when the main channel is more favorable than the one seen by the eavesdropper. These key bits are used later
to secure the delay sensitive data using the Vernam one time pad approach \cite{Shannon}.
Hence, secrecy outages are avoided by simply storing the secrecy generated previously, in the
form of key bits, and using them whenever the channel conditions are more advantageous for the eavesdropper. However, this work stopped short of proving sharp capacity results or deriving the corresponding optimal power control policies. These results can be recovered as special cases of the secrecy outage capacity and power control characterization obtained in the sequel. In particular, this work investigates the outage secrecy capacity of point-to-point block fading channels.  We first consider the scenario where perfect knowledge about the main and eavesdropper channels are available {\em a-priori} at the transmitter. The outage secrecy capacity and corresponding optimal power control policy is obtained and then the results are generalized to the more practical scenario where only the main channel state information (CSI) is available at the transmitter. 
Finally, the impact of the {\em private key queue} overflow on
secrecy outage probability is studied. 
Overall, our results reveals interesting structural insights on the optimal encoding and power control schemes as well as sharp characterizations of the fundamental limits on secure communication of delay sensitive traffic over fading channels.

The rest of this paper is organized as follows.%Section~\ref{sec:Insights} provides some motivating insights. 
We formally introduce our system model in Section~\ref{s:sysmodel}. 
In Section~\ref{section:capacity}, we obtain the capacity results for the full and main CSI scenarios. 
The optimal power control policies, for both cases, are derived in
Section~\ref{section:power}. The effect of key buffer overflow on the outage probability is investigated 
in Section~\ref{s:finitebuffer}. We provide simulations to support our main results in Section~\ref{section:simulations}.
 Finally,
Section~\ref{s:conclusion} offers some concluding remarks. To enhance the flow of the paper, 
the proofs are collected in the Appendices. 

%%%%%%%%%%%%%%%%%%%%%%%%%%%%%%%%%%%%%%%%%%%%%%%%%%%%%%%%%%%%%%%%%%%%%%%%%%%%%%
%%%%%%%%%%%%%%%%%%%%%%%%%%%%%%%%%%%%%%%%%%%%%%%%%%%%%%%%%%%%%%%%%%%%%%%%%%%%%%
%%%%%%%%%%%%%%%%%%%%%%%%%%%%%%%%%%%%%%%%%%%%%%%%%%%%%%%%%%%%%%%%%%%%%%%%%%%%%%
%%%%%%%%%%%%%%%%%%%%%%%%%%%%%%%%%%%%%%%%%%%%%%%%%%%%%%%%%%%%%%%%%%%%%%%%%%%%%%
\section{System Model}\label{s:sysmodel}
We study a point-to-point wireless communication link, 
in which a transmitter is trying to send information to a legitimate receiver,
under the presence of a passive eavesdropper.
We divide time into discrete slots, where blocks are formed by $N$ channel uses, and 
 $B$ blocks combine to form a super-block. Let the communication period consist of $S$ super-blocks. 
We use the notation $(s,b)$ to denote the $b^\text{th}$ block in the $s^\text{th}$ super-block.
We adopt a block fading channel model, in which the channel is assumed to be constant over a block,
and changes randomly from one block to the next.
Within each block $(s,b)$, the observed
signals at the receiver and at the eavesdropper are:
\begin{align*}
\Ym(s,b) &= G_m(s,b)\Xm(s,b)+\Wm_m(s,b)
\end{align*}
and
\begin{align*}
\Zm(s,b) &= G_e(s,b)\Xm(s,b)+\Wm_e(s,b),
\end{align*}
respectively, where $\Xm(s,b)\in {\mathbb C}^N$ is the transmitted
signal, $\Ym(s,b)\in {\mathbb C}^N$ is the received signal by the
legitimate receiver, and $\Zm(s,b)\in {\mathbb C}^N$ is the received
signal by the eavesdropper. 
%The transmitted signal is
%corrupted by circularly symmetric complex i.i.d. Gaussian Noise vectors,
%with i.i.d. elements of zero mean and unit sample variances, at both the receiver
%$\Wm_m(s,b)$ and the eavesdropper $\Wm_e(s,b)$. 
$\Wm_m(s,b)$ and $\Wm_e(s,b)$ are independent noise vectors, whose elements are drawn from
standard complex normal distribution.
We assume that the channel gains of the main channel $G_m(s,b)$ and the eavesdropper channel
$G_e(s,b)$ are i.i.d. complex random variables.
 The power gains of the
fading channels are denoted by $H_m(s,b)=|G_m(s,b)|^2$ and
$H_e(s,b)=|G_e(s,b)|^2$.  We sometimes use the vector notation $\Hm(\cdot) = [H_m(\cdot)~H_e(\cdot)]$ 
for simplicity, and also use the notation $\Hm^{s,b} = \{\Hm\}_{s'=1,b'=1}^{s,b}$ to denote the set of
channel gains $\Hm(s',b')$ observed until block $(s,b)$. We use similar notation for other signals as well,
and denote the sample realization sequences with lowercase letters.
We assume that the probability density function of instantaneous channel gains, 
denoted as $f(\hv)$, is well defined,
and is known by all parties. 
We define channel state information (CSI) as one's knowledge of the instantaneous
channel gains.
We define {\em full transmitter CSI} as the case in which the transmitter
has full causal knowledge of the main and eavesdropper channel gains.
We define {\em main transmitter CSI} as the case in which
that the transmitter only knows the CSI of the legitimate receiver. 
In both cases, the eavesdropper has complete knowledge of both 
the main and the eavesdropper channels.
%We consider 
%power allocation policies that are functions of the instantaneous
%transmitter CSI, with 
Let $P(s,b)$ denote the power allocated at block $(s,b)$.
We consider a long term power constraint (or average power constraint) such that,
\begin{align}
\limsup_{S,B \to \infty}\frac{1}{SB}\sum_{s=1}^S\sum_{b=1}^B
P(s,b) \leq P_{\text{avg}} \label{avgpowerconstraint}
\end{align}
for some $P_{\text{avg}}>0$.

%Let the random variable $W(s,t)$ denote the message to be transmitted securely at block $(s,t)$,
%with sample realization sequences denoted by $w(s,t)$.
%We consider the problem of constructing $(2^{Nb},N)$ codes to transmit $W(s,t)$, which consist of

Let $\{W(s,b)\}_{s=1,b=1}^{S,B}$ denote the set of messages to be transmitted with a delay constraint. 
$W(s,b)$ becomes available at the transmitter at the beginning of block $(s,b)$, and needs to be securely
communicated to the legitimate receiver at the end of that particular block. We consider the problem of constructing
$(2^{N\Rate},N)$ codes to communicate message packets $W(s,b)\in\{1,\cdots,2^{N\Rate}\}$ of equal size, which consists of:
\begin{enumerate}
\item A stochastic encoder 
that maps $(w(s,b)$, $\xv^{s,b-1})$ to $\xv(s,b)$ based on the available CSI, where $\xv^{s,b-1}$ summarizes the
previously transmitted signals\footnote{An exception is for $b=1$, in which case the previous signals 
 are summarized by $\xv^{s-1,B}$.}, and
\item A decoding function 
that maps $\yv^{s,b}$ to $\hat{w}(s,b)$ at the legitimate receiver.
\end{enumerate}
Note that we consider the current block $\xv(s,b)$ to be a function of the past blocks $\xv^{s,b-1}$ as well. 
This kind of generality allows us to store shared randomness 
 to be exploited 
in the future to increase the achievable secrecy rate.

%Now, we characterize the error and secrecy outage events as a function of parameter $\delta$.
Define the error event with parameter $\delta$ at block $(s,b)$ as %$E(s,b,\delta)$.
% The probability of error is defined as
% \Pr(E(s,b)) = \Pr((\hat{W}(s,b)\neq W(s,b)))
$$ E(s,b,\delta)= \big{\{}\hat{W}(s,b)\neq W(s,b)\big{\}} \cup \big{\{}\frac{1}{N}\|\Xm(s,b)\|^2 > P(s,b)+\delta \big{\}}  $$
which occurs either when the decoder makes an error, or 
when the power expended is greater than $P(s,b)+\delta$.
The equivocation rate at the eavesdropper is defined as
 the entropy rate of the message at block $(s,b)$, 
 conditioned on the received signal by the eavesdropper
 during the transmission period, and available eavesdropper CSI, which is equal to
$\frac{1}{N} H(W(s,b)|\Zm^{SB},\hv^{SB})$.
The secrecy outage event at rate $\Rate$ with parameter $\delta$
at block $(s,b)$ is defined as 
\begin{align}
{\Oc}_{\text{sec}}(s,b,\Rate,\delta) = {\Oc}_{\text{eq}}(s,b,\Rate,\delta) \cup {\Oc}_{\text{ch}}(s,b,\Rate) \label{eq:secrecyoutage}
\end{align}
where the equivocation outage $${\Oc}_{\text{eq}}(s,b,\Rate,\delta) = \left\{ \frac{1}{N} H(W(s,b)|\Zm^{SB},\hv^{SB}) < \Rate - \delta \right\}$$ occurs if the equivocation rate at block $(s,b)$ is less than $\Rate-\delta$,
and channel outage $${\Oc}_{\text{ch}}(s,b,\Rate) =\left\{ \frac{1}{N}I\left(\Xm(s,b);\Ym(s,b)\right)< \Rate\right\} $$ occurs if channel at block $(s,b)$ is unsuitable for reliable transmission at rate $\Rate$.
Defining $\bar{\Oc}_{\text{sec}}(\cdot)$ as the complement of the event ${\Oc}_{\text{sec}}(\cdot)$,
we now characterize the notion of $\epsilon$-achievable secrecy capacity. 
\begin{definition} 
Rate $\Rate$ is achievable securely with at most $\epsilon$ probability of secrecy outage if, 
for any fixed $\delta>0$, there exist $S,B$ and $N$ large enough such that the conditions
\begin{align}
\Pr(E(s,b,\delta)|\bar{\Oc}_{\text{sec}}(s,b,\Rate,\delta)) < \delta \label{errorconstraint}\\
\Pr ({\Oc}_{\text{sec}}(s,b,\Rate,\delta)) < \epsilon+\delta  \label{outage} 
\end{align}
are satisfied for all $(s,b)$, $s \neq 1$.
\end{definition}

We call such $\Rate$ an  $\epsilon$-achievable secrecy rate. Note that the security constraints are not imposed on the first super-block.
\begin{definition}
The $\epsilon$-achievable secrecy capacity
is the supremum of $\epsilon$-achievable secrecy rates $\Rate$.
\end{definition}
\begin{remark}
The notion of secrecy outage was previous defined and used in \cite{liang,barros}. However, those works did not
consider the technique of storing shared randomness for future use, and in that case, secrecy outage 
depends only on the instantaneous channel states. In our case, secrecy outage depends on previous channel states as well.
Note that we do not impose a secrecy outage constraint on the first superblock ($s=1$). 
We refer to the first superblock as an initialization phase
used to generate initial common randomness between the legitimate
nodes. Note that this phase only needs to appear \emph{once} in the communication lifetime of that link. In other words, when a session (which consists of $S$ superblocks) between the associated nodes is over, they would have 
sufficient number of common key bits for the subsequent session,
and would not need to initiate the initialization step again.
%In fact, initialization
%sessions are quite common in practice (e.g., in ad hoc and sensor networks \cite{sensor}).
\end{remark}
%%%%%%%%%%%%%%%%%%%%%%%%%%%%%%%%%%%%%%%%%%%%%%%%%%%%%%%%%%%%%%%%%%%%%
%%%%%%%%%%%%%%%%%%%%%%%%%%%%%%%%%%%%%%%%%%%%%%%%%%%%%%%%%%%%%%%%%%%%%
\section{Capacity Results}\label{section:capacity}
 In this section, we investigate this capacity 
under two different cases; full CSI and main CSI at the transmitter.
Before giving the capacity results, we define the following quantities.
For a given power allocation function $P(s,b)$, let $R_m(s,b)$ and $R_s(s,b)$ be as 
follows,
\begin{align}
  R_m(s,b)  = & \log(1+P(s,b)H_m(s,b)) \label{eqn:Rm(t)}\\
  R_s(s,b)  = & [\log(1+P(s,b)H_m(s,b))-\log(1+P(s,b)H_e(s,b))]^+ \label{eqn:Rs(t)}
\end{align} 
where $[\cdot]^+= \max(\cdot,0)$. Note that, $R_m(\cdot)$ is the supremum of achievable main channel rates,
 without the secrecy constraint.
Also, $R_s(\cdot)$
is the non-negative difference between main channel 
and eavesdropper channel's supremum achievable rates. 
%As expected, if
%the channel conditions favor the eavesdropper, i.e., $h_m(t)<h_e(t)$, then no secret keys can be generated at that 
%block, i.e., $R_s(t)=0$.
We show in capacity proofs that the outage capacity achieving power allocation functions lie in
the space of stationary power allocation functions that are functions of instantaneous transmitter CSI.
Hence for \textbf{full CSI},
we constrain ourselves to the set $\Pc$ of stationary power allocation policies that are
 functions of $\hv(s,b)=[h_m(s,b)~h_e(s,b)]$.
For simplicity, we drop the block index $(s,b)$, and use the notation
$P(\hv)$ for the stationary power allocation policy.
Similarly, with \textbf{main CSI}
 we consider the power allocation policies that are
 functions of $h_m(s,b)$, and
use the notation $P(h_m)$ for the stationary power allocation policy.
In both cases, since the secrecy rate $R_s(s,b)$, and the main channel rate $R_m(s,b)$ are completely determined 
by the stationary power allocation functions $P(\cdot)$ and channel gains $\hv$, we will interchangeably
 use the notations $R_s(s,b) \equiv R_s(\hv,P)$ and $R_m(s,b) \equiv R_m(\hv,P)$.
 
 %%%%%%%%%%%%%%%%%%%%%%%%%%%%%%%%%%%%%%%%%%%%%%%%%%%%%%%%%%%%%%%%%%%%%%
\subsection{Full CSI}\label{s:FullCSI}
%%%%%%%%%%%%%%%%%%%%%%%%%%%%%%%%%%%%
\begin{theorem}\label{FullCSICapacity}
Let the transmitter have full CSI.
Then, for any $\epsilon$, $0\leq \epsilon < 1$, the 
$\epsilon$-achievable secrecy capacity
is identical to
\begin{align}
C_F^{\epsilon} = \max_{P(\hv) \in \cal P'}\frac{\expect[R_s(\Hm,P)]}{1-\epsilon} \label{SecrecyCapacity}
\end{align}
where the set $\cal P'  \subseteq P$ consists of power control policies $P(\hv)$ that satisfies the following conditions.
\begin{align}
 \Pr\left(R_m(\Hm,P) < \frac{\expect[R_s(\Hm,P)]}{1-\epsilon}\right) &\leq \epsilon  \label{eq:OutageConstraint}\\
\expect [P(\Hm)] &\leq P_{\text{avg}} \label{eq:PowerConstraint2}
\end{align}
\end{theorem}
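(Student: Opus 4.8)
The plan is to prove the two matching inequalities $C_F^\epsilon\ge\max_{P\in\mathcal{P}'}\expect[R_s(\Hm,P)]/(1-\epsilon)$ (achievability) and $C_F^\epsilon\le\max_{P\in\mathcal{P}'}\expect[R_s(\Hm,P)]/(1-\epsilon)$ (converse).

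\emph{Achievability.} Fix any feasible $P(\hv)\in\mathcal{P}'$ and set $\Rate=\expect[R_s(\Hm,P)]/(1-\epsilon)$. The first super-block, which carries no secrecy requirement, is used purely for initialization: in each block $(1,b)$ a Wyner wiretap code harvests roughly $N R_s(\hv(1,b),P)$ bits that are almost uniform given the eavesdropper's observation, and these are deposited into a key buffer, so that by the law of large numbers the buffer reaches a level of order $NB$ with high probability. In super-blocks $s=2,\dots,S$, every block with $R_m(\hv(s,b),P)\ge\Rate$ is a candidate for data transmission, but transmissions are throttled to a $(1-\epsilon)$ fraction of all blocks (which is possible, since by \eqref{eq:OutageConstraint} the candidate blocks already form at least a $(1-\epsilon)$ fraction) and the remaining blocks are declared outages. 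In a block chosen for transmission I would use a single \emph{combined} codeword: its fine layer is reliably decoded by the legitimate receiver (since the main rate exceeds $\Rate$) and carries the Vernam one-time pad $W(s,b)\oplus K$ with $K$ a block of $N\Rate$ fresh key bits taken from the buffer, while the wiretap binning of the same codeword hides about $NR_s(\hv(s,b),P)$ bits from the eavesdropper, which---after privacy amplification---are returned to the buffer; in every non-transmitting block with $R_s>0$ a pure wiretap codeword likewise contributes about $NR_s(\hv(s,b),P)$ bits. The per-block key-generation rate is therefore $\expect[R_s(\Hm,P)]$ while the per-block consumption is only $(1-\epsilon)\Rate=\expect[R_s(\Hm,P)]$; transmitting at a slightly smaller rate makes the buffer deficit a random walk with strictly negative drift, and together with the large initialization a maximal-inequality argument shows it stays below the initial level simultaneously over all $s\ne1$ with probability at least $1-\delta$. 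Hence, for every block in super-blocks $s\ne 1$, the secrecy outage probability is at most $\epsilon+\delta$, the conditional error is at most $\delta$, and the equivocation of each transmitted $W(s,b)$ given the eavesdropper's entire observation $\Zm^{SB}$ is at least $\Rate-\delta$ (this last point also uses that one-time padding with an almost-uniform key masks the correlation between a stored key bit and the future block in which it is spent). Letting $\delta\to0$ shows $\Rate$ is $\epsilon$-achievable for every $P\in\mathcal{P}'$, and a compactness argument gives attainment of the maximum.

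\emph{Converse.} Start from a code achieving a rate $\Rate$ with secrecy outage at most $\epsilon+\delta$ and conditional error at most $\delta$, and let $\mu$ denote the empirical joint law of $(\Hm(s,b),\tfrac1N\|\Xm(s,b)\|^2)$ over the non-initial super-blocks; its channel-gain marginal is $f(\hv)$, and $\expect_\mu[\text{power}]\le P_{\text{avg}}+\delta$ by \eqref{avgpowerconstraint}. Because ${\Oc}_{\text{ch}}\subseteq{\Oc}_{\text{sec}}$, with probability at least $1-\epsilon-\delta$ the (conditional average) per-block power exceeds the channel-inversion level $(2^{\Rate}-1)/H_m$. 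The core of the argument is a multi-block secrecy inequality: viewing the roughly $(1-\epsilon)SB$ messages that are decoded in non-outage blocks as a single secret message carried over the parallel, time-varying wiretap channel formed by the blocks, Fano's inequality, the secrecy constraints, and the converse for parallel wiretap channels give $(1-\epsilon)SB\,N\Rate\le\sum_{s,b}\big[I(\Xm(s,b);\Ym(s,b))-I(\Xm(s,b);\Zm(s,b))\big]+o(SBN)$, and Gaussian maximization together with the concavity of $R_s(\hv,\cdot)$ bounds the right-hand side by $(1+o(1))\,SB\,N\,\expect_\mu[R_s]$; hence $\Rate(1-\epsilon)\le\expect_\mu[R_s]+o(1)$.

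It remains to replace the (in general randomized) policy $\mu$ by a \emph{deterministic} stationary $P(\hv)\in\mathcal{P}'$ of no smaller value, and I expect this to be the main obstacle. Averaging, for each state $\hv$, the code's conditional power distribution to its mean preserves the power budget and---by concavity of $R_s(\hv,\cdot)$---the objective, but it need not preserve the outage constraint \eqref{eq:OutageConstraint}: a state in which the code is in channel outage only part of the time can become permanently in outage. The remedy is to reallocate the power the code squanders during its outage events; using concavity and $R_s(\hv,0)=0$ (so that $R_s(\hv,\lambda p)\ge\lambda R_s(\hv,p)$ for $\lambda\in[0,1]$), one shows the code is dominated by a policy that inverts the channel up to a common rate level on a set of states of probability at least $1-\epsilon$ and puts zero power elsewhere---exactly the channel-inversion-plus-ergodic structure established in Section~\ref{section:power}---which lies in $\mathcal{P}'$, meets the power constraint after an $o(1)$ down-scaling, and satisfies $\expect[R_s(\Hm,P)]/(1-\epsilon)\ge\Rate-o(1)$. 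Letting $\delta\to0$ completes the proof.
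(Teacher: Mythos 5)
Your achievability argument is essentially the paper's own scheme: key buffers filled at rate roughly $R_s(\hv,P)$ per block by wiretap coding plus privacy amplification, one-time padding of the delay-constrained message at rate $\Rate$ in a $(1-\epsilon)$ fraction of blocks (your throttling plays the role of the paper's artificial outages), an unconstrained first super-block for initialization, and a negative-drift argument for the key queue. The paper organizes the queue argument per super-block (keys generated in super-block $s-1$ are spent in super-block $s$, with a law-of-large-numbers bound on the key-outage probability) rather than via your initialization-plus-maximal-inequality argument, but this difference is cosmetic.

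The converse, however, has a genuine gap at the step where you view the roughly $(1-\epsilon)SB$ non-outage messages ``as a single secret message'' and invoke a parallel wiretap converse to conclude $(1-\epsilon)SBN\Rate \le \sum_{s,b}\left[I(\Xm(s,b);\Ym(s,b))-I(\Xm(s,b);\Zm(s,b))\right]+o(SBN)$. Such a converse needs a lower bound on the \emph{joint} equivocation $H(\{W(s,b)\}\mid \Zm^{SB},\hv^{SB})$, but the secrecy-outage definition \eqref{eq:secrecyoutage}--\eqref{outage} constrains only the \emph{individual} equivocations $H(W(s,b)\mid \Zm^{SB},\hv^{SB})$. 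Subadditivity goes the wrong way: $\sum_{s,b}H(W(s,b)\mid \Zm^{SB},\hv^{SB})\ge H(\{W(s,b)\}\mid \Zm^{SB},\hv^{SB})$, so per-block constraints give no control of the joint quantity. Concretely, independent messages each one-time padded with the \emph{same} key bits retain full individual equivocation while their joint equivocation is only the key length, so the reduction to a single joint message cannot be justified from the stated hypotheses; the object your argument must bound is the sum of per-block equivocations, not the joint one. The paper's converse never forms a joint message: it upper-bounds the time-averaged sum $\frac{1}{SBN}\sum_{s,b}H(W(s,b)\mid \Zm^{SB},\hv^{SB})$ by the time average of $R_s$ using the ergodic secrecy converse of \cite{Secrecy:08} (its \eqref{appa:conv1}), lower-bounds the same sum by $(\Rate-\delta)(1-\epsilon-\delta)$ via the outage constraint (its \eqref{appa:conv2}), and then argues the optimizing power allocation may be taken stationary in $\hv$, which forces \eqref{eq:OutageConstraint}--\eqref{eq:PowerConstraint2}. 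Relatedly, your final reduction from the empirical law $\mu$ to a deterministic stationary policy in $\mathcal{P}'$ is only sketched (``one shows the code is dominated by\ldots''), but the substantive defect to repair is the joint-versus-individual equivocation step.
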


A detailed proof of achievability and converse part is provided in Appendix~\ref{a:FullCSI}.  
Here, we briefly justify the result. 
For a given $P(\hv)$, 
$R_s(\hv,P)$ the supremum of the secret key generation rates within a block that
experiences channel gains $\hv$ \cite{GaussianWiretap}.
This implies that the expected achievable secrecy rate 
\cite{Secrecy:08} is $\expect[R_s(\Hm,P)]$ without the outage constraint. 
With the outage constraint, the fluctuations of $R_s(\Hm,P)$ due to fading are unacceptable,
since $R_s(\Hm,P)$ can go below the desired rate when the channel conditions are 
unfavorable (e.g., when $H_m<H_e$, $R_s(\Hm,P)=0$).
 Hence, we utilize secret key buffers to smoothen out these
fluctuations to provide secrecy rate of $\expect[R_s(\Hm,P)]$ at each block.
The generated secrecy is stored in secret key buffers of both the transmitter and receiver, and is utilized
 to secure data of same size using Vernam's one-time pad technique.
With the allowable amount of secrecy outages, this rate goes up to $\expect[R_s(\Hm,P)]/(1-\epsilon)$.
The channel outage constraint \eqref{eq:OutageConstraint} on the other hand
is a necessary condition to satisfy the secrecy outage constraint in \eqref{outage} due to \eqref{eq:secrecyoutage}.
\begin{example}\label{e:FullCSI}
Consider a four state system, where $H_m$ and $H_e$ takes values from the set $\{1,10\}$ and the 
joint probabilities are as given in Table~\ref{ex1:Pr}.
Let the average power constraint be $P_{\text{avg}} =0.5$, and there is no
 power control, i.e., $P(\hv) = P_{\text{avg}}$ $\forall \hv$. The achievable
instantaneous secrecy rate at each state is given in Table~\ref{ex1:Rs}. 
According to the pessimistic result in [6,8],
any non-zero rate cannot be achieved with a secrecy outage probability $\epsilon<0.6$ in this case. 
However, according to Theorem 1, rate $R = \frac{0.8}{1-\epsilon}$ can be achieved with $\epsilon$
secrecy outage probability\footnote{Although Theorem 1 is stated for the case where random vector $\Hm$
is continuous, the result
similarly applies to discrete $\Hm$ as well.}, since $\expect[R_s(\Hm,P_{\text{avg}})]=0.8$.
A sample path is provided for both schemes in Figure~\ref{fig:sample_path}, and it is shown how
our scheme avoids secrecy outage in the second block.
\end{example}
\begin{table}[ht]
\begin{minipage}[b]{0.5\linewidth}\centering
    \caption{$\Pr(\hv)$}
    \label{ex1:Pr}
    \begin{tabular}{c | c | c}
    \toprule
    $\downarrow h_m$ \textbackslash ~ $h_e \rightarrow$ & 1     & 10 \\
    \midrule
    1     & 0.1   & 0.1 \\
    10    & 0.4   & 0.4 \\
    \bottomrule
    \end{tabular}%
\end{minipage}
\hspace{0.5cm}
\begin{minipage}[b]{0.5\linewidth}\centering
    \caption{$R_s(\hv,P_{\text{avg}})$}
    \label{ex1:Rs}
    \begin{tabular}{c | c | c}
    \toprule
    $\downarrow h_m$ \textbackslash ~ $h_e \rightarrow$ & 1     & 10 \\
    \midrule
    1     & 0     & 0 \\
    10    & 2     & 0 \\
    \bottomrule
    \end{tabular}%
\end{minipage}
\end{table}

 \begin{figure}[!h]
\begin{center}
\psfrag{t=1}[][][0.8]{\hspace{3cm} $\text{block 1},~\hv = [10~1]$}
\psfrag{t=2}[][][0.8]{\hspace{3cm} $\text{block 2},~\hv = [10~ 10]$}
\psfrag{Rm=7Re=8Rs=0}[][][0.8]{\hspace{1cm} $R_m = 2.58,~R_s = 0$}
\psfrag{Rm=5Re=1Rs=4}[][][0.8]{\hspace{1cm} $R_m = 2.58,~R_s=2$}
\psfrag{Rs=0}[][][0.6]{\hspace{0.5cm}$R_s = 0$}
\psfrag{Rs=2}[][][0.6]{\hspace{0.5cm}$R_s = 2$}
%\psfrag{Rm=7}[][][0.6]{\hspace{0.5cm}$R_m = 7$}
\includegraphics[height=1.8in]{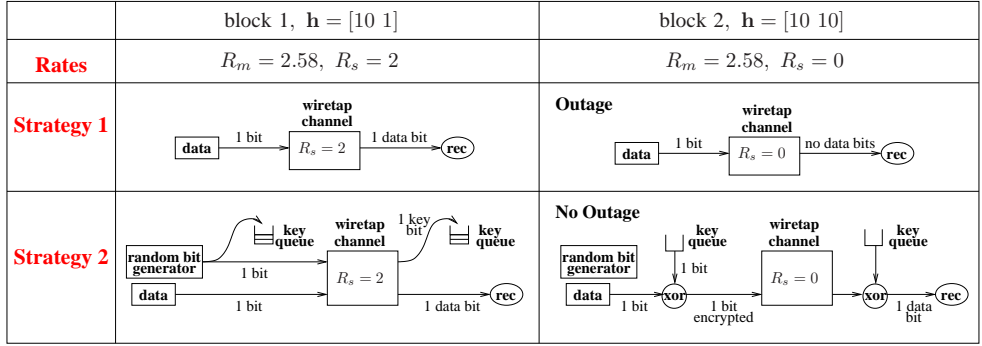}
%\vspace{-0.3in}
\caption{A sample path. With strategy 2, secrecy outage can be avoided
for block $t=2$ via the use of key bits.}
%\vspace{-0.3in}
\label{fig:sample_path}
\end{center}
\end{figure}
%In the second part of the proof, we consider $b=\frac{\expect[R_s]}{1-\epsilon}$.
%We show that for any power control scheme $P(\Hm) \in \Pc'$, channel
%outage probability is equal to $\epsilon$, whereas key outage probability decays to $0$ as $t \to \infty$.
%Finally, we prove the converse statement, which completes the proof of the theorem.
%%%%%%%%%%%%%%%%%%%%%%%%%%%%%%%%%%%%%%%%%%%%%%%%%%%%%%%%%%%%%%%%%%%%%%%%
\subsection{Main CSI}\label{s:MainCSI}

%In this section, we assume main CSI, i.e., $\Hc_T^t=\{h_m(i)\}_{i=1}^t$.
%We find the $\epsilon$ achievable delay limited secrecy capacity. 
%Before giving the main result, we define the following quantities.
%For a fixed power allocation function $P(t)$, we define $R_m(t)$ and $R_s(t)$ as follows.
%\begin{align*}
%  R_m(t)  = & \log(1+P(t)h_m(t)) \\
%  R_s(t)  = & [\log(1+P(t)h_m(t))-\log(1+P(t)h_e(t))]^+ 
%\end{align*} 
%Note that, $R_m(t)$ is the supremum of achievable rates,
% without the equivocation constraint at the eavesdropper.
%Also, $R_s(t)$
%is the supremum of achievable secret key generation rates, by utilizing a wire-tap channel \cite{GaussianWiretap}.
%Let the quantities $R_m(t)$ and $R_s(t)$ be as defined in the previous section.
%Since the transmitter knows the instantaneous set of channel gains $h_m(t)$,
%we consider the power allocation policies that are
% functions of $h_m(t)$; the instantaneous channel gains of the receiver.
%We use the notation $P(h_m)$ for the stationary power allocation policy.
% Since the secrecy rate, $R_s(t)$ and the main channel rate $R_m(t)$ are completely determined 
%by power allocation functions $P(h_m)$ and channel gains $\hv$, we will interchangeably
% use the notations $R_s(t) \equiv R_s(\Hm,P)$ and $R_m(t) \equiv R_m(\Hm,P)$.

%%%%%%%%%%%%%%%%%%%%%%%%%%%%%%%%%%%%
\begin{theorem}\label{MainCSICapacity}
Let the transmitter have main CSI.
Then, for any $\epsilon$, $0\leq \epsilon < 1$, the $\epsilon$-achievable secrecy capacity 
is identical to
\begin{align}
C_M^{\epsilon} = \max_{P(h_m) \in \cal P''}\frac{\expect[R_s(\Hm,P)]}{1-\epsilon} \label{SecrecyCapacity2}
\end{align}
where the set $\cal P''\subseteq P$ consists of power control policies $P(h_m)$ that satisfies the following conditions.
\begin{align}
 \Pr\left(R_m(\Hm,P) < \frac{\expect[R_s(\Hm,P)]}{1-\epsilon}\right) &\leq \epsilon  \label{eq:OutageConstraint2}\\
\expect [P(H_m)] &\leq P_{\text{avg}} \label{eq:PowerConstraint3}
\end{align}
\end{theorem}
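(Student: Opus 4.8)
The plan is to mirror the proof of Theorem~\ref{FullCSICapacity}, isolating the two places where the more restrictive CSI enters: the encoder can no longer tune its wiretap codebook to the realized eavesdropper gain $h_e$, so it must fall back on a source-type key-distillation argument, and the power policy must be measurable with respect to $h_m$ alone. Apart from these modifications the equivocation bookkeeping, the key-buffer concentration, and the channel-outage accounting are exactly as in Appendix~\ref{a:FullCSI}.

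\textbf{Converse.} First I would argue that one may restrict attention to stationary policies of the form $P(h_m)$: given any code meeting \eqref{errorconstraint}--\eqref{outage}, the empirical joint distribution of $(\Hm(s,b),P(s,b))$ over blocks converges (by the i.i.d.\ block-fading assumption together with \eqref{avgpowerconstraint}) to the distribution induced by some $P(h_m)\in\Pc$, and both the secrecy objective and the constraints depend only on this distribution through linear/monotone functionals. Next, bound the secrecy throughput by a conservation law: over any window of blocks the total equivocation that can be delivered to the receiver, including the contribution of key bits generated in earlier blocks, cannot exceed $\sum R_s(\Hm(s',b'),P)$, since $R_s(\hv,P)$ in \eqref{eqn:Rs(t)} is exactly the secret-key capacity of the Gaussian wiretap channel with those gains even when the eavesdropper knows everything~\cite{GaussianWiretap,Secrecy:08}. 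Together with the fact that at most an $\epsilon$-fraction of blocks may be in outage, this yields the bound $\expect[R_s(\Hm,P)]/(1-\epsilon)$. Finally, reliable decoding at rate $R$ on a $(1-\epsilon)$-fraction of blocks forces $\Pr(R_m(\Hm,P)<R)\le\epsilon$, which at $R=\expect[R_s(\Hm,P)]/(1-\epsilon)$ is precisely \eqref{eq:OutageConstraint2}; \eqref{eq:PowerConstraint3} is immediate from \eqref{avgpowerconstraint} and stationarity.

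\textbf{Achievability.} Fix $P(h_m)\in\Pc$ satisfying \eqref{eq:OutageConstraint2}--\eqref{eq:PowerConstraint3}, and use the first super-block purely to seed common randomness (no secrecy constraint there, by the Remark). In a generic block with gains $\hv$ the transmitter, knowing $h_m$, sends at rate $R_m(\hv,P)$ so the legitimate receiver decodes with vanishing error whenever $R_m(\hv,P)\ge R$. Since $h_e$ is unknown to the transmitter, it cannot pre-cancel the eavesdropper's observation; instead the legitimate nodes apply privacy amplification (universal hashing / random binning) to the decoded common message, the eavesdropper's residual uncertainty per channel use being $R_s(\hv,P)$ by the Gaussian wiretap secrecy result~\cite{GaussianWiretap}. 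The number of distilled bits is a function of $h_e$, hence random and not known a priori to the transmitter; as in~\cite{DelaySecrecy:09,Secrecy:08}, the realized amount is certified after the fact and deposited in the key buffer, so by the law of large numbers the buffer fills at average rate $\expect[R_s(\Hm,P)]$. These bits encrypt the delay-sensitive packets $W(s,b)$ via a one-time pad, giving perfect equivocation whenever the buffer is non-empty; a standard queueing/concentration argument shows that for $s\neq1$ the buffer underflows with probability $o(1)$, so the only residual outages are the channel outages $\{R_m(\Hm,P)<R\}$, of probability $\le\epsilon$ by \eqref{eq:OutageConstraint2}. Taking $R=\expect[R_s(\Hm,P)]/(1-\epsilon)-\delta'$ and letting $N,B,S\to\infty$ and then $\delta'\to0$ establishes \eqref{SecrecyCapacity2}, and optimizing over $\Pc$ gives equality.

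\textbf{Main obstacle.} The delicate part is the achievability under main CSI: making the privacy-amplification step rigorous when the transmitter does not know $h_e$ and hence does not know how many secret bits a block will yield. One must exhibit a variable-length key-distillation scheme in which the realized key length is learned by both legitimate parties without leaking useful information to the eavesdropper, attains the per-block conditional entropy $R_s(\hv,P)$ in expectation, and, when coupled with the constant-rate one-time-pad drain, still leaves the buffer non-empty with high probability for every $s\neq1$. The converse, by contrast, is largely routine once the reduction to stationary $h_m$-measurable policies is in place, since from that point it proceeds verbatim as in the full-CSI case with $P(\hv)$ replaced by $P(h_m)$.
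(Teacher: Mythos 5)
Your converse sketch is essentially the paper's (the paper simply invokes the full-CSI converse with $P(\hv)$ replaced by $P(h_m)$, using that $R_m$ and $R_s$ are deterministic functions of the power and the instantaneous gains so the optimizing policy can be taken stationary), so that part is fine. The genuine gap is in the achievability, and it is exactly the step you yourself flag as the ``main obstacle'' and then leave unresolved: a per-block, variable-length key-distillation scheme in which the realized key length depends on $h_e(s,b)$, is ``certified after the fact,'' and is deposited in the buffer. Without eavesdropper CSI at the legitimate nodes there is no way to size the per-block hash output at $N(R_s(\hv,P)-\delta)$ or to certify how many of the distilled bits are actually secret; any public-discussion mechanism for agreeing on the realized length would itself leak and would have to be accounted for in the equivocation bound. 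So, as written, the buffer-arrival process at average rate $\expect[R_s(\Hm,P)]$ is not established, and the one-time-pad equivocation analysis that rides on it does not go through.

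The paper sidesteps this entirely, and that is the missing idea: instead of per-block extraction, privacy amplification is performed \emph{once per superblock} with a \emph{fixed, statistics-determined} output length. The legitimate nodes hash the concatenation $W_X(s)=\{W_X(s,b)\}_{b=1}^B$ of all messages exchanged in superblock $s$ and extract exactly $NB(\expect[R_s(\Hm,P)]-\delta)$ key bits; the point is that over $B$ blocks the eavesdropper's accumulated information concentrates around $NB\,\expect[R_m(\Hm,P)-R_s(\Hm,P)]$ by the law of large numbers, so only the \emph{expectation} of $R_s$ (known from the channel statistics) is needed to choose the hash length, and conditions \eqref{appb:privacyamp1}--\eqref{appb:privacyamp2} replace the per-block guarantees of Lemma~\ref{l:privacyamp}. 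With a deterministic key injection of $NB(\expect[R_s(\Hm,P)]-\delta)$ bits per superblock, the key-outage/buffer analysis and the equivocation bookkeeping then carry over verbatim from Appendix~\ref{a:FullCSI}, with keys generated in superblock $s-1$ used in superblock $s$. Your proposal would need either this superblock-level fixed-length extraction or a rigorous variable-length protocol with bounded discussion leakage; the former is the paper's route and is what your argument currently lacks.
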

Although the problems \eqref{SecrecyCapacity}-\eqref{eq:PowerConstraint2} and 
\eqref{SecrecyCapacity2}-\eqref{eq:PowerConstraint3}
 are of the same form, due to the
absence of eavesdropper CSI, the maximization in this case is over power allocation 
functions $\Pc''$ that depend on
the main channel state only. Hence, $C_M^{\epsilon}\leq C_F^{\epsilon}$.
A detailed proof of achievability and converse is provided in Appendix~\ref{a:MainCSI}. 
As in the full CSI case, our achievable scheme uses similar key buffers and  Vernam's one time pad technique to
secure the message. The main difference 
is the generation of secret key bits. 
Due to the lack of knowledge of $H_e(s,b)$
at the transmitter, secret key bits cannot be generated within a block.
Instead, using the statistical knowledge of $H_e(s,b)$, 
we generate keys over a super-block.
Roughly, over a superblock the receiver can reliably obtain
$NB\expect[R_m(\Hm,P)]$ bits
of information, while the eavesdropper can obtain $NB\expect[R_m(\Hm,P) - R_s(\Hm,P)]$ bits of information. From
privacy amplification arguments \cite{Bennett:95}, $NB\expect[R_s(\Hm,P)]$ bits of secret key
 can be extracted by using a universal hash function.

Now, we show that power allocation policy has minimal impact on
the performance in the high power regime.
\begin{theorem}\label{t:highpowercapacity}
For any $\epsilon>0$, the $\epsilon$-achievable secrecy capacities with full CSI and main CSI converge to
the same value
\begin{align}
\lim_{P_{\text{avg}} \to \infty}C_F^{\epsilon} = \lim_{P_{\text{avg}} \to \infty} C_M^{\epsilon} = \frac{\expect_{H_m>He}\log\left(H_m/H_e\right)}{(1-\epsilon)}
\end{align}
\end{theorem}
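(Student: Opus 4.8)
The plan is to sandwich both $C_F^{\epsilon}$ and $C_M^{\epsilon}$ between a $P_{\text{avg}}$-independent upper bound and an asymptotically matching lower bound. Write $V\eqdef\expect\big[\log(H_m/H_e)\,\ind\{H_m>H_e\}\big]/(1-\epsilon)$ for the claimed limit (I read $\expect_{H_m>H_e}\log(H_m/H_e)$ as this partial expectation over $\{H_m>H_e\}$), and assume throughout the mild moment condition that this quantity is finite. Since the maximization defining $C_M^{\epsilon}$ in Theorem~\ref{MainCSICapacity} ranges over the smaller class $\mathcal{P}''$, we already know $C_M^{\epsilon}\le C_F^{\epsilon}$ for every $P_{\text{avg}}$. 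Hence it suffices to prove (i) $C_F^{\epsilon}\le V$ for every $P_{\text{avg}}>0$, and (ii) $\liminf_{P_{\text{avg}}\to\infty}C_M^{\epsilon}\ge V$; then $V\le\liminf_{P_{\text{avg}}\to\infty}C_M^{\epsilon}\le\limsup_{P_{\text{avg}}\to\infty}C_F^{\epsilon}\le V$, so both limits exist and equal $V$.

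For (i), the one elementary fact I need is the pointwise bound, valid for every $P\ge0$ and every state $\hv$,
$$R_s(\hv,P)=\left[\log\frac{1+Ph_m}{1+Ph_e}\right]^{+}\le \log\frac{h_m}{h_e}\,\ind\{h_m>h_e\},$$
which holds because $P\mapsto(1+Ph_m)/(1+Ph_e)$ is nondecreasing with limit $h_m/h_e$ when $h_m\ge h_e$, and is at most $1$ when $h_m\le h_e$. Taking expectations yields $\expect[R_s(\Hm,P)]\le(1-\epsilon)V$ for \emph{every} power policy, feasible or not, so the objective in \eqref{SecrecyCapacity} never exceeds $V$; hence $C_F^{\epsilon}\le V$.

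For (ii), I would simply test the constant policy $P(h_m)\equiv P_{\text{avg}}$ against the feasible set $\mathcal{P}''$ of Theorem~\ref{MainCSICapacity}. The power constraint \eqref{eq:PowerConstraint3} holds with equality. By the same monotonicity, $R_s(\hv,P_{\text{avg}})\uparrow\log(h_m/h_e)\ind\{h_m>h_e\}$ as $P_{\text{avg}}\to\infty$, so monotone convergence gives $\expect[R_s(\Hm,P_{\text{avg}})]\uparrow(1-\epsilon)V$; in particular the objective increases to $V$ while staying $\le V$. It remains to check the channel-outage constraint \eqref{eq:OutageConstraint2}: setting $r\eqdef\expect[R_s(\Hm,P_{\text{avg}})]/(1-\epsilon)\le V$, the event $\log(1+P_{\text{avg}}H_m)<r$ forces $H_m<(e^{V}-1)/P_{\text{avg}}$, so
$$\Pr\big(R_m(\Hm,P_{\text{avg}})<r\big)\le\Pr\big(H_m<(e^{V}-1)/P_{\text{avg}}\big),$$
and the right-hand side tends to $0$ as $P_{\text{avg}}\to\infty$ because $H_m$ has a density (in particular $\Pr(H_m=0)=0$). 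Since $\epsilon>0$ is a fixed positive number, this probability drops below $\epsilon$ for all $P_{\text{avg}}$ large enough; for such $P_{\text{avg}}$ the constant policy is feasible, whence $C_M^{\epsilon}\ge\expect[R_s(\Hm,P_{\text{avg}})]/(1-\epsilon)\to V$, giving (ii).

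The only genuinely delicate step — and the reason the hypothesis $\epsilon>0$ cannot be dropped — is the verification of the channel-outage constraint in (ii): a constant-power scheme can only make $\Pr(R_m<r)$ small, never exactly zero, so for $\epsilon=0$ this argument breaks and one is forced toward channel-inversion-type power control to cope with the arbitrarily deep fades of $H_m$. Everything else is routine bookkeeping, with the finiteness assumption on $\expect[\log(H_m/H_e)\ind\{H_m>H_e\}]$ supplying exactly the integrable dominating function needed for the convergence step (without it the claimed limit is simply $+\infty$ and the statement is degenerate).
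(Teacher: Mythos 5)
Your proof is correct and follows essentially the same route as the paper: test the no-power-control policy $P\equiv P_{\text{avg}}$, note that the channel-outage constraint becomes inactive as $P_{\text{avg}}\to\infty$ because $\Pr(H_m=0)=0$, and use the pointwise bound $R_s(\hv,P)\le\log(h_m/h_e)\ind\{h_m>h_e\}$; you merely make explicit the sandwich (the uniform upper bound $C_F^{\epsilon}\le V$ and the monotone-convergence step) that the paper leaves implicit, which is a welcome tightening rather than a different argument. The only cosmetic slip is writing $e^{V}$ where the paper's rate convention ($P_{\text{inv}}=(2^{R}-1)/h_m$) would give $2^{V}$, which does not affect the argument.
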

\begin{proof}
For $\hv \equiv [h_m ~ h_e]$ such that $h_m > h_e$, we can see from \eqref{eqn:Rs(t)} that
$\lim_{P(\hv)\to \infty} R_s(\hv,P) = \log\left(\frac{h_m}{h_e}\right)$, 
and for $h_m \leq h_e$, $R_s(\hv,P) = 0$. % hence secrecy rate is independent of power allocation in the high power regime.
Furthermore, for $h_m>0$, we can see from \eqref{eqn:Rm(t)} that
$\lim_{P(\hv)\to \infty} R_m(\hv,P) = \infty $.  Let $P(\hv)= P_{\text{avg}}$ (no power
control), which does not require any CSI. Then, we get
$$\lim_{P_{\text{avg}} \to \infty} \expect[R_s(\Hm,P)] =    \expect_{H_m>He}\log\left(H_m/H_e\right) < \infty $$
Combining the last 2 equations, we get
$$  \lim_{P_{\text{avg}} \to \infty}\Pr\left(R_m(\Hm,P)< \frac{\expect[R_s(\Hm,P)]}{1-\epsilon}\right) = \Pr(H_m = 0)$$
and  $ \Pr(H_m = 0) =0 $, since probability density function of $\Hm$ is well defined. Hence, channel outage constraints
\eqref{eq:OutageConstraint} and \eqref{eq:OutageConstraint2} are not active in the high power regime.
Therefore, $P(\hv) \in \Pc'$,  and $P(\hv) \in \Pc''$.
From \eqref{SecrecyCapacity}-\eqref{eq:PowerConstraint2} and \eqref{SecrecyCapacity2}-\eqref{eq:PowerConstraint3},
we conclude that
$C_F=C_M=\expect_{H_m>He}\log\left(H_m/H_e\right)/(1-\epsilon)$.
%Note that, this result is correct even if the probability density function of $\Hm$ is
%not well defined, this result is correct as long as $\Pr(H_m = 0 ) \leq \epsilon$. Otherwise, the $\epsilon$-achievable secrecy capacity is $0$,
%independent of the power budget, and CSI.
\end{proof}
Our simulation results also illustrate that the power allocation policy has minimal impact on the importance in
the high power regime. On the other hand, when the average power is limited, the optimality of the power allocation
function is of critical importance, which is the focus of the following section.
%%%%%%%%%%%%%%%%%%%%%%%%%%%%%%%%%%%%%%%%%%%%%%%%%%%%%%%%%%%%%%%%%%%%%%%%
%%%%%%%%%%%%%%%%%%%%%%%%%%%%%%%%%%%%%%%%%%%%%%%%%%%%%%%%%%%%%%%%%%%%%%%%
\section{Optimal Power Allocation Strategy}\label{section:power}

\subsection{Full CSI}\label{s:FullCSIPower}
The optimal power control strategy, $P^{\ast}(\hv)$ is the stationary strategy that 
solves the optimization problem \eqref{SecrecyCapacity}-\eqref{eq:PowerConstraint2}.
In this section, we will show that $P^{\ast}(\hv)$ is a time-sharing between the channel inversion power policy, and the secure waterfilling
policy.
We first introduce the channel inversion power policy, $P_{\text{inv}}(\hv,\Rate)$,
which is the \emph{minimum} required power to maintain main channel rate of $\Rate$.
For $\hv = [h_m ~ h_e]$, 
\begin{align}
P_{\text{inv}}(\hv,\Rate)&=\frac{2^{\Rate}-1}{h_m} \label{ChannelInversion}
\end{align}
Next we introduce $P_{\text{wf}}(\hv, \lambda)$,
\begin{align}
P_{\text{wf}}(\hv, \lambda)&=\frac{1}{2}{\Big
[}\sqrt{\left(\frac{1}{h_e}-\frac{1}{h_m}\right)^2+\frac{4}{\lambda}\left(\frac{1}{h_e}-\frac{1}{h_m}\right)}
 -\left(\frac{1}{h_e}+\frac{1}{h_m}\right){\Big ]}^+, \label{Waterfilling} 
\end{align}
We call it the 'secure waterfilling' power policy because it maximizes the ergodic secrecy rate without any outage constraint,
and resembles the 'waterfilling' power control policy.
Here, the parameter $\lambda$ determines the power expended on average. 
Now, let us define a time-sharing region
\begin{align}
\Gc(\lambda,k)=\left\{ \hv: \left[R_s(\hv,P_{\text{inv}})-R_s(\hv,P_{\text{wf}})\right]^+ 
-\lambda \left[P_{\text{inv}}(\hv,b)-P_{\text{wf}}(\hv,\lambda)\right]^+ \geq k \right\} \label{setG}
\end{align}
which is a function of parameters $\lambda$ and $k$.
\begin{theorem}\label{t:FullCSIPower}
%Let $\Rate_{\max}$ be the solution of the equation
%\begin{eqnarray}\label{feasiblePwr}
%P_{\text{avg}} = \int_{h_m\geq c}P_{\text{inv}}(\hv,b_{\max})f(\hv)d\hv
%\end{eqnarray}
%Then, 
$P^{\ast}(\hv)$ is the unique solution to 
%where the parameters $\Rate^{\ast}>0$, $k^{\ast} \in (-\infty,0]$ and $\lambda^{\ast} \in (0,\infty)$ uniquely solve
\begin{align}                          
                    P^{\ast}(\hv)=&P_{\text{wf}}(\hv,\lambda^{\ast})+ \ind\left(\hv \in \Gc(\lambda^{\ast},k^{\ast})\right) 
                          \left(P_{\text{inv}}(\hv,C_F^{\epsilon})-P_{\text{wf}}(\hv,\lambda^{\ast})\right)^+ \label{optimalPower}  \\
                     \text{subject to: }& k^{\ast}\leq 0, \lambda^{\ast} >0 \nonumber\\                         
                    & C_F^{\epsilon}=\expect[R_s(\Hm,P^{\ast})]/(1-\epsilon)  \label{optimalPower:2}\\
                    & \Pr(\Hm \in \Gc(\lambda^{\ast},k^{\ast})) = 1-\epsilon  \label{optimalPower:3}\\
					& \expect[P^{\ast}(\Hm)] = P_{\text{avg}} \label{optimalPower:4}
\end{align}
where $\expect[R_s(\Hm,P^{\ast})]$ is the expected secrecy rate under the power allocation policy $P^{\ast}(\hv)$. 
%where $\Rate^{\ast}=C_F^{\epsilon}$; the $\epsilon$-achievable secrecy capacity.
\end{theorem}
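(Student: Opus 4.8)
The plan is to characterize $P^\ast(\hv)$ as the solution of the concave optimization problem \eqref{SecrecyCapacity}--\eqref{eq:PowerConstraint2} via a Lagrangian/KKT analysis, and then to recognize the two regimes that emerge — channel inversion and secure waterfilling — as the two ``pieces'' of the optimal policy, glued along the set $\Gc(\lambda^\ast,k^\ast)$. First I would fix the operating point: write $\Rate \eqdef C_F^\epsilon = \expect[R_s(\Hm,P)]/(1-\epsilon)$ and observe that, by Theorem~\ref{FullCSICapacity}, the optimal policy must satisfy the power constraint \eqref{eq:PowerConstraint2} and the channel-outage constraint \eqref{eq:OutageConstraint} with $\Rate$ in the role of the target rate, and it maximizes $\expect[R_s(\Hm,P)]$. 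The crucial structural observation is that, for a fixed $\Rate$, the channel-outage constraint \eqref{eq:OutageConstraint} only cares about the \emph{set} of channel states $\hv$ on which $R_m(\hv,P)\ge \Rate$ (equivalently $P(\hv)\ge P_{\text{inv}}(\hv,\Rate)$); it does not care how much power is spent there beyond the threshold, nor how power is allocated on the complementary (outage) set. So the problem decomposes: on states where we choose to ``serve'' (meet the rate target), we must spend at least $P_{\text{inv}}(\hv,\Rate)$; everywhere else we are free; and subject to these choices we want to maximize $\expect[R_s]$ under the average power budget.

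Next I would introduce a Lagrange multiplier $\lambda^\ast\ge 0$ for the average power constraint \eqref{eq:PowerConstraint2} and analyze the per-state tradeoff. For a state $\hv$ on which we do \emph{not} insist on meeting the rate target, the pointwise-optimal power is the unconstrained maximizer of $R_s(\hv,P)-\lambda^\ast P$ over $P\ge 0$; a direct calculation (setting the derivative of $\log(1+Ph_m)-\log(1+Ph_e)-\lambda^\ast P$ to zero and taking the positive part) yields exactly $P_{\text{wf}}(\hv,\lambda^\ast)$ of \eqref{Waterfilling}. For a state on which we choose to meet the target, we compare two candidates: spend $P_{\text{wf}}(\hv,\lambda^\ast)$ and declare outage, or bump the power up to $\max(P_{\text{inv}}(\hv,\Rate),P_{\text{wf}}(\hv,\lambda^\ast))$ and be served. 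The net Lagrangian gain of serving over not serving is
\[
\bigl[R_s(\hv,P_{\text{inv}})-R_s(\hv,P_{\text{wf}})\bigr]^+ - \lambda^\ast\bigl[P_{\text{inv}}(\hv,\Rate)-P_{\text{wf}}(\hv,\lambda^\ast)\bigr]^+,
\]
and it is optimal to serve precisely when this quantity exceeds the shadow price $k^\ast$ of the outage budget — i.e., when $\hv\in\Gc(\lambda^\ast,k^\ast)$. (Here $k^\ast$ is the multiplier attached to \eqref{eq:OutageConstraint}; since increasing the served set is always feasible-improving for $\expect[R_s]$ until the outage budget binds, $k^\ast\le 0$, with the served-probability equal to $1-\epsilon$ as in \eqref{optimalPower:3}.) Assembling the pieces gives precisely \eqref{optimalPower}: baseline $P_{\text{wf}}(\hv,\lambda^\ast)$ everywhere, plus the top-up $(P_{\text{inv}}(\hv,C_F^\epsilon)-P_{\text{wf}}(\hv,\lambda^\ast))^+$ on $\Gc(\lambda^\ast,k^\ast)$. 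The constraints \eqref{optimalPower:2}--\eqref{optimalPower:4} are just the stationarity/complementary-slackness conditions pinning down $(\lambda^\ast,k^\ast)$ and the self-consistent rate.

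For uniqueness I would argue as follows: $R_s(\hv,\cdot)$ is concave (the difference of the concave $\log(1+Ph_m)$ and the concave-but-dominated $\log(1+Ph_e)$ is concave in the relevant range, after the $[\cdot]^+$), so $\expect[R_s(\Hm,P)]$ is concave and the feasible set carved out by \eqref{eq:OutageConstraint}--\eqref{eq:PowerConstraint2} is convex once the target rate $\Rate$ is fixed; strict concavity in the served region forces the maximizer to be unique there, and the outage region contributes nothing to the objective so the pointwise minimizer $P_{\text{wf}}$ is the canonical (minimum-power) choice. A fixed-point/continuity argument then shows the self-consistency equation $C_F^\epsilon=\expect[R_s(\Hm,P^\ast)]/(1-\epsilon)$ together with \eqref{optimalPower:3}--\eqref{optimalPower:4} has a unique solution $(\lambda^\ast,k^\ast,C_F^\epsilon)$, using monotonicity of $\expect[P^\ast(\Hm)]$ in $\lambda^\ast$ and of $\Pr(\Hm\in\Gc)$ in $k^\ast$.

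The main obstacle I anticipate is the coupling between the two levels of the problem: the served set $\Gc(\lambda^\ast,k^\ast)$ depends on $\Rate=C_F^\epsilon$ through $P_{\text{inv}}(\hv,C_F^\epsilon)$, while $C_F^\epsilon$ in turn depends on the whole policy through \eqref{optimalPower:2}. Handling this cleanly requires either a careful nested-optimization / envelope argument (optimize over policies for each fixed $\Rate$, then solve the scalar fixed point in $\Rate$) or a simultaneous KKT system, and one must verify that the non-smoothness introduced by the $[\cdot]^+$ operators and by the indicator $\ind(\hv\in\Gc)$ does not create ties on a set of positive measure — this is where the assumption that $f(\hv)$ is a genuine density (no atoms) does the real work, guaranteeing the boundary $\partial\Gc$ is null and hence the ``serve / don't serve'' decision is essentially unique. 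Verifying that the constructed $P^\ast$ indeed lies in $\Pc'$ and attains the value of Theorem~\ref{FullCSICapacity} closes the loop.
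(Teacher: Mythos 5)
Your proposal follows essentially the same route as the paper: fix the target rate, solve the resulting Lagrangian sub-problem to obtain the waterfilling baseline $P_{\text{wf}}(\hv,\lambda)$ plus an inversion top-up on the super-level set $\Gc(\lambda,k)$ of the per-state Lagrangian gain with $\Pr(\Hm\in\Gc)=1-\epsilon$, and then pin down $C_F^{\epsilon}$ as the unique fixed point of $\Rate=\expect[R_s(\Hm,P^{\Rate})]/(1-\epsilon)$ via continuity and monotonicity, which is exactly the paper's decomposition into Lemmas~\ref{bmaxlemma}--\ref{fullCSIuniquepwr}. One minor slip in your uniqueness paragraph: the non-served (outage) states do contribute to $\expect[R_s(\Hm,P)]$, and the reason $P_{\text{wf}}$ is optimal there is that it maximizes $R_s(\hv,P)-\lambda P$ pointwise, as you yourself argued correctly earlier.
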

%Note that finding $P^{\ast}(\Hm)$ requires iteratively solving \eqref{.}-\eqref{.} by varying $\Rate$, $\lambda$ and $k$. Such algorithms are 
%not provided in this paper. 
\begin{proof} 
Define a sub-problem
\begin{align}
\expect[R_s(\Hm,P^{\Rate})]=&\max_{P(\hv)}\expect \left[  R_s(\Hm,P)   \right] \label{eq:PowerControl}\\
  \textrm{subject to: }& P(\hv) \geq 0,~\forall \hv \nonumber\\
&  \expect [P(\Hm)] \leq P_{\text{avg}},  \label{powerconstraint}\\  
&  \Pr\left(R_m(\Hm,P) < \Rate \right) \leq \epsilon  \label{serviceconstraint}
\end{align}
Let $P^{\Rate}(\hv)$ be the power allocation function that solves this sub-problem. 
%First, we will show that $P^{\Rate}(\Hm)$
%is of the form \eqref{optimalPower}.  
Note that for $\Rate = \expect[R_s(\Hm,P^{\Rate})]/(1-\epsilon)$,
 this problem is identical to  \eqref{SecrecyCapacity}-\eqref{eq:PowerConstraint2}, hence
giving us $P^{\ast}(\hv)$. We will prove the existence and uniqueness of such $\Rate$.
\begin{lemma}\label{bmaxlemma}
There exists a unique $\Rate_{\text{max}}>0$ such that
the sub-problem \eqref{eq:PowerControl}-\eqref{serviceconstraint} has a solution for all $\Rate \leq \Rate_{\max}$, which is found by solving
\begin{eqnarray}\label{feasiblePwr}
P_{\text{avg}} = \int_{h_m\geq c}P_{\text{inv}}(\hv,\Rate_{\max})f(\hv)d\hv
\end{eqnarray}
for $\hv \equiv [h_m~h_e]$, where the constant $c$ is chosen such that $\Pr(H_m \leq c)=\epsilon$. 
\end{lemma}
Proof is provided in Appendix~\ref{a:bmaxlemma}.
\begin{lemma}\label{l:FullCSIPower}
For any $\Rate \leq \Rate_{\max}$, 
%the
%power allocation policy that solves the sub-problem \eqref{eq:PowerControl}-\eqref{serviceconstraint} is
\begin{align*}
P^{R}(\hv)=&P_{\text{wf}}(\hv,\lambda)+ \ind\left(\hv \in \Gc(\lambda,k)\right) 
                          \left(P_{\text{inv}}(\hv,\Rate)-P_{\text{wf}}(\hv,\lambda)\right)^+
\end{align*}                          
where $k\in(-\infty, 0]$
 and $\lambda \in (0,+\infty)$
are parameters that satisfy \eqref{powerconstraint} and \eqref{serviceconstraint} with equality.
%the average power constraint \eqref{powerconstraint} with
%equality, and $\Pr(\Hm \in
%\Gc(\lambda^{b},k^{b}))=(1-\epsilon)$.
\end{lemma}
Proof is provided in Appendix~\ref{a:FullCSIPower}.
It is left to show there exists a unique $\Rate$ that satisfies
 $\Rate = \expect[R_s(\Hm,P^{\Rate})]/(1-\epsilon)$.
\begin{lemma} \label{l:Rb}
$\expect[R_s(\Hm,P^{\Rate})]$ is a continuous non-increasing function of $\Rate$.
%Also, $R(b)$ is strictly decreasing in any neighborhood of $b$,
%where the channel outage constraint \eqref{serviceconstraint} is active. 
\end{lemma}
Proof is provided in Appendix~\ref{al:Rb}.
\begin{lemma}\label{fullCSIuniquepwr}
There exists a unique $\Rate$, $0\leq \Rate \leq \Rate_{\max}$, which satisfies $\Rate=\expect[R_s(\Hm,P^{\Rate})]/(1-\epsilon)$.
\end{lemma}
Proof is provided in Appendix~\ref{a:fullCSIuniquepwr}.
This concludes the proof of the theorem.
\end{proof}
Due to \eqref{optimalPower}, the optimal power allocation function is a time-sharing 
between the channel allocation power allocation function
and secure waterfilling; a balance between avoiding channel outages, hence secrecy outages, and maximizing the expected secrecy rate.
The time sharing region $\Gc(\lambda,k)$ determines the instants $\hv$, for which
avoiding channel outages are guaranteed  through the choice of $P(\hv) = \max(P_{\text{inv}}(\hv,\Rate),P_{\text{wf}}(\hv,\lambda))$.
 \eqref{optimalPower:3} ensures that channel outage probability is at most $\epsilon$, and \eqref{optimalPower:4}
ensures that average power constraint is met with equality.
\eqref{optimalPower:2}, on the other hand, is an immediate consequence of \eqref{SecrecyCapacity}.

Note that, an extreme case is $P^{\ast}(\hv) = P_{\text{wf}}(\hv,\lambda^{\ast})$ $\forall \hv$, which occurs when $P_{\text{inv}}(\hv,\Rate)\leq P_{\text{wf}}(\hv,\lambda^{\ast})$
for any $\hv \in \Gc(\lambda^{\ast},k^{\ast})$, which translates into the fact that the secure waterfilling solution
itself satisfies the channel outage probability in \eqref{eq:OutageConstraint}. However,
that the other extreme ($P^{\ast}(\hv) = P_{\text{inv}}(\hv,\Rate^{\ast})$ $\forall \hv$) cannot occur for any non-zero $\epsilon$ due to  \eqref{optimalPower}.
The parameter $C_F^{\epsilon}$ can be found graphically as shown in Figure~\ref{fig:FullCSIPower},
 by plotting $\expect[R_s(\Hm,P^{\Rate})]$ and
and $(1-\epsilon)\Rate$ as a function of $\Rate$. The abcissa of the unique intersection point is $\Rate=C_F^{\epsilon}$.
\noindent \begin{figure}[h]
\centerline{\includegraphics[width=10cm]{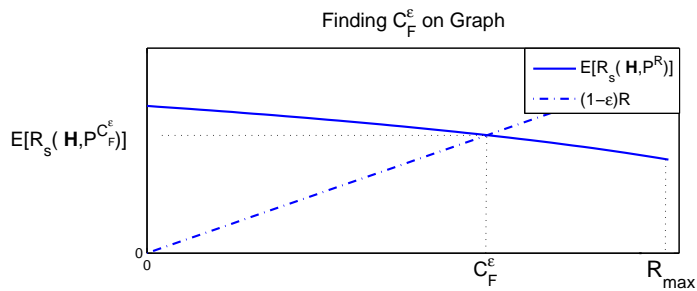}}
    \caption{Finding $C_F^{\epsilon}$ with graphical approach}
    \label{fig:FullCSIPower}
\end{figure}
%%%%%%%%%%%%%%%%%%%%%%%
\begin{example}
Consider the same system model in Example~\ref{e:FullCSI}. We have found that for $R=\frac{0.8}{1-\epsilon}$ bits/channel use is
achievable with $\epsilon$ probability of secrecy outage with no power control, i.e., $P(\hv) =0.5$ $\forall \hv$.
Let $\epsilon =0.2$,  we will see if we can do better than $R=1$ with power control.
Solving the problem \eqref{optimalPower}-\eqref{optimalPower:4},
we can see that\footnote{Although Theorem~\ref{t:FullCSIPower} assumes $\Hm$ is a continuous random vector,
the results similarly hold for the discrete case as well.} the time-sharing, and
power expended in each state are as given in Tables~\ref{ex2:Gc} and \ref{ex2:P}. For $\hv\equiv[h_m ~ h_e]=[10 ~ 1]$, i.e.,
the legitimate channel has a better gain, secure waterfilling
is used and when $\hv=[10 ~ 10]$,
secret key bits cannot be generated, but
channel inversion is used to guarantee a main channel rate of $\Rate$, which is secured by the excess keys generated during the state $\hv=[10 ~ 1]$.
As a result, we can see that a rate of $C_F^{0.2}=1.26$ bits/ per channel use is achievable, which corresponds to $26 \%$ increase with
respect to no power control. As mentioned in Theorem~\ref{t:highpowercapacity}, this gain diminishes at the high power regime, i.e., when $P_{\text{avg}}\to \infty$.
\end{example}
\begin{table}[ht]
\begin{minipage}[b]{0.5\linewidth}\centering
    \caption{Time sharing regions}
    \label{ex2:Gc}
    \begin{tabular}{c | c | c}
    \toprule
    $\downarrow h_m$ \textbackslash ~ $h_e \rightarrow$ & 1     & 10 \\
    \midrule
    1     & wf    & wf \\
    10    & wf    & inv \\
    \bottomrule
    \end{tabular}%
\end{minipage}
\hspace{0.5cm}
\begin{minipage}[b]{0.5\linewidth}\centering
    \caption{$P^{\ast}(\hv)$}
    \label{ex2:P}
    \begin{tabular}{c | c | c}
    \toprule
    $\downarrow h_m$ \textbackslash ~ $h_e \rightarrow$ & 1     & 10 \\
    \midrule
    1     & 0     & 0 \\
    10    & 1.11  & 0.14 \\
    \bottomrule
    \end{tabular}%
\end{minipage}
\end{table}
%%%%%%%%%%%%%%%%%%%%%%%%%%%%%%%%%%%%%%%%%%%%%%%%%%%%%%%%%%%%%%%%%%%%%%%%%%%%%%
\subsection{Main CSI}
Here, we  find the optimal power control strategy $P^*(h_m)$, which solves the optimization 
problem \eqref{SecrecyCapacity2}-\eqref{eq:PowerConstraint3}. 
Let us define $P_{w}(h_m,\lambda)$ as the maximum of $0$, and the solution of the following equation
\begin{align}
\frac{\partial \expect[R_s(\Hm,P)]}{\partial P(h_m)}= 
\frac{h_m \Pr (h_e \leq h_m)}{1+h_mP(h_m)}-\int_0^{h_m}\left( \frac{h_e}{1+h_eP(h_m)} \right)f(h_e)dh_e - \lambda = 0
\end{align}
$P_{w}(h_m,\lambda)$ will replace $P_{\text{wf}}(\hv,\lambda)$ in the full CSI case.
\begin{theorem} 
$P^{\ast}(h_m)$ is the unique solution to
\begin{align}           
                   P^{\ast}(h_m) = & P_w(h_m , \lambda^{\ast}) +  \ind(h_m \geq c) \left( P_{\text{inv}}(h_m,C_M^{\epsilon}) 
                             - P_{w}(h_m,\lambda^{\ast}) \right)^+ \label{maincsirateb}  \\
                     \text{subject to: }&\lambda^{\ast}>0        \nonumber\\
                    & C_M^{\epsilon}=\expect[R_s(\Hm,P^{\ast})]/(1-\epsilon)  \label{MoptimalPower:2}\\
                    & \Pr(H_m \geq c) = 1-\epsilon  \label{MoptimalPower:3}\\
					& \expect[P^{\ast}(H_m)] = P_{\text{avg}} \label{MoptimalPower:4}
\end{align}
where $\expect[R_s(\Hm,P^{\ast})]$ is the expected secrecy rate under the power allocation policy $P^{\ast}(h_m)$. 
\end{theorem}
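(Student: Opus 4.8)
The plan is to follow the proof of Theorem~\ref{t:FullCSIPower} almost verbatim, replacing the secure waterfilling policy $P_{\text{wf}}(\hv,\lambda)$ by its main-CSI counterpart $P_w(h_m,\lambda)$ and the time-sharing region $\Gc(\lambda,k)$ by the threshold set $\{h_m\geq c\}$. As before, for a fixed rate $\Rate$ I would introduce the sub-problem of maximizing $\expect[R_s(\Hm,P)]$ over policies $P(h_m)\geq 0$ subject to $\expect[P(H_m)]\leq P_{\text{avg}}$ and $\Pr(R_m(\Hm,P)<\Rate)\leq\epsilon$, and call its optimizer $P^{\Rate}(h_m)$. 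For the particular choice $\Rate=\expect[R_s(\Hm,P^{\Rate})]/(1-\epsilon)$ this sub-problem coincides with \eqref{SecrecyCapacity2}-\eqref{eq:PowerConstraint3}, so the theorem reduces to two points: characterizing $P^{\Rate}$ in the claimed form, and showing that such a fixed-point rate exists and is unique.

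For the feasibility range, the analogue of Lemma~\ref{bmaxlemma} is where the main-CSI structure enters. Since $R_m(\Hm,P)=\log(1+P(h_m)h_m)$ depends on $h_m$ only, meeting $\Pr(R_m(\Hm,P)<\Rate)\leq\epsilon$ forces $P(h_m)\geq P_{\text{inv}}(h_m,\Rate)=(2^{\Rate}-1)/h_m$ on a set of probability at least $1-\epsilon$; because this quantity is \emph{decreasing} in $h_m$, the cheapest such set is the top-gain set $\{h_m\geq c\}$ with $\Pr(H_m\leq c)=\epsilon$ — this is precisely why the region $\Gc(\lambda,k)$ collapses to a simple threshold here. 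Hence the minimum average power needed for feasibility at rate $\Rate$ is $\int_{h_m\geq c}P_{\text{inv}}(\hv,\Rate)f(\hv)\,d\hv$, which is continuous and strictly increasing in $\Rate$; equating it to $P_{\text{avg}}$ defines a unique $\Rate_{\max}$, and the sub-problem is feasible precisely for $\Rate\leq\Rate_{\max}$.

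Next I would establish the structure of $P^{\Rate}$ by a Lagrangian argument. For each $h_m$, the map $p\mapsto g(h_m,p):=\expect[R_s([h_m,H_e],p)\mid H_m=h_m]$ is concave and non-decreasing in $p$, since for $h_m>h_e$ the derivative $h_m/(1+ph_m)-h_e/(1+ph_e)$ is positive and decreasing in $p$ while $R_s=0$ otherwise; consequently $P_w(h_m,\lambda)$ is the unique nonnegative root (or $0$) of $g_p(h_m,p)=\lambda$, which is exactly the displayed defining equation. Dualizing the average-power constraint with multiplier $\lambda>0$ and imposing $R_m(\Hm,P)\geq\Rate$ on the top set $\{h_m\geq c\}$ to handle the outage constraint, pointwise maximization of $g(h_m,p)-\lambda p$ yields $P^{\Rate}(h_m)=P_w(h_m,\lambda)+\ind(h_m\geq c)\,(P_{\text{inv}}(h_m,\Rate)-P_w(h_m,\lambda))^+$, with $\lambda$ and $c$ fixed by forcing \eqref{eq:PowerConstraint3} and \eqref{eq:OutageConstraint2} to hold with equality. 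An exchange argument then rules out inverting on any probability-$(1-\epsilon)$ set other than $\{h_m\geq c\}$: relocating inversion mass from a higher gain to a lower one only wastes power, which, the objective being non-decreasing in $P$, cannot improve $\expect[R_s(\Hm,P)]$.

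Finally, mirroring Lemmas~\ref{l:Rb}-\ref{fullCSIuniquepwr}: as $\Rate$ grows the event $\{R_m(\Hm,P)<\Rate\}$ grows, so the feasible set of the sub-problem shrinks and $\Rate\mapsto\expect[R_s(\Hm,P^{\Rate})]$ is non-increasing; continuity follows from a standard parametric-optimization estimate. Hence $\phi(\Rate):=\Rate-\expect[R_s(\Hm,P^{\Rate})]/(1-\epsilon)$ is continuous and strictly increasing on $[0,\Rate_{\max}]$, with $\phi(0)\leq 0$ and $\phi(\Rate_{\max})\geq 0$ (at $\Rate_{\max}$ the budget is entirely spent on inversion, so $P_w\equiv 0$, $R_m(\Hm,P^{\Rate_{\max}})=\ind(h_m\geq c)\Rate_{\max}$, and $R_s\leq R_m$ gives $\expect[R_s]\leq(1-\epsilon)\Rate_{\max}$). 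Therefore there is a unique fixed point $\Rate^{\ast}=C_M^{\epsilon}$, and $P^{\ast}=P^{\Rate^{\ast}}$ is exactly \eqref{maincsirateb}-\eqref{MoptimalPower:4} with $\lambda^{\ast}=\lambda$. I expect the main obstacle to be the structural step: justifying rigorously that the outage constraint is best met by channel-inverting precisely on the top-gain set $\{h_m\geq c\}$, and that this meshes with the KKT stationarity condition for $P_w$ on the complement — the same delicate balancing as in the full-CSI proof, but now with less freedom since the power cannot react to $H_e$.
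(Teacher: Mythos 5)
Your proposal is correct and follows essentially the same route as the paper: the paper likewise reduces the theorem to the sub-problem \eqref{eq:MainPowerControl}--\eqref{mainserviceconstraint}, reuses Lemmas~\ref{bmaxlemma}, \ref{l:Rb} and \ref{fullCSIuniquepwr} for feasibility, continuity and the unique fixed point, and replaces Lemma~\ref{l:FullCSIPower} by Lemma~\ref{t:MainCSIPower}, proved via the same Euler--Lagrange derivation of $P_w(h_m,\lambda)$ together with an exchange argument showing the optimal inversion set collapses to the threshold set $\{h_m\geq c\}$. The only point where you are thinner than the paper is the step you yourself flag: Appendix~\ref{a:MainCSIPower} makes the exchange rigorous by observing that for $h_m'>h_m''$ the inversion excess $\left[P_{\text{inv}}-P_w\right]^+$ is smaller at $h_m'$ and, using concavity of $R_s$ in $P$ and the stationarity of $P_w$, that the net Lagrangian penalty is monotone in $h_m$ --- not merely that inverting on lower gains ``wastes power.''
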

\begin{proof}
The proof follows the approach in Full CSI case, hence we omit the details for brevity.
Define the sub-problem
\begin{align}
\expect[R_s(\Hm,P^{\Rate} )] = &\max_{P(h_m)}\expect \left[  R_s(\Hm,P)   \right] \label{eq:MainPowerControl}\\
  \textrm{subject to: }& P(h_m) \geq 0,~ \forall h_m \nonumber\\
&  \expect [P(H_m)] \leq P_{\text{avg}},  \label{mainpowerconstraint}\\  
&  \Pr\left(R_m(\Hm,P) < \Rate \right) \leq \epsilon  \label{mainserviceconstraint}
\end{align}
Let $P^{\Rate}(h_m)$ be the power allocation function that solves this sub-problem.
Lemmas \ref{bmaxlemma} and \ref{fullCSIuniquepwr} also hold in this case.
The only difference is the following lemma, which replaces Lemma~\ref{l:FullCSIPower} in Full CSI.
\begin{lemma}\label{t:MainCSIPower}
For any $R\leq R_{\max}$,
\begin{align*}
P^{\Rate}(h_m)=P_{w}(h_m,\lambda)+ \ind(h_m > c) 
                          \left(P_{\text{inv}}(h_m,\Rate)-P_{w}(h_m,\lambda)\right)^+
\end{align*}                          
where $c$ is a constant that
satisfies $\Pr(H_m \geq c) = 1-\epsilon$, and
 $\lambda \in (0,+\infty)$ is a constant that satisfies \eqref{mainpowerconstraint} with equality.
%is of the form \eqref{maincsirateb}, where $c \in (0,\infty)$ and $\lambda \in (0,\infty)$ are chosen such that
%\eqref{MoptimalPower:3} and \eqref{MoptimalPower:4} are satisfied.
\end{lemma}
The proof is similar to the proof of Lemma~\ref{l:FullCSIPower}, and is provided in Appendix~\ref{a:MainCSIPower}.
\end{proof}
The graphical solution in Figure~\ref{fig:FullCSIPower} to find $C_F^{\epsilon}$ also generalizes to the main CSI case.

% \textbf{/* MOVE THIS PARAGRAPH}
% Note that the delay limited secrecy capacity for full and main CSI are equal to
% $C_F = C_F^{\epsilon}|_{\epsilon=0}$ and $C_M = C_M^{\epsilon}|_{\epsilon = 0}$, 
% respectively \cite{DelaySecrecy:09}.
% For Full CSI, this capacity is non-zero if
% the problem \eqref{SecrecyCapacity}-\eqref{eq:PowerConstraint2} with parameter $\epsilon=0$ has a non-degenerate solution, which occurs when
% \begin{itemize}
% \item $\Rate_{\max}>0$: The condition holds if \eqref{feasiblePwr} has a feasible solution for some $\Rate>0$. It is easy to see
% that this condition is satisfied if $\Pr(H_m=0)=0$, since non-zero channel rate cannot be attained for $h_m=0$.
% \item $\expect[R_s(\Hm,P^{0})]>0$:  Ergodic secrecy capacity is non-zero. The condition holds if $\Pr(H_m > H_e)>0$.
% \end{itemize}
% If one of the conditions cannot be satisfied, then the strict delay limited secrecy capacity is $0$. 
% Similar statements could also be made for the main CSI case.

%%%%%%%%%%%%%%%%%%%%%%%%%%%%%%%%%%%%%%%%%%%%%%%%%%%%%%%%%%%%%%%%%%%%%%%%%%%%
%%%%%%%%%%%%%%%%%%%%%%%%%%%%%%%%%%%%%%%%%%%%%%%%%%%%%%%%%%%%%%%%%%%%%%%%%%%%
\section{Sizing the Key Buffer}\label{s:finitebuffer}

The proofs of the capacity results of Section~\ref{section:capacity}
assume availability of \emph{infinite size} secret
 key buffers at the transmitter and receiver,
which  mitigate the effect of fluctuations in the achievable secret key bit rate due to fading. 
Finite-sized buffers, on the other hand will lead to a higher secrecy outage probability due to wasted
key bits by the key buffer overflows. 
We revisit the full CSI problem, and we consider this problem at `packet' level, where we assume a
packet is of fixed size of $N$ bits. We will prove the following result.
% The proofs of the capacity results of Section~\ref{section:capacity}
% assume availability of an initialization superblock, in which the legitimate nodes
% generate and accumulate enough secret key bits in their secret key buffers to secure the
% delay sensitive data during the following superblock. This approach requires that the key buffer
% size should scale with\footnote{Here, $\text{O}(\cdot)$ denotes big O notation.} $\text{O}(NB)$; number of 
% channel uses in a superblock. This approach
% is not memory efficient, especially in the full CSI case where in fact, it is possible to generate
% secret key bits within a block.
% In this section we revisit the full CSI problem,
% and consider the problem at the 'packet' level,
%  where we assume a packet is of size $N$.
% We show that a key buffer size that grows with
% $\text{O}\left((\epsilon'-\epsilon)^{-1}\log(\epsilon'-\epsilon)^{-1}\right)$
% number of packets, 
% as $\epsilon' \searrow \epsilon$ is sufficient 
% to achieve rate $C_F^{\epsilon}$ at secrecy outage probability $\epsilon'$. 
%This result is achieved by modifying the achievable scheme in Appendix~\ref{a:FullCSI}
%such that, a negative drift is introduced in 
%regime,  the key loss ratio due to buffer overflows decays exponentially with buffer size. 
\begin{theorem}\label{eq:BuffervsOutage}
Let $\epsilon'>\epsilon$. Let $M_{C_F^{\epsilon}}(\epsilon')$ be the buffer size (in terms of packets) 
sufficient to achieve rate $C_F^{\epsilon}$ 
with at most $\epsilon'$
probability of secrecy outage. Then,
\begin{align}
\lim_{\epsilon'\searrow \epsilon} 
\frac{M_{C_F^{\epsilon}}(\epsilon')-C_F^{\epsilon}} {\frac{ \var [R_s (\Hm,P^{C_F^{\epsilon}})]
  + (C_F^{\epsilon})^2\epsilon(1-\epsilon)}{(\epsilon'-\epsilon)C_F^{\epsilon}}
\log\left( \frac{\var[R_s(\Hm,P^{C_F^{\epsilon}})] + (C_F^{\epsilon})^2\epsilon(1-\epsilon)} {(\epsilon'-\epsilon)^2 
C_F^{\epsilon}} \right)} \leq 1 \label{c:asymptoticbuffer}
\end{align}
\end{theorem}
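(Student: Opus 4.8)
The plan is to analyze the secret-key buffer as a doubly-reflected, zero-drift random walk, reduce the secrecy-outage probability to the buffer's overflow rate, and then invert a tail estimate for the buffer content to solve for the sufficient size $M$. First I would set up the dynamics. Fix the target rate $c := C_F^{\epsilon}$ and the optimal policy $P^{C_F^{\epsilon}}$. In each operational block $i$ (those in super-blocks $s\ge 2$), the buffer receives $R_s(\Hm(i),P^{C_F^{\epsilon}})$ packets' worth of fresh key (truncated at the cap $M$, the excess wasted), and is drawn down by $c$ on the event $B_i$ that a packet is actually transmitted, which has probability $1-\epsilon$ by \eqref{optimalPower:3}. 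Modelling the net per-block increment as $X_i = R_s(\Hm(i),P^{C_F^{\epsilon}}) - c\,B_i$ with $B_i\sim\mathrm{Bern}(1-\epsilon)$ independent of $\Hm(i)$ (a decoupling justified as in the capacity proof of Appendix~\ref{a:FullCSI}), one gets $\expect[X_i]=\expect[R_s(\Hm,P^{C_F^{\epsilon}})]-c(1-\epsilon)=0$, since $c=\expect[R_s(\Hm,P^{C_F^{\epsilon}})]/(1-\epsilon)$, and $\var[X_i]=\var[R_s(\Hm,P^{C_F^{\epsilon}})]+c^2\epsilon(1-\epsilon)=:\sigma^2$ — precisely the quantity in \eqref{c:asymptoticbuffer}. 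An equivocation outage at block $i$ occurs exactly when the buffer holds fewer than $c$ key packets at the withdrawal instant, so the buffer content $Q_n$ is a zero-drift random walk on $[0,M]$ reflected at both ends.

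Second, I would reduce the per-block equivocation-outage probability to the long-run overflow rate by a conservation (flow-balance) argument: over a long run, fresh key generated $=$ key consumed $+$ key wasted by overflow; since the generation rate is $\expect[R_s(\Hm,P^{C_F^{\epsilon}})]=c(1-\epsilon)$ and key is consumed at rate $c$ only on transmitted-and-secured blocks, the stationary equivocation-outage probability equals $W/c$, where $W$ is the stationary rate (packets per block) of key wasted to overflow. Because channel outage and equivocation outage are disjoint (the latter requires a transmission), the total secrecy outage probability is $\epsilon + W/c$; hence, writing $\delta := \epsilon'-\epsilon$, it suffices to make $W \le (1-o(1))\,\delta\,c$ as $\delta\to0$, and then take $\lim_{\epsilon'\searrow\epsilon}$.

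Third — the technical core — I would upper bound $W$ in terms of $M$. The heuristic target is the reflected-Brownian-motion identity $W = \sigma^2/(2M)$ (obtained from $\expect_\pi[Q_{n+1}^2-Q_n^2]=0$: the interior contributes $\sigma^2$ and the only surviving boundary term is $-2M$ times the top-boundary push-out rate), which would already give $M\sim\sigma^2/(2\delta c)$. Turning this into an honest non-asymptotic inequality for the discrete walk — whose increments $R_s(\Hm,P^{C_F^{\epsilon}})\le\log(1+P^{C_F^{\epsilon}}H_m)$ are not bounded — requires truncating $R_s$ at a level $L=\Theta\!\big(\log(\sigma^2/(\delta^2 c))\big)$ (so that the discarded tail adds only $o(\delta)$ to the outage, via Markov's inequality on $R_s$, which has finite mean), then applying Doob/Kolmogorov maximal inequalities to the martingale $S_n=\sum_{i\le n}X_i$ on regeneration cycles defined by returns to the top. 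This controlled but lossy argument yields $W = O(\sigma^2 L/M)$, hence an equivocation-outage probability $O(\sigma^2 L/(Mc))$; setting this to $\delta$ and substituting $L$ gives $M\le(1+o(1))\,\frac{\sigma^2}{\delta c}\log\!\frac{\sigma^2}{\delta^2 c}$, which is \eqref{c:asymptoticbuffer}.

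I expect the main obstacle to be exactly this third step: the sharp reflected-walk picture is a diffusion heuristic, and converting it to a rigorous bound for the discrete, possibly heavy-tailed key-generation increments — while keeping the regeneration-cycle bookkeeping clean — is where the work lies, and it is the truncation level $L$ that is responsible for the logarithmic factor $\log(\sigma^2/(\delta^2 c))$ in the statement (the theorem therefore asserts sufficiency of a buffer that is larger, by this log factor, than the heuristic optimum $\sigma^2/(2\delta c)$). A secondary point needing care is the flow-balance identity: one must verify that equivocation outage occurs at rate exactly $W/c$ (or at least at most $W/c$, which is all that sufficiency needs) under the decoupled $(R_s,B)$ model, and that the union bound combining it with the fixed channel-outage probability $\epsilon$ loses only $o(\delta)$.
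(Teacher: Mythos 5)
Your overall architecture (flow balance relating the extra outage $\epsilon'-\epsilon$ to the key-waste rate $W$, then bounding $W$ as a function of $M$) is the same skeleton the paper uses, but the crucial quantitative step is where your proposal has a genuine gap. The paper does \emph{not} analyze a zero-drift doubly-reflected walk: it operates the encoder at a rate $\Rate$ strictly above $C_F^{\epsilon}$ (so the key queue has negative drift $\mu_{\Rate}<0$), establishes positive recurrence of the finite-buffer chain (Lemma~\ref{l:stationarity}), invokes Kingman's heavy-traffic limit to get an exponential tail $e^{-2|\mu_{\Rate}|x/\sigma_{\Rate}^2}$ for the dominating infinite-buffer queue, converts that into a finite-buffer loss-ratio bound via the Kim--Shroff overflow/loss relation, and only then lets $\Rate\searrow C_F^{\epsilon}$ along a sequence with $|\mu_{\Rate}|\asymp(\epsilon'-\epsilon)C_F^{\epsilon}/2$; the logarithm in \eqref{c:asymptoticbuffer} falls out of inverting the exponential tail, not from any truncation. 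In your proposal, by contrast, everything hinges on the assertion that the zero-drift reflected walk with unbounded key increments satisfies $W=O(\sigma^2 L/M)$ after truncating at $L=\Theta(\log(\sigma^2/(\delta^2 c)))$, to be proved by ``Doob/Kolmogorov maximal inequalities on regeneration cycles defined by returns to the top.'' No derivation of this bound is given, and as stated it is not justified: you neither control the length of a top-regeneration cycle (for a zero-drift walk this is of order $M^2/\sigma^2$, which already changes the bookkeeping), nor the waste accumulated per cycle, nor the error incurred by the truncation at the level claimed; the appearance of the logarithm is reverse-engineered from the statement rather than derived. You yourself flag this step as ``where the work lies,'' which is precisely the content of the theorem.

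Two further points need attention even if the core bound were supplied. First, at exactly zero drift there is no stationary infinite-buffer benchmark and Kingman-type heavy-traffic results are unavailable, so you must separately establish ergodicity/positive recurrence of the finite-buffer chain (the analogue of the paper's Lemma~\ref{l:stationarity}) before equating time averages with stationary expectations in your flow-balance step, and you must handle the fact that the withdrawal indicator is \emph{not} independent of the queue (withdrawals are skipped exactly when $Q+R_s<c$, which correlates the increment with small queue values); the paper's artificial-outage device makes $\Oc_{\text{x}}$ i.i.d.\ with probability exactly $\epsilon$, but the key-outage dependence remains and must be argued away. Second, your claim that channel and equivocation outages are disjoint is not how the paper accounts for them (encoder outage is the union of channel, key and artificial outages), although the flow-balance identity $\Pr(\Oc_{\text{enc}})=\epsilon+W/c$ you want does hold and coincides with the paper's \eqref{corr1:ltm}; you also still need the Lemma~\ref{outagelemma}-type equivocation argument to pass from encoder-outage probability to secrecy-outage probability. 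I note that your zero-drift route is not hopeless --- a quadratic test-function (Kella--Whitt/Lyapunov) identity for the doubly-reflected chain would give $W\lesssim\sigma^2/(2M)$ and hence an even smaller sufficient buffer, which would still imply \eqref{c:asymptoticbuffer} --- but that argument is not the one you sketched, and as written the proposal does not prove the key estimate.
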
 
Before providing the proof, we first interpret this result. If buffer size is infinite, 
we can achieve rate $C_F^{\epsilon}$
with $\epsilon$ probability of secrecy outage. With finite buffer, we can achieve the same rate with $\epsilon'$ probability of secrecy outage.
Considering this difference to be the price that we have to pay due to the finiteness of the buffer, 
 we can see that the buffer
size required scales with $\text{O}\left(\frac{1}{\epsilon'-\epsilon} \log\frac{1}{\epsilon'-\epsilon} \right)$, as $\epsilon'-\epsilon \to 0$. 
\begin{proof}
Achievability follows from simple modifications to the capacity achieving scheme
described in Appendix~\ref{a:FullCSI}. 
We will first study the key queue dynamics, then using the heavy traffic limits, we provide an upper bound
to the key loss ratio due to buffer overflows. Then, we relate key loss ratio to the secrecy outage probability,
and conclude the proof.

For the key queue dynamics, we use a single index $t$ to denote the time index 
instead of the double index $(s,b)$,
where $t = sB+b$. 
We consider transmission at outage secrecy rate of $\Rate$, and use power 
allocation function $P^{\Rate}(\hv)$, which solves
the problem \eqref{eq:PowerControl}-\eqref{serviceconstraint}.
Let us define $\{Q_M(t)\}_{t=1}^{\infty}$ as the key queue process with buffer size $M$,
and let $Q_M(1)=0$.
 Then, during each block $t$,
\begin{enumerate}
\item The transmitter and receiver agree on secret key bits of size $R_s(t)$ using
privacy amplification, and store the key on their secret key buffers.
\item The transmitter pulls key bits of size $\Rate$ from its secret key
 buffer to secure the message stream of size $\Rate$ using one time pad,
and transmits over the channel.
\end{enumerate}
as explained in Appendix~\ref{a:FullCSI}.
The last phase is skipped if outage (${\Oc}_{\text{enc}}(t)$) is declared, 
which is triggered by one of the following events
\begin{itemize}
\item Channel Outage (${\Oc}_{\text{ch}}(t)$): The channel cannot support reliable transmission at rate $\Rate$, i.e. $R_m(t)<\Rate$. 
\item Key Outage (${\Oc}_{\text{key}}(t)$): There are not enough key bits in the key queue to secure the message at rate $\Rate$. 
This event occurs when $Q_M(t)+ R_s(t) - \Rate < 0$.
\item Artificial outage (${\Oc}_{\text{a}}(t)$): Outage is artificially declared, even though reliable transmission at rate $\Rate$ is possible.
\end{itemize}
Due to the definition of $P^{\Rate}(\hv)$, $\Pr({\Oc}_{\text{ch}}(t))\leq \epsilon$ $\forall t$, and the set $\{{\Oc}_{\text{ch}}(t)\}$
of events indexed by $t$ are i.i.d. We choose $\{{\Oc}_{\text{a}}(t)\}$ such that
${\Oc}_{\text{x}}(t) = {\Oc}_{\text{ch}}(t) \cup {\Oc}_{\text{a}}(t)$ is i.i.d. as well, and
$$\Pr({\Oc}_{\text{x}}(t)) = \epsilon, ~ \forall t$$
%This is required to derive the heavy traffic limits used in this section.
%Let $\tilde{\Rate}$ be such that $\Pr(R_m(\Hm,P^{\Rate})<\tilde{\Rate}) =\epsilon$. 
%Note that $\tilde{\Rate}\geq \Rate$. Channel outage
%is artificially declared if $R_m(t)<\tilde{\Rate}$, even though reliable transmission at rate $\Rate$ is possible. 
%H%ere, different from the scheme in Appendix~\ref{a:FullCSI}
%we introduce artificial outage to control the drift of the key queue process, which 
% is required to derive the heavy traffic
%limits used in the results of this section. 
%For notational simplicity,% we focus on key queue dynamics in the first superblock
 %($s=1$), and omit the index $s$.
The dynamics of the key queue can therefore be modeled by
\begin{align}
 Q_M(t+1)= \min ( M,Q_M(t)+R_s(t)- \ind(\bar{\Oc}_{\text{enc}}(t))\Rate ) \label{HatQkDynamics}
\end{align}  
%where ${\Oc}_{\text{enc}}(t) = \{R_m(t)<\tilde{\Rate}\} \cup \{ Q_k(t)+R_s(t)-\Rate\ind(R_m(t)\geq \tilde{\Rate}) < 0\} $.
Note that $Q_M(t)\geq 0$ $\forall t$, due to the definition of $\Oc_{\text{key}}(t)$.
%$\mu_{\Rate}$ and $\sigma_{\Rate}^2$ are the expected drift and variance of the infinite-buffer key queue process, respectively.
% Recall from Section~\ref{s:FullCSI} and ~\ref{s:FullCSIPower} that the encoder
%that achieves $\epsilon$-achievable secrecy
%capacity is  $(C_F^{\epsilon},P^{\ast}(\Hm))$. 

%We show in the Lemma~\ref{muLemma} of Appendix~\ref{app:BufferOverflowProof} that $\mu_{C_F^{\epsilon}}=0$, and $\mu_{\Rate}$ 
%is a decreasing function of $\Rate$.  
%Hence, as $b \searrow C_F^{\epsilon}$, $\mu_{\Rate} \to 0$,
%which corresponds to the heavy traffic regime for the key queue process \cite{Kingman62}.
%The following theorem characterizes the time average loss ratio in the heavy traffic regime.
Let $L^T(M)$ be the time average loss ratio over the first $T$ blocks, for buffer size $M$,
which is defined as the ratio of the amount of loss of key bits due to overflows, and the total amount of input key bits
\begin{align}
L^T(M)= \frac{\sum_{t=1}^T  \left( Q_M(t) + R_s(t) -
\ind(\bar{\Oc}_{\text{enc}}(t))\Rate-M \right)^+ }{\sum_{t=1}^T R_s(t)} \label{keyLoss}
\end{align}
Then, we can see that $\forall T>0$,
\begin{align}
(1-L^T(M))\sum_{t=1}^{T} R_s(t) = Q_M(T)+\sum_{t=1}^T \Rate \ind(\bar{\Oc}_{\text{enc}}(t)) \label{keydynamics}
\end{align}
follows from \eqref{HatQkDynamics}, \eqref{keyLoss}, and the fact that $Q_M(1)=0$.
\begin{lemma}\label{l:stationarity}
$Q_M(t)$ converges in distribution
to an almost surely finite random variable.
\end{lemma}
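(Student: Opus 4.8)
The plan is to recognize that the key queue recursion \eqref{HatQkDynamics} is exactly a single-server queue with a reflecting barrier at $M$ (a Lindley-type recursion with finite buffer), driven by the i.i.d.\ increment sequence $A(t) \eqdef R_s(t) - \ind(\bar{\Oc}_{\text{enc}}(t))\Rate$. The first step is to establish that $\{A(t)\}_{t\ge 1}$ is a stationary, ergodic (indeed i.i.d.\ or at worst finite-state-modulated) sequence: $R_s(t) = R_s(\Hm(t),P^{\Rate})$ depends only on the i.i.d.\ channel gains $\Hm(t)$, and by construction $\Oc_{\text{x}}(t) = \Oc_{\text{ch}}(t)\cup\Oc_{\text{a}}(t)$ was chosen i.i.d.\ with $\Pr(\Oc_{\text{x}}(t))=\epsilon$; the only subtlety is the key-outage event $\Oc_{\text{key}}(t)$, which depends on $Q_M(t)$ and hence on the past, so strictly $A(t)$ is not i.i.d.\ but is a measurable function of $(Q_M(t),\Hm(t),\Oc_{\text{x}}(t))$. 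I would therefore phrase the argument at the level of the Markov chain $\{Q_M(t)\}$ itself rather than trying to decouple the increments.

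Concretely, I would argue as follows. Observe that $\{Q_M(t)\}_{t\ge 1}$ is a time-homogeneous Markov chain on the state space $[0,M]$: given $Q_M(t)=q$, the next state is a deterministic function of $q$ and the i.i.d.\ pair $(\Hm(t+1)\text{ or }\Hm(t),\ \Oc_{\text{x}}(t))$ through \eqref{HatQkDynamics} together with the definition of $\Oc_{\text{enc}}(t)$ (which, given $q$, is determined by the channel state and the artificial-outage coin). The state space $[0,M]$ is compact, and the transition kernel is Feller provided $R_s(\Hm,P^{\Rate})$ has a continuous distribution (which holds since $\Hm$ has a well-defined density and $P^{\Rate}$ is a fixed measurable function); hence by the Krylov--Bogolyubov argument an invariant distribution $\pi_M$ exists. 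For \emph{convergence} in distribution from the fixed initial condition $Q_M(1)=0$, I would use a coupling / monotonicity argument: the map $q\mapsto \min(M, q + R_s - \ind(\bar{\Oc}_{\text{enc}})\Rate)$ is non-decreasing in $q$ (note that whether $\Oc_{\text{key}}$ fires is itself monotone — more stored key can only remove a key-outage — and one checks the composite map is still monotone), so two copies of the chain started from ordered initial states stay ordered, and the chain started from $0$ is dominated by the stationary chain; standard results for monotone Markov chains on a compact ordered space then give convergence of $Q_M(t)$ in distribution to $\pi_M$, which is supported on $[0,M]$ and hence almost surely finite.

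The main obstacle I anticipate is handling the key-outage feedback cleanly: because $\Oc_{\text{enc}}(t)$ depends on $Q_M(t)$, the increment process is not i.i.d., so one cannot simply quote the classical Loynes/Lindley stationarity theorem for the $G/G/1$ queue. The cleanest route is the Markov-chain-plus-monotone-coupling argument sketched above, which sidesteps the feedback by working directly with the chain; an alternative, if one prefers to stay with queueing facts, is to upper- and lower-bound $Q_M(t)$ by two auxiliary finite-buffer queues whose increments \emph{are} i.i.d.\ (one in which a key outage is never declared — so $Q$ may go formally negative, giving a lower bound — and one in which the server always works, giving an upper bound) and to show both bounds converge, squeezing $Q_M(t)$. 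Either way the conclusion is the same, and since everything lives on the compact interval $[0,M]$ the almost-sure finiteness of the limit is immediate once convergence in distribution is established.
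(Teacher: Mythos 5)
Your primary route has a genuine gap: the one-step map is \emph{not} monotone in $q$, precisely because of the key-outage feedback you flag as the "subtlety." Fix a realization with $\bar{\Oc}_{\text{x}}(t)$ and a secrecy increment $R_s(t)=r<\Rate$. Then the update \eqref{HatQkDynamics} sends $q\mapsto \min(M,q+r)$ when $q+r<\Rate$ (key outage, no service) but $q\mapsto \min(M,q+r-\Rate)$ when $q+r\geq \Rate$ (service on). As $q\uparrow \Rate-r$ the image tends to $\Rate$, while at $q=\Rate-r$ the image drops to $0$: more stored key \emph{turns the service on}, which lowers the next state by $\Rate$. So the composite map has a downward jump at the key-outage threshold, ordered copies of the chain do not stay ordered, and the monotone-coupling step that was supposed to upgrade "existence of an invariant law" to "convergence from $Q_M(1)=0$" collapses. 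Your fallback does not close this gap either: sandwiching $Q_M(t)$ pathwise between two auxiliary queues with i.i.d.\ increments only squeezes when the two bounds converge to the \emph{same} limit, which they do not here (their stationary laws differ), so domination yields only bounds on $\limsup/\liminf$ of tail probabilities, not convergence in distribution. A further, smaller problem is the Feller premise: $R_s(\Hm,P^{\Rate})$ is not continuously distributed --- it has an atom at $0$ (whenever $H_m\leq H_e$) --- and on that atom the indicator $\ind(q\geq \Rate)$ makes $q\mapsto\expect[f(Q_M(t+1))\,|\,Q_M(t)=q]$ discontinuous at $q=\Rate$, so even the Krylov--Bogolyubov existence step needs a different justification.

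The paper avoids all of this by exploiting the reflection at $M$ rather than monotonicity: it shows $\{Q_M(t)\}$ is a positive recurrent \emph{regenerative} process with regeneration at the visits to the state $M$, which is a genuine accessible atom (from any state, a run of $\lceil (M-i)/\gamma\rceil$ blocks on which $R_s(t)-\Rate\ind(\bar{\Oc}_{\text{x}}(t))>\gamma$ has probability at least $\gamma^{\lceil (M-i)/\gamma\rceil}>0$). Finiteness of the mean regeneration cycle is obtained by comparing $Q_M(t)$ with the auxiliary recursion $Q'_M(t+1)=\big(\min(M,Q'_M(t)+R_s(t)-\Rate\ind(\bar{\Oc}_{\text{x}}(t)))\big)^+$, whose increments \emph{are} i.i.d.\ and whose drift is negative; the limit theorem for positive recurrent regenerative processes then gives convergence in distribution to an a.s.\ finite (indeed $[0,M]$-valued) random variable. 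If you want to salvage your write-up, replace the monotone-coupling step with this regeneration-at-$M$ argument (or any argument producing an accessible atom with finite mean return time); your observation that the chain is Markov on the compact set $[0,M]$ is correct but, by itself, delivers neither convergence nor even the Feller property you invoke.
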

The proof is provided in Appendix~\ref{app:stationarity}. This implies that  
$\lim_{t\to \infty}\Pr(\Oc_{\text{enc}}(t))$  exists.
 Now, we provide our asymptotic result
for the key loss ratio.
We define the drift and variance of this process as
\begin{align}
\mu_{\Rate} &= \expect[R_s(\Hm,P^{\Rate})-\Rate \ind(\bar{\Oc}_{\text{x}}(t))] \nonumber\\ 
            &= \expect[R_s(\Hm,P^{\Rate})]-\Rate (1-\epsilon) \label{mu_rate}
\end{align} 
and    
\begin{align}     
\sigma_{\Rate}^2 & = \var [R_s(\Hm,P^{\Rate})-\Rate\ind(\bar{\Oc}_{\text{x}}(t))]  \nonumber    
\end{align}
respectively, where \eqref{mu_rate} follows from the definition of ${\Oc}_{\text{x}}(t)$.
\begin{lemma}\label{t:lossPr}
For any $M>0$, the key loss ratio satisfies the following asymptotic relationship
\begin{align}
\lim_{\Rate \searrow C_F^{\epsilon}} \lim_{T\to \infty} L^T\left(M \frac{{\sigma}_{\Rate}^2}{|{\mu}_{\Rate}|}\right)\frac{2|{\mu}_{\Rate}|\expect[R_s(\Hm,P^{\Rate})]  e^{\frac{-2\Rate|{\mu}_{\Rate}|}{\sigma_{\Rate}^2}} }{{\sigma}_{\Rate}^2}
 \leq e^{-2M}
\end{align}
\end{lemma}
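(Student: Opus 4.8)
The plan is to pass to the stationary regime for the key queue, reduce $Q_M(\cdot)$ to a Lindley-type reflected random walk with i.i.d.\ increments, bound its stationary overflow rate by an exponential (Lundberg--Kingman) martingale identity, and make the constants explicit through a heavy-traffic expansion of the associated Cram\'er root.

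First fix $\Rate>C_F^{\epsilon}$; by Lemma~\ref{l:Rb} and the fixed-point identity $C_F^{\epsilon}=\expect[R_s(\Hm,P^{C_F^{\epsilon}})]/(1-\epsilon)$ this forces $\mu_{\Rate}<0$, so the queue is stable. The increments $\xi_t:=R_s(t)-\Rate\,\ind(\bar{\Oc}_{\text{x}}(t))$ are i.i.d.\ (each a function of block $t$ alone), with mean $\mu_{\Rate}$ and variance $\sigma_{\Rate}^{2}$, and $\xi_t$ is independent of $Q_M(t)$ (which is measurable with respect to the past). Together with Lemma~\ref{l:stationarity}, the bounded Markov chain $Q_M(\cdot)$ is uniquely ergodic, so the ergodic theorem applied to~\eqref{keyLoss} gives
\begin{align}
\lim_{T\to\infty}L^T(M)=\frac{\expect_{\pi}[\Lambda_M]}{\expect[R_s(\Hm,P^{\Rate})]},\nonumber
\end{align}
where $\Lambda_M\ge0$ is the stationary per-block loss of key bits and $\expect_{\pi}$ the stationary expectation. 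It remains to bound $\expect_{\pi}[\Lambda_M]$.

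Next I would remove the state dependence hidden in $\Oc_{\text{enc}}(t)$: since $\Oc_{\text{enc}}(t)\supseteq\{Q_M(t)+R_s(t)<\Rate\}$, $Q_M$ is not literally a reflected walk with increments $\xi_t$, but a one-line induction shows that the genuine Lindley queue $\tilde Q_M(t+1)=\min(M,(\tilde Q_M(t)+\xi_t)^+)$, $\tilde Q_M(1)=0$, obeys $\tilde Q_M(t)\le Q_M(t)\le\tilde Q_M(t)+\Rate$ for all $t$ (the gap is created only while the queue is below $\Rate$ and cannot grow otherwise); hence it suffices to control the overflow of $\tilde Q_M$, the $\Rate$-shift costing only a bounded factor. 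Under a mild light-tail hypothesis on $R_s(\Hm,P^{\Rate})$ — which for the standard fading models reduces to an integrability condition on $f(\hv)$ near $h_e=0$, as $R_s$ grows at most logarithmically in the gains — the moment generating function $\psi(\theta):=\expect[e^{\theta\xi_t}]$ is finite near $0$; convexity together with $\psi(0)=1$, $\psi'(0)=\mu_{\Rate}<0$ then produces a unique Cram\'er root $\theta_{\Rate}>0$, $\psi(\theta_{\Rate})=1$. Writing the doubly-reflected recursion $\tilde Q_M(t+1)=\tilde Q_M(t)+\xi_t+u_t-\ell_t$ with nonnegative lower/upper reflection terms, using the stationary balance $\expect_{\tilde\pi}[g(\tilde Q_M(t+1))]=\expect_{\tilde\pi}[g(\tilde Q_M(t))]$ with $g(x)=e^{\theta_{\Rate}x}$, and exploiting $\psi(\theta_{\Rate})=1$ and $\tilde Q_M(t)\perp\xi_t$, one obtains, for a buffer of size $b$,
\begin{align}
e^{\theta_{\Rate}b}\,\expect_{\tilde\pi}\!\big[e^{\theta_{\Rate}\ell_t}-1\big]=\expect_{\tilde\pi}\!\big[1-e^{-\theta_{\Rate}u_t}\big].\nonumber
\end{align}
Combining this with $e^{x}-1\ge x$, $1-e^{-y}\le y$, and the conservation law $\expect_{\tilde\pi}[u_t]=|\mu_{\Rate}|+\expect_{\tilde\pi}[\ell_t]$ (stationary mean of the recursion) gives $\expect_{\tilde\pi}[\ell_t]\le|\mu_{\Rate}|/(e^{\theta_{\Rate}b}-1)\le|\mu_{\Rate}|\,e^{-\theta_{\Rate}b}$; absorbing the $\Rate$-shift and the $R_s$-tail via a Kingman bound $\Pr_{\tilde\pi}(\tilde Q_M>y)\le e^{-\theta_{\Rate}y}$ then yields $\lim_{T\to\infty}L^T(b)\le C_{\Rate}\,|\mu_{\Rate}|\,e^{-\theta_{\Rate}b}/\expect[R_s(\Hm,P^{\Rate})]$ with $C_{\Rate}$ bounded as $\Rate\to C_F^{\epsilon}$ (the discrete overshoot of the jumps is what introduces the $e^{-2\Rate|\mu_{\Rate}|/\sigma_{\Rate}^{2}}$ factor appearing in the statement).

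Finally I would let $\Rate\searrow C_F^{\epsilon}$. Continuity of the optimal power allocation (from the explicit form in Lemma~\ref{l:FullCSIPower}) and Lemma~\ref{l:Rb} give $\mu_{\Rate}\nearrow0$ and $\sigma_{\Rate}^{2}\to\sigma_{C_F^{\epsilon}}^{2}>0$, and a second-order Taylor expansion of $\psi$ gives $\theta_{\Rate}\,\sigma_{\Rate}^{2}/|\mu_{\Rate}|\to2$, so for the scaled buffer $b=M\sigma_{\Rate}^{2}/|\mu_{\Rate}|$ one has $\theta_{\Rate}b\to2M$. Multiplying the last bound by the stated prefactor, the $\expect[R_s]$ factors cancel and
\begin{align}
\lim_{T\to\infty}L^T\!\Big(\tfrac{M\sigma_{\Rate}^{2}}{|\mu_{\Rate}|}\Big)\,\frac{2|\mu_{\Rate}|\,\expect[R_s(\Hm,P^{\Rate})]\,e^{-2\Rate|\mu_{\Rate}|/\sigma_{\Rate}^{2}}}{\sigma_{\Rate}^{2}}\ \le\ 2C_{\Rate}\,\frac{\mu_{\Rate}^{2}}{\sigma_{\Rate}^{2}}\,e^{-\theta_{\Rate}M\sigma_{\Rate}^{2}/|\mu_{\Rate}|}\,e^{-2\Rate|\mu_{\Rate}|/\sigma_{\Rate}^{2}},\nonumber
\end{align}
whose right-hand side tends to $0\le e^{-2M}$ as $\Rate\searrow C_F^{\epsilon}$, which is the claim; the explicit constants are exactly the ones consumed by Theorem~\ref{eq:BuffervsOutage} when the bound is invoked at a rate strictly above $C_F^{\epsilon}$. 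The hard part will be establishing the existence and heavy-traffic location of the Cram\'er root — finiteness of $\psi$ near $0$ under the chosen power policy and a uniform control of the third-moment remainder ensuring $\theta_{\Rate}\sigma_{\Rate}^{2}/|\mu_{\Rate}|\to2$ — together with the continuity of $\mu_{\Rate}$ and $\sigma_{\Rate}^{2}$ at the boundary rate; the state dependence of $\Oc_{\text{enc}}$ is comparatively minor and is dispatched by the $\Rate$-sandwiching.
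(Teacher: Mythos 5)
Your route is genuinely different from the paper's. The paper never analyzes the finite-buffer queue's stationary distribution directly: it works with the \emph{infinite}-buffer key queue $Q(t)$, sandwiches it as $Q(t)\le Q'(t)+\Rate$ by the Loynes-stable Lindley walk $Q'$ driven by the i.i.d.\ increments $R_s(t)-\Rate\,\ind(\bar{\Oc}_{\text{x}}(t))$, invokes Kingman's heavy-traffic limit to get $\Pr\bigl(|\mu_{\Rate}|Q'/\sigma_{\Rate}^2>y\bigr)\to e^{-2y}$, and then converts the infinite-buffer overflow tail into the finite-buffer loss ratio through the Kim--Shroff inequality $\expect[R_s(\Hm,P^{\Rate})]\limsup_{T}L^T(M)\le\int_M^{\infty}\limsup_{t}\Pr(Q(t)>x)\,dx$; integrating the shifted exponential tail is what produces the $\sigma_{\Rate}^2/(2|\mu_{\Rate}|)$ and $e^{-2\Rate|\mu_{\Rate}|/\sigma_{\Rate}^2}$ factors (the latter comes from the $+\Rate$ shift in the sandwich, not from jump overshoots as you suggest). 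Your direct treatment of the doubly reflected queue is attractive: the exponential balance identity you write for the two-sided Lindley recursion is in fact exact (not merely heuristic), and, if completed, your bound $\expect_{\tilde\pi}[\ell_t]\le|\mu_{\Rate}|/(e^{\theta_{\Rate}b}-1)$ would show that the scaled loss ratio in the lemma actually tends to $0$, i.e.\ it would expose that the paper's $e^{-2M}$ bound is loose by a factor of order $\sigma_{\Rate}^2/\mu_{\Rate}^2$ coming from the Kim--Shroff step.

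The gap is in exactly the place you defer as ``the hard part''. Your argument hinges on the Cram\'er root $\theta_{\Rate}>0$ with $\expect[e^{\theta_{\Rate}\xi_t}]=1$ and on the heavy-traffic expansion $\theta_{\Rate}\sigma_{\Rate}^2/|\mu_{\Rate}|\to 2$. Both require exponential moments of $R_s(\Hm,P^{\Rate})$ in a neighborhood of the origin, uniformly as $\Rate\searrow C_F^{\epsilon}$, together with third-moment (or uniform remainder) control; none of this is among the lemma's hypotheses, which assume only a well-defined density $f(\hv)$ and implicitly a finite $\sigma_{\Rate}^2$. Since $R_s$ is bounded only by $\log(H_m/H_e)$ on the waterfilling region, admissible densities with enough mass near $h_e=0$ can defeat the m.g.f.\ while keeping the variance finite, so the Lundberg machinery is genuinely less general than the paper's second-moment route (Kingman's limit needs only uniform integrability of the squared increments, which the appendix verifies). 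In addition, transferring the loss from the true queue $Q_M$ (whose increment is $R_s(t)-\Rate\,\ind(\bar{\Oc}_{\text{enc}}(t))$, state-dependent through the key-outage event) to your Lindley surrogate $\tilde Q_M$ needs more than the sandwich on queue lengths, since the per-slot overflow amounts are not monotone functions of the state alone; this is fixable but is part of the unquantified constant $C_{\Rate}$. As written, then, the proposal is a plausible plan whose decisive analytic step is both unproven and unavailable at the stated level of generality; to prove the lemma as stated you must either add the light-tail hypothesis explicitly or replace the Cram\'er-root argument by a heavy-traffic second-moment argument, which is precisely the paper's choice.
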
 
The proof is provided in Appendix~\ref{app:BufferOverflowProof}.
\begin{lemma}\label{outagelemma}
If $\lim_{t\to \infty}\Pr(\Oc_{\text{enc}}(t))=\epsilon'$, then
 $\epsilon'$ secrecy outage probability \eqref{outage} is satisfied. 
\end{lemma}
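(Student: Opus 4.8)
The strategy is to show that, for every block with $s\neq 1$, the secrecy outage event is contained in the encoding outage event, $\Oc_{\text{sec}}(s,b,\Rate,\delta)\subseteq\Oc_{\text{enc}}(t)$ with $t=sB+b$, so that $\Pr(\Oc_{\text{sec}}(s,b,\Rate,\delta))\le\Pr(\Oc_{\text{enc}}(t))$; the hypothesis $\lim_{t\to\infty}\Pr(\Oc_{\text{enc}}(t))=\epsilon'$ then delivers \eqref{outage}.

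Recall from \eqref{eq:secrecyoutage} that $\Oc_{\text{sec}}=\Oc_{\text{eq}}\cup\Oc_{\text{ch}}$, whereas $\Oc_{\text{enc}}(t)=\Oc_{\text{ch}}(t)\cup\Oc_{\text{key}}(t)\cup\Oc_{\text{a}}(t)$. The channel-outage term is common to both, so it suffices to prove that on $\bar{\Oc}_{\text{enc}}(t)$ the equivocation outage cannot occur. On $\bar{\Oc}_{\text{enc}}(t)$ the transmitter executes the second phase of the scheme of Appendix~\ref{a:FullCSI}: since $\Oc_{\text{key}}(t)$ is not declared, $Q_M(t)+R_s(t)\ge\Rate$, so enough fresh secret key bits are available to mask the rate-$\Rate$ message by a one-time pad. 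By the one-time-pad identity together with the secrecy guarantee on the buffered key bits established in Appendix~\ref{a:FullCSI} (the key bits are produced by privacy amplification / universal hashing and are, for $N$ large, nearly uniform and nearly independent of the eavesdropper's entire view $(\Zm^{SB},\hv^{SB})$), the equivocation rate obeys $\frac1N H(W(s,b)\mid\Zm^{SB},\hv^{SB})\ge\Rate-\delta$ for $N$ large enough; hence $\bar{\Oc}_{\text{enc}}(t)\subseteq\bar{\Oc}_{\text{eq}}(s,b,\Rate,\delta)$. Combined with $\Oc_{\text{ch}}(t)\subseteq\Oc_{\text{enc}}(t)$ this gives the desired inclusion.

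With the inclusion in hand the proof concludes quickly. By Lemma~\ref{l:stationarity} the limit $\lim_{t\to\infty}\Pr(\Oc_{\text{enc}}(t))$ exists and equals $\epsilon'$; given $\delta>0$ choose $t_0$ so that $\Pr(\Oc_{\text{enc}}(t))<\epsilon'+\delta$ for all $t\ge t_0$, and then take $B$ large enough that $t=sB+b\ge t_0$ for every block with $s\ge 2$. For all such $(s,b)$, $\Pr(\Oc_{\text{sec}}(s,b,\Rate,\delta))\le\Pr(\Oc_{\text{enc}}(t))<\epsilon'+\delta$, which is exactly \eqref{outage} with $\epsilon'$ in the role of $\epsilon$.

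The only delicate point — and the step I expect to be the main obstacle — is the equivocation bound on $\bar{\Oc}_{\text{enc}}(t)$: one has to be sure that conditioning on the eavesdropper's observations over \emph{all} $SB$ blocks, rather than just block $(s,b)$, does not destroy the one-time-pad secrecy of the particular key bits consumed in that block. This relies on two facts from the achievability construction: each key bit is spent exactly once, and the privacy-amplified keys are (asymptotically) jointly independent of $(\Zm^{SB},\hv^{SB})$. The queueing modification \eqref{HatQkDynamics} changes only \emph{which} stored key bits are drawn in a given block, not their joint law with the eavesdropper's view, so the bound from Appendix~\ref{a:FullCSI} transfers without change.
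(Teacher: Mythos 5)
Your proposal is correct and follows essentially the same route as the paper: the paper bounds $\Pr({\Oc}_{\text{sec}}(s,b,\Rate,\delta)) \leq \Pr({\Oc}_{\text{sec}}|\bar{\Oc}_{\text{enc}})+\Pr({\Oc}_{\text{enc}})$ and invokes the equivocation analysis \eqref{appa:oeq1}--\eqref{appa:oeq2} of Appendix~\ref{a:FullCSI} to make the first term vanish for $N$ large, which is exactly your (almost-sure) inclusion ${\Oc}_{\text{sec}}\subseteq{\Oc}_{\text{enc}}$, and then takes $B$ large enough that $\Pr({\Oc}_{\text{enc}}(t))\leq\epsilon'+\delta$ for all blocks with $s\neq 1$. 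Your identification of the conditioning-on-$(\Zm^{SB},\hv^{SB})$ issue as the delicate point, resolved by the one-time-pad plus privacy-amplification bound already proved in Appendix~\ref{a:FullCSI}, matches the paper's reasoning.
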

\begin{proof}
Find $B$ such that $\Pr(\Oc_{\text{enc}}(t))=\epsilon'+\delta$ for any $t>B$.
In 2-index time notation $(s,b)$ with $t= sB+b$,
it corresponds to $\Pr(\Oc_{\text{enc}}(s,b,\Rate))=\epsilon'+\delta$,
 $\forall(s,b):s\neq 1$.
Then.
\begin{align}
\Pr({\Oc}_{\text{sec}}(s,b,\Rate,\delta)) &\leq \Pr({\Oc}_{\text{sec}}(s,b,\Rate,\delta)|\bar{\Oc}_{\text{enc}}(s,b,\Rate))+\Pr({\Oc}_{\text{enc}}(s,b,\Rate)) \label{corr1:oeq}\\
                              &\leq \Pr({\Oc}_{\text{enc}}(s,b,\Rate))  \label{corr1:oenc}\\
                              &  \leq \epsilon'+\delta \label{corr1:eq3}
\end{align}
Here, \eqref{corr1:oeq} follows from the union bound, and second term follows 
from the equivocation analysis \eqref{appa:oeq1} and \eqref{appa:oeq2}
in Appendix A, which shows that there exists some packet size $N$ large enough such that $\Pr({\Oc}_{\text{sec}}(s,b,\Rate,\delta)|\bar{\Oc}_{\text{enc}}(s,b,\Rate))= 0$.
Equation \eqref{corr1:eq3} implies that $\epsilon'$ secrecy outage probability \eqref{outage}
is satisfied.
\end{proof}
Let $\lim_{t\to \infty}\Pr(\Oc_{\text{enc}}(t))=\epsilon'$.
Since $\Pr(\Oc_{\text{x}}(t)) = \epsilon$ and $\Oc_{\text{enc}}(t) = \Oc_{\text{x}}(t)
\cup \Oc_{\text{key}}(t)$, we have $\lim_{t\to\infty}\Pr(\Oc_{\text{key}}(t))>0$.
This implies  that $\lim_{T \to \infty}\frac{1}{T}Q_{M}(T) =0$ 
(since otherwise, key outage probability would be zero),
which, due to \eqref{keydynamics} implies
\begin{align}
(1-\lim_{T \to \infty} L^T(M) )\expect[R_s(\Hm,P^{\Rate})] 
&=  (1- \lim_{t \to \infty}\Pr({\Oc}_{\text{enc}}(t)))\Rate  \nonumber \\  
&=  (1-\epsilon')\Rate \label{corr1:ltm}
\end{align}
Here, due to the choice of power allocation function $P^{R}(\hv)$,
we have $\expect[R_s(\Hm,P^{\Rate})] = \lim_{T\to\infty}\frac{1}{T}\sum_{t=1}^T R_s(t)$.
Plugging the result of Lemma~\ref{t:lossPr} into \eqref{corr1:ltm}, we obtain 
the required key buffer size to achieve $\epsilon'$ probability of secrecy outage
\begin{align}
\lim_{\Rate \searrow C_F^{\epsilon}} \frac{M_{\Rate}(\epsilon')-R}{\frac{{\sigma}_{\Rate}^2}{2|{\mu}_{\Rate}|}\log \left(  \frac{{\sigma}_{\Rate}^2} 
  {2|{\mu}_{\Rate}|\left( \expect[R_s(\Hm,P^{\Rate})] - (1-\epsilon')\Rate \right)  }  \right)} \leq 1  \label{Mbeta}
\end{align}
 We know from  \eqref{SecrecyCapacity} that $\epsilon$ and $\epsilon'$-achievable secrecy capacities satisfy the conditions
$C_F^{\epsilon'}(1-\epsilon')= \expect[R_s(\Hm,P^{\Rate})] |_{\Rate=C_F^{\epsilon'}}$ and 
$C_F^{\epsilon}(1-\epsilon)=\expect[R_s(\Hm,P^{\Rate})] |_{{\Rate}=C_F^{\epsilon}}=\expect[R_s(\Hm,P^{\ast})]$, respectively.
By Lemma~\ref{l:Rb}, we know that $\expect[R_s(\Hm,P^{\Rate})]$ is a continuous function of $\Rate$, hence for any given $\epsilon'>\epsilon$, there exists an ${\Rate}$ such that $C_F^{\epsilon}<{\Rate}<C_F^{\epsilon'}$,
and  $\expect[R_s(\hv,P^{\Rate})]=(1- \frac{\epsilon+\epsilon'}{2}){\Rate}$.
Furthermore,
as $\epsilon' \to \epsilon$, $C_F^{\epsilon'} \to C_F^{\epsilon}$.
Let us define a monotonically decreasing sequence $(\epsilon'(1), \epsilon'(2), \cdots)$, such that $\lim_{i \to \infty}\epsilon'(i)=\epsilon$.
For any $i \in \nat$, find ${\Rate}(i)$ such that $C_F^{\epsilon}<{\Rate}(i)<C_F^{\epsilon'(i)}$, 
and $\expect[R_s(\hv,P^{{\Rate}(i)})]=(1- \frac{\epsilon+\epsilon'(i)}{2})\Rate(i)$, therefore $\mu_{{\Rate}(i)} = (\epsilon - \epsilon')/(2{\Rate}(i))$.
From \eqref{Mbeta}, we get
\begin{align*}
\lim_{i \to \infty} \frac{M_{{\Rate}(i)}(\epsilon'(i))-\Rate(i)}{\frac{\sigma_{{\Rate}(i)}^2}{(\epsilon'-\epsilon){\Rate}(i)}\log \left( \frac{{\sigma}_{{\Rate}(i)}^2}{{\Rate}(i)
(\epsilon'(i)-\epsilon)}\right)} \leq 1
\end{align*}
Since as $i\to \infty$, ${\Rate}(i)\to C_F^{\epsilon}$, $\epsilon'(i)\to \epsilon$ and 
$\sigma_{{\Rate}(i)}^2 \to \sigma_{C_F^{\epsilon}}^2$, where
\begin{align*}
\sigma_{C_F^{\epsilon}}^2 &=    \var[R_s(\Hm,P^{C_F^{\epsilon}}) - C_F^{\epsilon}\ind(\bar{\Oc}_{\text{x}}(t))] \\
                          &\leq \var[R_s(\Hm,P^{C_F^{\epsilon}})] - C_F^{\epsilon}(1-\epsilon)\epsilon
\end{align*}
The last inequality induces the upper bound \eqref{c:asymptoticbuffer}, which concludes the proof.
\end{proof}

%%%%%%%%%%%%%%%%%%%%%%%%%%%%%%%%%%%%%%%%%%%%%%%%%%%%%%%%%%%%%%%%%%%%%%%%%%%%
%%%%%%%%%%%%%%%%%%%%%%%%%%%%%%%%%%%%%%%%%%%%%%%%%%%%%%%%%%%%%%%%%%%%%%%%%%%%
\section{Numerical Results}\label{section:simulations}
In this section, we conduct simulations to illustrate our main results with two examples.
In the first example, we analyze the relationship between $\epsilon$-achievable
secrecy capacity and average power. We assume that both the main channel and eavesdropper channel are characterized
by Rayleigh fading, where the main channel and eavesdropper channel power gains follow exponential distribution
with means 2 and 1, respectively. 
Since Rayleigh channel is non-invertible, maintaining a non-zero secrecy rate with zero secrecy outage probability is
impossible. In Figure~\ref{fig:capacityresults},
we plot the $\epsilon$-achievable secrecy capacity
as a function of the average power, for $\epsilon = 0.02$ outage probability, for
both full CSI and main CSI cases. It can be clearly observed from the figure that the gap between 
capacities under
full CSI and main CSI vanishes as average power increases, which support the result of Theorem~\ref{t:highpowercapacity}.
\noindent \begin{figure}[h]
\centerline{\includegraphics[width=8.8cm]{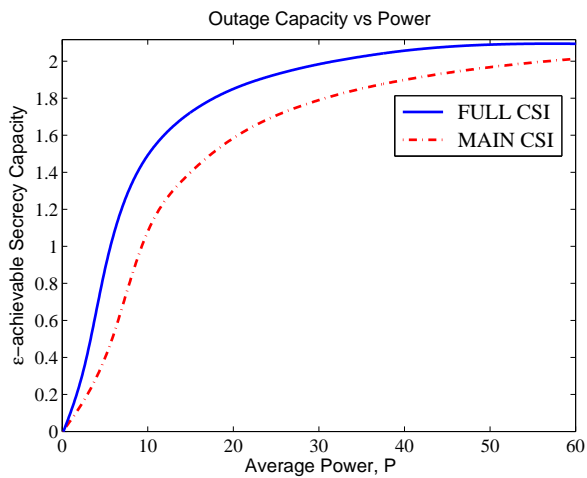}}
    \caption{The $\epsilon$-achievable secrecy capacities as a function of average power, $P_{\text{avg}}$}
    \label{fig:capacityresults}
\end{figure}

In the second example, we study the relationship between the buffer size, key loss ratio and the outage probability. 
We assume that both the main and eavesdropper channel gains
follow a chi-square distribution of degree 2, but with means 2 and 1, respectively.  
We focus on the full CSI case, and consider the scheme described in Section~\ref{s:finitebuffer}.
We consider transmission at secrecy rate of $\Rate$ with the use of the power allocation policy $P^{\Rate}(\hv)$ that solves
%where $\Rate$ is the transmission rate and $P^{\Rate}(\hv)$ is the power allocation function that satisfies
the problem \eqref{eq:PowerControl}-\eqref{serviceconstraint}. 
For $\epsilon =0.02$,
and the average power $P_{\text{avg}}=1$,
we plot the key loss ratio  \eqref{keyLoss}, 
as a function of buffer size $M$ in Figure~\ref{fig:buffersizing1}, for 
$\Rate= C_F^{\epsilon}$, $\Rate=1.01 C_F^{\epsilon}$ 
and $\Rate=1.02 C_F^{\epsilon}$, where $C_F^{\epsilon}$ is the $\epsilon$-achievable secrecy capacity.
It is shown in Lemma~\ref{t:lossPr} of Section~\ref{s:finitebuffer} that 
%for $\Rate= C_F^{\epsilon}$, due to the choice of the power allocation function %$P^{\Rate}$, the expected drift in the key queue 
%$\mu_{\Rate}=0$. Furthermore, $\mu_{\Rate}$ is a decreasing function of ${\Rate}$, %and is negative for 
% $R > C_F^{\epsilon}$. Hence, we 
expect the key loss ratio $L^T(M)$
 decreases as ${\Rate}$ increases, which is observed in Figure~\ref{fig:buffersizing1}. 
Finally, we study the relationship between the secrecy outage probability and the buffer size
for a given rate. 
%When the buffer is finite, transmission at rate ${\Rate}$ equal to $C_F^{\epsilon}$
%yields outage probabilities higher than $\epsilon$, due to key losses.
In Figure~\ref{fig:buffersizing2}, we plot the secrecy outage probabilities, denoted as $\epsilon'$, 
as a function of buffer size $M$ for the same encoder parameters.
On the same graph, we also plot our asymptotic result given in Theorem~\ref{eq:BuffervsOutage},
which provides an upper bound on the required buffer size to achieve $\epsilon'$ outage probability for rate $C_F^{\epsilon}$, 
with the assumption that \eqref{c:asymptoticbuffer} is an 
equality for any $\epsilon'$.  We can see that, this theoretical result serves as an upper bound on
 the required buffer size when $\epsilon'-\epsilon$, which is the additional secrecy outages due to key buffer overflows,
 is very small. Another important observation from Figures~\ref{fig:buffersizing1} and \ref{fig:buffersizing2} is that,
 for a fixed buffer size, although the key loss ratio decreases as $\Rate$ increases,
 secrecy outage probability increases. This is due to the fact that key bits are pulled from the key queue at a faster rate,
 hence the decrease in the key loss ratio does not compensate for the increase of the rate that key bits are 
pulled from the key queue, therefore the required buffer size to achieve same $\epsilon'$
is higher for larger values of $\Rate$.
\begin{figure}[ht]
\begin{minipage}[b]{0.45\linewidth}
\centering
\includegraphics[scale=0.40]{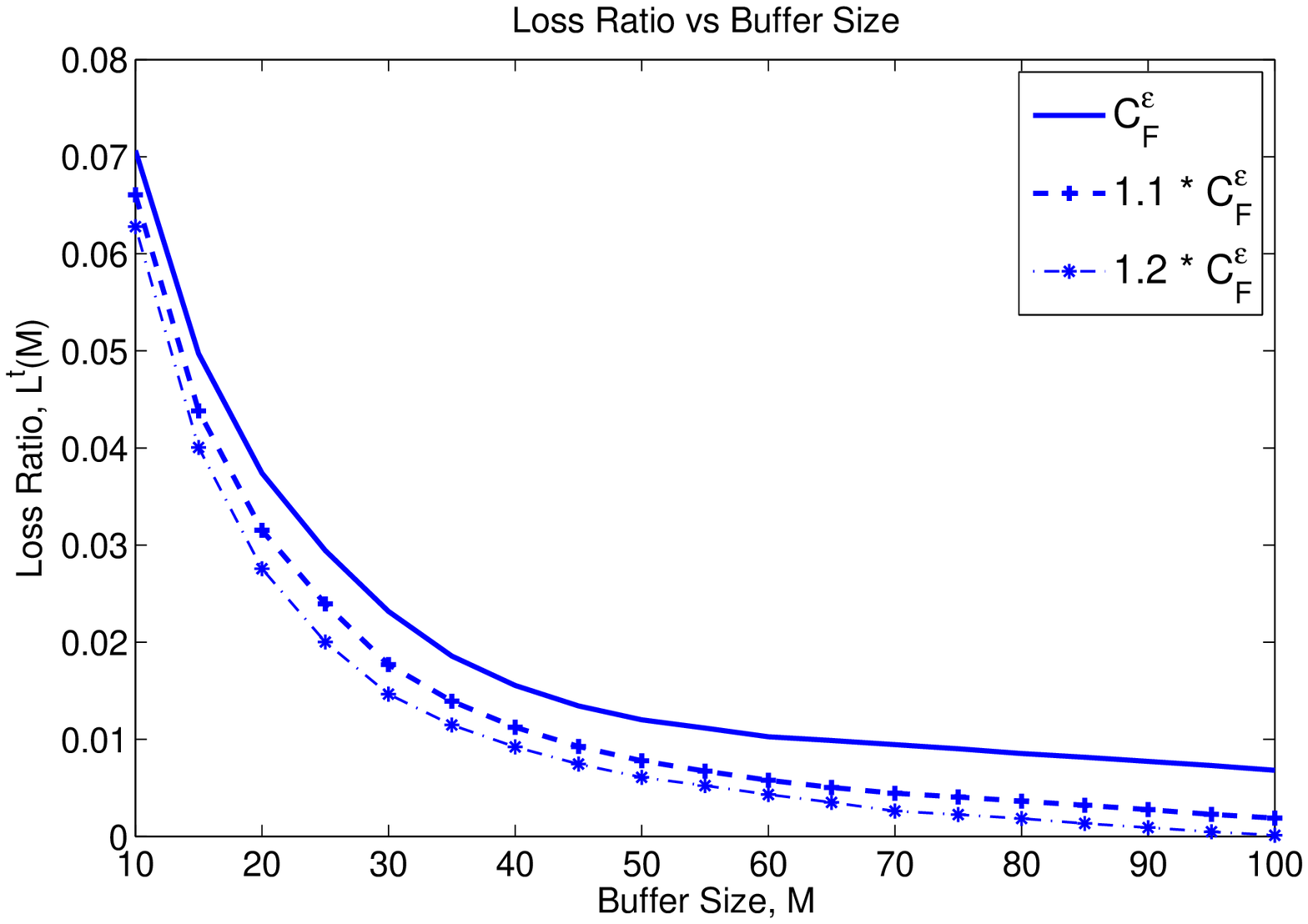}
\caption{Relationship between buffer size $M$, and key loss ratio $L^t(M)$ }
\label{fig:buffersizing1}
\end{minipage}
\hspace{0.5cm}
\begin{minipage}[b]{0.45\linewidth}
\centering
\includegraphics[scale=0.40]{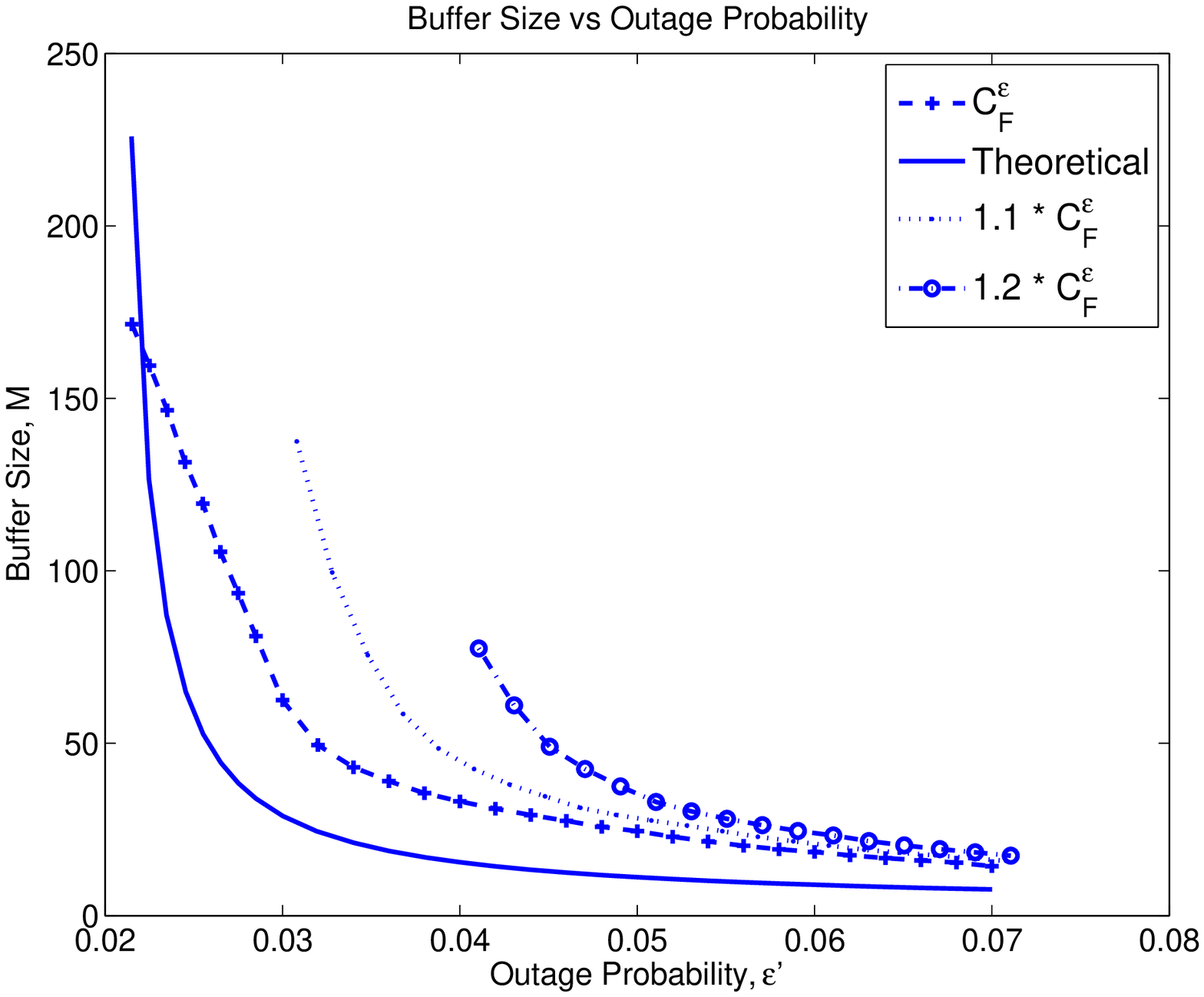}
\caption{Relationship between buffer size $M$, and outage probability $\epsilon'$}
\label{fig:buffersizing2}
\end{minipage}
\end{figure}

%%%%%%%%%%%%%%%%%%%%%%%%%%%%%%%%%%%%%%%%%%%%%%%%%%%%%%%%%%%%%%%%%%%%%%%%%%%%
%%%%%%%%%%%%%%%%%%%%%%%%%%%%%%%%%%%%%%%%%%%%%%%%%%%%%%%%%%%%%%%%%%%%%%%%%%%%
\section{Conclusions}\label{s:conclusion}
This paper obtained sharp characterizations of the secrecy outage capacity of block flat fading channels
under the assumption full and main CSI at the transmitter. In the two cases, our achievability scheme 
relies on opportunistically exchanging private keys between the legitimate nodes and using them later to 
secure the delay sensitive information. We further derive the optimal power control policy in each scenario 
revealing an interesting structure based by judicious time sharing between time sharing and the optimal strategy 
for the ergodic. Finally, we investigate the effect of key buffer overflow on the secrecy outage probability
when the key buffer size is finite.
%We show that the queue outage phenomenon will only have a marginal
% effect on the overall performance in practical settings. 

%%%%%%%%%%%%%%%%%%%%%%%%%%%%%%%%%%%%%%%%%%%%%%%%%%%%%%%%%%%%%%%%%%%%%%%%%%%%%%
%%%%%%%%%%%%%%%%%%%%%%%%%%%%%%%%%%%%%%%%%%%%%%%%%%%%%%%%%%%%%%%%%%%%%%%%%%%%%%
\begin{appendices}

\section{Proof of Theorem~\ref{FullCSICapacity}}\label{a:FullCSI}
First, we prove the achievability.
Consider a fixed power allocation function $P(\hv) \in \Pc'$.
Let us fix ${\Rate}< \expect[R_s(\Hm,P)]/(1-\epsilon)$. We show that for any $\delta>0$, there exist some $B$ and $N$ 
large enough such that
the constraints in \eqref{errorconstraint} and \eqref{outage} are satisfied, which implies that any ${\Rate}<\expect[R_s(\Hm,P)]/(1-\epsilon)$ is 
an $\epsilon$-achievable secrecy rate. The outage capacity is then 
found by maximizing $\expect[R_s(\Hm,P)]/(1-\epsilon)$
over the set $\Pc'$ of power allocation functions.

Our scheme utilizes secret key buffers at both the transmitter
and legitimate receiver.
Then, \\
i) At the end of every block $(s,b)$, using privacy amplification, 
legitimate nodes (transmitter and receiver) generate $N(R_s(s,b)- \delta)$ bits of secret key from
the transmitted signal in that particular block, and store it in their secret key buffers. We denote the
generated secret key at the transmitter as $V(s,b)$, and at the receiver as $\hat{V}(s,b)$.\\
ii) At every block $(s,b)$, $s\neq 1$, the transmitter pulls $N{\Rate}$ bits from its secret key buffer to secure the outage
constrained message of size $H(W(s,b))=N{\Rate}$, using Vernam's one time pad. The receiver uses the same key to correctly decode
the message. We denote the pulled key at the transmitter as $K(s,b)$, and at the receiver as $\hat{K}(s,b)$.
For simplicity in analysis, we assume that keys generated at $s-1$'th superblock
are used only in the $s$'th superblock.
This stage is skipped in the first super-block, and when 'encoder' outage
${\Oc}_{\text{enc}}(s,b,{\Rate})$ occurs, which is the union of the following events:
\begin{itemize}
\item Channel outage (${\Oc}_{\text{ch}}(s,b,{\Rate})$): Channel is not suitable for reliable transmission at rate ${\Rate}$, i.e., 
$R_m(s,b)<{\Rate}$.
\item Key outage (${\Oc}_{\text{key}}(s,b,{\Rate})$): There are not enough key bits in the key queue to secure $W(s,b)$, i.e.,
\begin{align*}
\left(\sum_{b'=1}^B H(V(s-1,b')) -\sum_{b'=1}^b H(K(s,b'))\right) < 0 
\end{align*}
\item Artificial outage (${\Oc}_{\text{a}}(s,b,{\Rate})$): The transmitter declares 'outage', even though reliable secure
transmission of $W(s,b)$ is possible. This is introduced to bound the probability of key outages, and
is explained in the outage analysis.
\end{itemize}
%Let us define the union of these events as 'encoder' outage ${\Oc}_{\text{enc}}
%(s,b,{\Rate}) = {\Oc}_{\text{ch}}(s,b,{\Rate}) \cup {\Oc}_{\text{key}}(s,b,
%{\Rate}) \cup {\Oc}_{\text{a}}(s,b,\Rate)$. 
%The difference of this term
% from secrecy outage ${\Oc}_{\text{sec}}(s,t,b,\delta)$ is that encoder outage is design specific, as key outage is
%not a fundamental element of this problem. 

\textbf{Encoding:}\\
%Let $W(s,b)\in \Wc=\{1,\cdots,2^{N{\Rate}}\}$ be the message to be transmitted at block $(s,b)$. \\
Our random coding arguments rely on an ensemble of codebooks
generated according to a zero mean Gaussian distribution with variance $P(s,b)$ \footnote{Note that, it is
also possible
to use a finite number of codebooks by partitioning the set  $\{\hv\}$ of channel gains, and
using a different Gaussian codebook for every partition \cite{Secrecy:08}.
}.\\
1) When ${\Oc}_{\text{enc}}(s,b,{\Rate})$ does not occur, the message is secured with the secret key bits pulled from the key queue,
using one time pad\footnote{We assume that both the message and the key are converted to binary form in this process.} 
\begin{align}
W_{sec}(s,b) = W(s,b) \oplus K(s,b)
\end{align}
Clearly, $W_{sec}(s,b) \in \Wc_{sec}=\{1,\cdots,2^{N{\Rate}} \}$. Furthermore, let 
$\Wc_{x1}(s,b) = \{ 1,\cdots,2^{N(R_m(s,b)-{\Rate}-\delta)} \}$. 
%We generate a Gaussian codebook, by mapping
%each element $(w_{sec}(s,b),w_{x1}(s,b)) \in (\Wc_{x1} \times \Wc_{sec})$ to i.i.d. N-tuple Gaussian signals $\xv(s,b)$
%of mean $0$ and power $P(s,b)$. 
To transmit the one time padded message $w_{sec}(s,b)$, the encoder randomly and uniformly
chooses $w_{x1}(s,b)$ among $\Wc_{x1}(s,b)$, and transmits to codeword $\xv(s,b)$ 
indexed by $(w_{sec}(s,b),w_{x1}(s,b))$ over the channel. \\
2) When ${\Oc}_{\text{enc}}(s,b,{\Rate})$ occurs, $W(s,b)$ is not transmitted. Let $\Wc_{x2} = \{ 1,\cdots,2^{N(R_m(s,b)-\delta)} \}$.
The encoder randomly and uniformly
chooses $w_{x2}(s,b)$ among $\Wc_{x2}$, and transmits to codeword $\xv(s,b)$ indexed by $w_{x2}(s,b)$ over the channel.\\
The reason for transmitting $w_{x1}(s,b)$ and $w_{x2}(s,b)$ is to confuse 
the eavesdropper to the fullest extent in the privacy amplification process.\\
\textbf{Decoding:}\\
1) When ${\Oc}_{\text{enc}}(s,b,\Rate)$ does not occur, the receiver finds the jointly 
typical $(\hat{w}_{sec}(s,b),\hat{w}_{x1}(s,b),\yv(s,b))$
pair, where $\yv(s,b)$ denotes the received signal at block $(s,b)$. Then, using one-time pad, the receiver
obtains $\hat{w}(s,b) = \hat{w}_{sec}(s,b) \oplus \hat{k}(s,b)$. \\
2) When ${\Oc}_{\text{enc}}(s,b,\Rate)$ occurs, the receiver finds the jointly 
typical $(\hat{w}_{x2}(s,b),\yv(s,b))$.

Define the error events 
\begin{align*}
E_1(s,b) &= \left\{(\hat{w}_{sec}(s,b),\hat{w}_{x1}(s,b)) \neq ({w}_{sec}(s,b),{w}_{x1}(s,b)) \right\} \\
E_2(s,b) &= \left\{\hat{w}_{x2}(s,b) \neq {w}_{x2}(s,b) \right\}  \\
E_3(s,b,\delta) &= \left\{\frac{1}{N}\|X(s,b)\|^2 > P(s,b)+\delta\right\}
\end{align*}
Independent of whether the event ${\Oc}_{\text{enc}}(s,b,\Rate)$ occurs or not,
the encoding rate is equal to $R_m(s,b)-\delta$, which is below the supremum of achievable
main channel rates. Furthermore, each element of $\Xm(s,b)$
is independently drawn from Gaussian distribution of mean $0$
and variance $P(s,b)$. Therefore, random coding arguments guarantee us that $\forall B>0$,
 $\exists N_1>0$ such that $\forall N>N_1$,
$\Pr(E_1(s,b))\leq \frac{\delta}{3B}$, $\Pr(E_2(s,b))\leq \frac{\delta}{3B}$ and $\Pr(E_3(s,b))\leq \frac{\delta}{3}$. 

\textbf{Privacy Amplification:} At the end of every block $(s,b)$, the transmitter and receiver generate
secret key bits, by applying a universal hash function on the exchanged signals in that particular block.
First, we provide the definition of a universal hash function.
\begin{definition} (\cite{Bennett:95})
A class $G$ of functions $\Ac  \to \Bc$ is universal, if for any  $x_1 \neq x_2$ in
 $\Ac$, the probability that $g(x_1)=g(x_2)$
is at most $\frac{1}{\Bc}$ when $g$ is chosen at random from $G$ according to a uniform distribution.
\end{definition}
\begin{lemma}\label{l:privacyamp}
For any $B>0$, there exists $N_2(B)>0$ such that, $\forall N>N_2(B)$, and for any block $(s,b)$,
\begin{itemize}
\item When ${\Oc}_{\text{enc}}(s,b,\Rate)$ does not occur, the transmitter and receiver can generate 
secret key bits $V(s,b) = G([W_{sec}(s,b)~ W_{x1}(s,b)])$ and $\hat{V}(s,b) = G([\hat{W}_{sec}(s,b)~ \hat{W}_{x1}(s,b)])$
respectively, such that $V(s,b) = \hat{V}(s,b)$ if the error event $E_1(s,b)$ does not occur, and
\begin{align} 
H(V(s,b)) &= N(R_s(s,b)-\delta) \label{privacyamp1}\\
\frac{1}{N}I(V(s,b);\Zm^{SB},\hv^{SB},G) &\leq \delta/B \label{privacyamp2} 
\end{align}
\item Similarly, when ${\Oc}_{\text{enc}}(s,b,\Rate)$  occurs,
the transmitter and receiver can generate 
secret key bits $V(s,b) = G([W_{x2}(s,b)])$ and $\hat{V}(s,b) = G([\hat{W}_{x2}(s,b)])$
respectively, such that $V(s,b) = \hat{V}(s,b)$ if the error event $E_2(s,b)$ does not occur, and
\eqref{privacyamp1}, \eqref{privacyamp2} are satisfied.
\end{itemize}
\end{lemma}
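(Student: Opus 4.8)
The plan is to prove the first bullet, the case in which ${\Oc}_{\text{enc}}(s,b,\Rate)$ does not occur; the second bullet follows by the identical argument with the index pair $[W_{sec}(s,b)~W_{x1}(s,b)]$ replaced by the single index $[W_{x2}(s,b)]$ and the error event $E_1(s,b)$ replaced by $E_2(s,b)$. First I would fix the block $(s,b)$ and condition on the channel realization $\hv(s,b)=[h_m~h_e]$. If $h_m\le h_e$ then $R_s(s,b)=0$ by \eqref{eqn:Rs(t)}, the generated key is empty and there is nothing to prove, so I may assume $h_m>h_e$, in which case $\log(1+P(s,b)h_e)=R_m(s,b)-R_s(s,b)$ by \eqref{eqn:Rm(t)}--\eqref{eqn:Rs(t)}. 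Agreement is then immediate: the codeword sent in block $(s,b)$ is indexed by $X_{\mathrm{in}}\eqdef[W_{sec}(s,b)~W_{x1}(s,b)]$, and on the complement of the error event $E_1(s,b)$ the receiver recovers this pair exactly, so applying the publicly chosen deterministic map $G$ to both sides gives $V(s,b)=\hat V(s,b)$. Choosing the range of $G$ to be $\{0,1\}^{\lfloor N(R_s(s,b)-\delta)\rfloor}$ pins down the output length in \eqref{privacyamp1}, whose exact (near-)uniformity will be a byproduct of the leftover-hash estimate used for \eqref{privacyamp2}.

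Next I would isolate what the eavesdropper actually learns about $X_{\mathrm{in}}$. The point is that $X_{\mathrm{in}}$ is uniform on $\Wc=\{1,\dots,2^{N(R_m(s,b)-\delta)}\}$ and is independent of everything the eavesdropper observes outside block $(s,b)$: the component $W_{sec}(s,b)=W(s,b)\oplus K(s,b)$ is a one-time pad of the fresh, uniform, independent message $W(s,b)$, hence uniform and independent of $(K(s,b),\Zm^{s-1,B})$; the component $W_{x1}(s,b)$ is drawn uniformly and independently; and every block $(s',b')\neq(s,b)$ uses an independently drawn Gaussian codebook carrying an independent message, hence is independent of $X_{\mathrm{in}}$ given $\hv^{SB}$. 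Thus, conditioned on $\hv^{SB}$, the only component of $(\Zm^{SB},\hv^{SB})$ correlated with $X_{\mathrm{in}}$ is the eavesdropper's block-$(s,b)$ signal $\Zm(s,b)$, and it suffices to control the residual uncertainty of $X_{\mathrm{in}}$ given $\Zm(s,b)$ and $\hv^{SB}$.

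The core step is a channel-resolvability / wiretap-coding estimate in the spirit of \cite{GaussianWiretap,Secrecy:08}: under the random Gaussian codebook of rate $R_m(s,b)-\delta$ and the eavesdropper channel $\Zm(s,b)=G_e(s,b)\Xm(s,b)+\Wm_e(s,b)$ of capacity $\log(1+P(s,b)h_e)=R_m(s,b)-R_s(s,b)$, for a typical output the number of jointly typical codewords is about $2^{NR_s(s,b)}$ and they carry nearly uniform a posteriori mass, so the eavesdropper's residual uncertainty about $X_{\mathrm{in}}$ is essentially $NR_s(s,b)$ bits. Making this precise amounts to showing, outside an event of probability $o(1)$ over the codebook and the channel, a lower bound of the form $N(R_s(s,b)-\delta/2)$ on the (smooth) conditional min-entropy of $X_{\mathrm{in}}$ given $\Zm(s,b)$ and $\hv^{SB}$, which I would obtain by a typical-set count together with a Markov/Chebyshev concentration argument over the random codebook. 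Feeding this into the privacy amplification theorem of \cite{Bennett:95} (the leftover hash lemma) with a universal family $G$ into $\{0,1\}^{\lfloor N(R_s(s,b)-\delta)\rfloor}$ then yields
\begin{align}
I\!\left(V(s,b);\Zm(s,b),\hv^{SB},G\right)\;\le\; 2^{-cN\delta}\quad\text{for some constant }c>0,
\end{align}
which, together with the independence of the other blocks established above, gives $\frac1N I(V(s,b);\Zm^{SB},\hv^{SB},G)\le 2^{-cN\delta}/N<\delta/B$ once $N$ exceeds a threshold $N_2(B)$; the same $G$, having near-uniform output, also delivers \eqref{privacyamp1}. Finally, since $S$ does not enter the per-block estimate, a worst-case choice of $N_2(B)$ serves every block $(s,b)$.

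The hard part will be the resolvability estimate of the previous paragraph, namely turning the heuristic ``list size $\approx 2^{NR_s(s,b)}$'' into a genuine lower bound on the smooth conditional min-entropy with the correct $o(N)$ error terms, and getting that bound to hold with high probability over the random Gaussian codebook (including the truncation/typicality technicalities already reflected in the event $E_3(s,b,\delta)$). Once that bound is in hand, the leftover hash lemma and the independence bookkeeping are routine.
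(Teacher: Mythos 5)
Your overall pipeline is the same as the paper's: reduce to the single block, get key agreement from correct decoding of the transmitted index, lower-bound the eavesdropper's residual uncertainty about that index by roughly $NR_s(s,b)$, and then apply the privacy amplification theorem of \cite{Bennett:95} with a universal hash of output length $N(R_s(s,b)-\delta)$ to obtain \eqref{privacyamp1}--\eqref{privacyamp2}. Where you diverge is precisely at the step you yourself flag as ``the hard part'': you propose to establish the smooth conditional min-entropy bound by a channel-resolvability style typical-list count with concentration over the random Gaussian codebook, and you leave that estimate unproven. The paper takes a shorter route that avoids this entirely: it quantizes the eavesdropper's continuous observation $\Zm$ to $\Zm^{\Delta}$ (this quantization is also needed just to invoke Theorem 3 of \cite{Bennett:95}, which is stated for discrete side information --- a point your plan glosses over when you feed the continuous $\Zm(s,b)$ directly into the leftover hash lemma), and then passes from Shannon entropy to smooth min-entropy via Theorem 1 of \cite{Nascimento}, so that the needed bound reduces to $H_{\infty}(W_X|\Zm^{\Delta}) \gtrsim H(W_X) - I(\Xm;\Zm) - N\delta/B = NR_s - N\delta/B$, i.e.\ to the standard mutual-information bound $I(\Xm;\Zm)\le N(R_m-R_s-\delta)$ for the Gaussian eavesdropper channel as in \cite{GaussianWiretap,Secrecy:08}. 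So your architecture and your independence bookkeeping across blocks (one-time padding makes $W_{sec}$ uniform and independent of past and future eavesdropper observations) are sound and match the paper's implicit reduction, but the central min-entropy estimate in your write-up is a promissory note; if you want to complete it along your lines you must handle the continuous output (quantization or a continuous-alphabet leftover hash), control atypically large posterior probabilities (min-entropy is not a raw list count), and prove the high-probability statement over the codebook --- whereas adopting the paper's Shannon-to-min-entropy route makes all of that unnecessary.
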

\begin{proof}
The proof follows follows the approach of \cite{Nascimento}, which applies privacy amplification to 
Gaussian channels. 
First, we introduce the information theoretic quantities required for the proof. For random variables $A,B$, define
\begin{itemize}
\item Renyi entropy of $A$ as $\log\expect[P_{A}(a)]$
\item Min-entropy as $A$ as $H_{\infty}(A) = \min_{a}\log\left(\frac{1}{\Pr_{A}(a)} \right)$.
\item Conditional min-entropy of $A$ given $B$ as $H_{\infty}(A|B) = \inf_{b}H_{\infty}(A|B=b).$
\item $\delta$-smooth min-entropy of $A$ as $H_{\infty}^{\delta}(A) = \max_{A':\|\Pr_{A}-\Pr_{A'}\|<\delta}H_{\infty}(A')$.
\end{itemize}

Without loss of generality, we drop the block index $(s,b)$ and $\Rate$, and focus on the first block $(1,1)$, and assume 
the event ${\Oc}_{\text{enc}}$ does not occur.  Let $W_{X} = [W_{sec}~W_{x1}]$, with sample realization 
sequences denoted by $w_x$. 
Let $V = G(W_X)$, where $G$ denotes a random universal hash function that maps $W_X$
 to to an r-bit binary message $V \in \{0,1\}^r$. Then, it is clear that if error event $E_1$ does not occur,
$\hat{V}=V$ since $W_X = \hat{W}_X$, for any choice of $G$. To show that the security constraints \eqref{privacyamp1}-\eqref{privacyamp2}
are satisfied, we  
cite the privacy amplification theorem, which is originally defined for 
discrete channels. For this purpose, we 
define a quantization function $\phi$, with sensitivity parameter $\Delta = \sup_{\zv}|\zv-\phi(\zv)|$. 
Let $\Zm^{\Delta} = \phi(\Zm)$ denote the
 quantized version of $\Zm$. 
where $\zv^{\Delta}$ denotes realization sequences. Then, by Theorem 3 of \cite{Bennett:95} there exists a 
universal function $G$ such that \footnote{We omit $\hv^{SB}$ in the following parts of the proof of Lemma~\ref{l:privacyamp} 
for notational simplicity.}
\begin{align*}
H(G(W_{X}) | \Zm^{\Delta}=\zv^{\Delta}, G) \geq r - \frac{2^{r-R(W_{X}|\Zm^{\Delta}=\zv^{\Delta})}}{\ln 2}
\end{align*}
 Now, we relate this expression to the Shannon entropy of the message, conditioned on eavesdropper's
actual received signal.
%For this purpose, we define the information theoretic quantities
%\begin{itemize}
%\item Min-entropy as $H_{\infty}(W_{X}) = \min_{w_x}\log\left(\frac{1}{\Pr_{W_X}(w_x)} \right)$.
%\item Conditional min-entropy as $H_{\infty}(W_{X}|Z) = \inf_{z}H_{\infty}(W_{X}|Z=z).$
%\item $\delta$-smooth min-entropy as $H_{\infty}^{\delta}(W_X) = \max_{W':\|\Pr_{W_X}-\Pr_{W'}\|<\delta}H_{\infty}(W')$.
%\end{itemize}
Using the facts $H_{\infty}(W_X) \leq R(W_X)$ and $H_{\infty}(W_X|\Zm^{\Delta},G) \leq H_{\infty}(W_X|\Zm^{\Delta}=\zv^{\Delta},G)$, 
it is easy to show that
\begin{align*}
H(G(W_{X}) | \Zm^{\Delta}, G) \geq r - \frac{2^{r-H_{\infty}(W_{X}|\Zm^{\Delta})}}{\ln 2} 
\end{align*}
Then, due to the asymptotic relationship between continuous random variables and their quantized versions \cite{TCover},
there exists a quantization function $\phi$ such that
$\Delta$ is small enough, and
\begin{align}
H(G(W_X)|G,\Zm) &\geq H(G(W_X)|G,\Zm^{\Delta}) -\frac{\delta}{2B}  \nonumber\\
                &\geq r - \frac{2^{r - H_{\infty}(W_X|\Zm^{\Delta})}}{\ln 2} -\frac{\delta}{2B}  \label{appa:privampeq5}
\end{align}
are satisfied. To relate min-entropy to Shannon entropy, we use the result of
 Theorem 1 of \cite{Nascimento}; 
$\forall \delta'>0$, $\exists$ a block length $N'$ such that $\forall N>N'$,
\begin{align}
\frac{1}{N} H(\Xm^{\Delta}|\Zm^{\Delta}) \leq \frac{1}{N} H_{\infty}^{\delta'}(\Xm^{\Delta}|\Zm^{\Delta}) + \delta/B
 \label{appa:privampeq4}
\end{align}
Now, we proceed as follows,
\begin{align}
H_{\infty}(W_X|\Zm^{\Delta}) &= \lim_{\delta' \to 0} H_{\infty}^{\delta'}(W|\Zm^{\Delta}) \nonumber\\
                             &\geq H(W_X) - I(W_X;\Zm^{\Delta}) - N\delta/B \label{appa:privampeq3}\\
                             &\geq H(W_X) - I(\Xm;\Zm) - N\delta/B \label{appa:privampeq1}\\
                             &=   N R_s-N \delta/B  \label{appa:privampeq2}
\end{align}
where \eqref{appa:privampeq3} follows from \eqref{appa:privampeq4}, and the appropriate choice of $N'$.
 \eqref{appa:privampeq1} follows from the fact that $W_X \rightarrow \Xm \rightarrow \Zm \rightarrow \Zm^{\Delta}$ forms
a Markov chain.% and there is a bijection between $\Xm$ and $W_X$. 
 \eqref{appa:privampeq2} follows
from the fact that $H(W_X) = N(R_m - \delta)$, and similarly $I(\Xm;\Zm) \leq N(R_m - R_{s} - \delta)$, which
is the eavesdropper's maximum achievable rate.
 For the choice of $H(V) = r = N(R_s - \delta)$, from \eqref{appa:privampeq5}, \eqref{appa:privampeq2}, 
 and the fact that $V=G(W_X)$, we get
\begin{align}
I(V;G,\Zm) &=  H(G(W_X)) - H(G(W_X)|\Zm,G) \nonumber\\
           &\leq \frac{2^{-N(B-1)/B}}{\ln 2} + \frac{\delta}{2B} \nonumber\\
           &\leq \frac{\delta}{B} \label{appa:privampeq6}
\end{align} 
since there exists some $N''$ such that forall $N> N''$, \eqref{appa:privampeq6}
Hence, for $N> N_2 = \max(N',N'')$, the constraints \eqref{privacyamp1}, \eqref{privacyamp2} are satisfied.
The proof for the case where ${\Oc}_{\text{enc}}$ occurs is very similar, and is omitted.
\end{proof}

\textbf{Equivocation Analysis:} 
Secrecy outage probability can be bounded above as
\begin{align}
\Pr({\Oc}_{\text{sec}}(s,b,\Rate,\delta)) &= \Pr({\Oc}_{\text{eq}}(s,b,\Rate,\delta)\cup {\Oc}_{\text{ch}}(s,b,\Rate)) \nonumber\\ 
                          &\leq  \Pr({\Oc}_{\text{eq}}(s,b,\Rate,\delta)\cup {\Oc}_{\text{enc}}(s,b,\Rate))\nonumber\\
&\leq \Pr({\Oc}_{\text{eq}}(s,b,\Rate,\delta)|\bar{\Oc}_{\text{enc}}(s,b,\Rate)) + \Pr({\Oc}_{\text{enc}}(s,b,\Rate)) \label{appa:eq1}
\end{align}
where the first equality follows from the definition of secrecy outage \eqref{eq:secrecyoutage}, and \eqref{appa:eq1} follows
from the union bound.
Now, we upper bound the first term.
For the choice $N,B$ such that  $N= \max(N_1(B),N_2(B))$,
the equivocation at every block $(s,b)$ in case of no encoder outage can be bounded as
\begin{align}
H&(W(s,b)| \Zm^{SB},\hv^{SB},G,\bar{\Oc}_{\text{enc}}(s,b,\Rate)) \nonumber\\
&\stackrel{(a)}{=} H(W(s,b)|\Zm(s-1,:),\hv^{SB},G,\bar{\Oc}_{\text{enc}}) \nonumber\\
- &I(W(s,b);\Zm(s,b)|\Zm(s-1,:),\hv^{SB},G,\bar{\Oc}_{\text{enc}}) \nonumber\\
                                 &\stackrel{(b)}{\geq} H(W(s,b)) - I\big{(}W(s,b);\Zm(s,b),W_{sec}(s,b)|\nonumber\\
&\Zm(s-1,:),\hv^{SB},G,\bar{\Oc}_{\text{enc}})\big{)} \nonumber\\
                                 &\stackrel{(c)}{\geq} N{\Rate} - I(W(s,b);W_{sec}(s,b)|\Zm(s-1,:),\hv^{SB},G,\bar{\Oc}_{\text{enc}})) \nonumber\\
								 &\stackrel{(d)}{\geq} N{\Rate} - H(K(s,b)|\Zm(s-1,:),\hv^{SB},G,\bar{\Oc}_{\text{enc}})) \nonumber\\
								 &\stackrel{(e)} {\geq} N{\Rate} - \sum_{b=1}^B H(V(s-1,b)|\Zm(s-1,:),\hv^{SB},G,\bar{\Oc}_{\text{enc}})) \nonumber\\
								 &\geq N({\Rate} -\delta) \label{appa:oeq1}
\end{align}
where we use the notation $\Zm(s,:) = \{\Zm(s,i)\}_{i=1}^B$,
and omit the index $(s,b,\Rate)$ from $\bar{\Oc}_{\text{enc}}(s,b,\Rate)$.
Notice that $W(s,b) = W_{sec}(s,b) \oplus K(s,b)$, and due to our encoder structure, $W_{sec}(s,b)$
is transmitted only in $(s,b)$'th block, and similarly, $K(s,b)$ is generated in $s-1$'th superblock.
Then, due to the memoryless property of the channel, $W(s,b)\rightarrow (\Zm(s,b),\Zm(s-1,:))\rightarrow \Zm^{SB}$, hence $(a)$ follows.
The first term of $(b)$ follows since $W_{sec}(s,b)$ is independent of $\Zm(s-1,:)$, and the second term of $(b)$ follows since
$W(s,b)\rightarrow W_{sec}(s,b)\rightarrow \Zm(s,b)$ forms a Markov chain.
$(c)$ follows due to $H(W(s,b))=N\Rate$.
$(d)$ follows since there is no encoder outage, hence key outage, and $(e)$ follows since $K(s,b)$ is
pulled from the key buffers, which contain the pool of key bits $\{V(s-1,b)\}_{b=1}^B$ generated during superblock $s-1$, 
and \eqref{appa:oeq1} follows from \eqref{privacyamp2}. 
Then,
\begin{align}
\Pr({\Oc}_{\text{eq}}(s,b,\Rate,\delta)|\bar{\Oc}_{\text{enc}}(s,b,\Rate))&= \Pr(\frac{1}{N}H(W(s,b)| \Zm^{SB},\hv^{SB},G)<\Rate-\delta) \nonumber\\
                         &= 0 \label{appa:oeq2}
\end{align}
Now, we bound the encoder outage probability. By the union bound,
\begin{align*}
\Pr({\Oc}_{\text{enc}}(s,b,\Rate))&\leq \Pr({\Oc}_{\text{ch}}(s,b,\Rate) \cup {\Oc}_{\text{a}}(s,b,\Rate))+ \Pr({\Oc}_{\text{key}}(s,b,\Rate))                     
\end{align*}
Since $P(\hv) \in \Pc'$, due to the definition in \eqref{eq:OutageConstraint}, \eqref{eq:PowerConstraint2},
$\forall (s,b)$
\begin{align*}
\Pr({\Oc}_{\text{ch}}(s,b,\Rate)) &= \Pr(R_m(s,b)<\Rate) \\
                    &\stackrel{(f)}{=} \Pr\left(R_m(\Hm,P)<\frac{\expect[R_s(\Hm,P)]}{1-\epsilon}\right) \leq \epsilon
\end{align*}
where in $(f)$, we interchangeably use $R_m(s,b)\equiv R_m(\hv,P)$ due to stationarity of $P(\hv)$.
Note that, the events ${\Oc}_{\text{ch}}(s,b,\Rate)$ indexed by $(s,b)$ are i.i.d. Here, we introduce i.i.d. artificial outages
 ${\Oc}_{\text{a}}(s,b,\Rate)$ such that
 $$\Pr({\Oc}_{\text{ch}}(s,b,\Rate)\cup {\Oc}_{\text{a}}(s,b,\Rate)) =
 \epsilon, ~ \forall (s,b) $$
This would help us bound the probability of key outage.
For $(s,b)$, $s\neq 1$
\begin{align}
\Pr({\Oc}_{\text{key}}(s,b,\Rate))& = \Pr\left(\sum_{i=1}^B H(V(s-1,i)) - \sum_{i=1}^b H(K(s,i)) <0 \right) \nonumber\\
                   & = \Pr \bigg{(} \sum_{i=1}^B N(R_s(s-1,i)-\delta)- 
\sum_{i=1}^b N\Rate\ind(\bar{\Oc}_{\text{ch}}(s,i,\Rate)\cap \bar{\Oc}_{\text{a}}(s,i,\Rate)) <0 \bigg{)} \nonumber\\
                   & \leq \Pr\left(\sum_{i=1}^B \left[R_s(s-1,i)-\delta - \Rate\ind(\bar{\Oc}_{\text{ch}}(s,i,\Rate) \cap \bar{\Oc}_{\text{a}}(s,i,\Rate))\right] < 0\right) \label{appa:keyqueue}
\end{align}
Note that, the expression in \eqref{appa:keyqueue} represents a random walk with expected drift 
$\mu = \expect[R_s(\Hm,P)]-\delta - \Rate (1-\epsilon)$. For $\Rate\leq\frac{\expect[R_s(\Hm,P)]-\delta}{1-\epsilon}$, $\mu >0$, hence by 
the law of large numbers,
$\exists B_1>0$ such that $\forall B>B_1$, $\Pr({\Oc}_{\text{key}}(s,b,\Rate))< \delta$, $s \neq 1$.
Therefore, due to union bound and \eqref{appa:eq1}, \eqref{appa:oeq1}, \eqref{appa:oeq2}, for the choice $B=B_1$, $N = \max(N_1(B_1),N_2(B_1))$,
$\Pr({\Oc}_{\text{sec}}(s,b,\Rate,\delta))\leq \epsilon + \delta$,
which satisfies \eqref{outage}.

\textbf{Error Analysis:}
% Independent of whether the event ${\Oc}_{\text{enc}}(s,t,b)$ occurs or not,
% the encoding rate is equal to $R_m(s,b)-\delta$, which is below the supremum of achievable
% main channel rates. Therefore, random coding arguments guarantee us that $\forall B>0$,
%  $\exists N_1>0$ such that $\forall N>N_1$,
% $\Pr(E_1(s,b))\leq \gamma/B$ and $\Pr(E_2(s,b))\leq \gamma/B$. 
For $N,B$ such that $B= B_1$, $N = \max(N_1(B_1),N_2(B_1))$, $\forall (s,b)$, $s\neq 1$,
\begin{align*}
 \Pr(E(s,b,\delta)|\bar{\Oc}_{\text{enc}}(s,b,\Rate)) & \leq \Pr(W(s,b) \neq \hat{W}(s,b)) + 
 \Pr\left( \frac{1}{N} \|\Xm(s,b)\|^2 > P(s,b)+ \delta \right) \\
							      &= \Pr(W_{sec}(s,b)\oplus K(s,b) \neq \hat{W}_{sec}(s,b)\oplus \hat{K}(s,b)) + 
\Pr\left(\frac{1}{N} \| \Xm(s,b)\|^2 > P(s,b)+ \delta \right)  \\
							      &\leq \Pr(W_{sec}(s,b) \neq \hat{W}_{sec}(s,b)) + \Pr(K(s,b)\neq \hat{K}(s,b))   + \Pr\left(\frac{1}{N} \| \Xm(s,b)\|^2 > P(s,b)+ \delta \right)
\end{align*}
where 
 the first term can be bounded as
 $\Pr(W_{sec}(s,b) \neq \hat{W}_{sec}(s,b))\leq \frac{\delta}{3B}$ due to definition of $E_1(s,b)$,
and the choice of $N$. 
Similarly, the third term  can be bounded as 
$\Pr(W_{sec}(s,b) \neq \hat{W}_{sec}(s,b))\leq \delta/3$ due to definition of $E_3(s,b)$,
and the choice of $N$. 
The second term can be bounded as
\begin{align*}
\Pr(K(s,b)\neq \hat{K}(s,b)) &\stackrel{(a)}{\leq} 1- \prod_{i=1}^{B}\Pr(V(s-1,i) = \hat{V}(s-1,i)) \\
                             &\leq \sum_{i=1}^{B}\Pr(V(s-1,i)\neq \hat{V}(s-1,i)) \\
                             &\leq\sum_{i=1}^{B}\big{(}\Pr(E_{1}(s-1,i))\Pr(\bar{\Oc}_{\text{enc}}(s-1,i,\Rate))\\
                             + & \Pr(E_{2}(s-1,i))\Pr({\Oc}_{\text{enc}}(s-1,i,\Rate))\big{)}  \\
                             &\stackrel{(b)}{\leq} B\frac{\delta}{3B}                             
\end{align*}
where $(a)$ follows from the fact that keys used in $s$'th superblock are generated in $s-1$'th superblock, and $(b)$
follows due to the definitions of $E_1(s,b)$ and $E_2(s,b)$. 
Therefore, $\Pr(E(s,b,\delta)|\bar{\Oc}_{\text{enc}}(s,b,\Rate))\leq \delta$.
Finally,  
\begin{align*}
 \Pr(E(s,b,\delta)|\bar{\Oc}_{\text{sec}}(s,b,\Rate,\delta)) &= \Pr(\bar{\Oc}_{\text{enc}}(s,b,\Rate)|\bar{\Oc}_{\text{sec}}(s,b,\Rate,\delta)) \Pr(E(s,b,\delta)|\bar{\Oc}_{\text{enc}}(s,b,\Rate)) \\
								& \quad + \Pr({\Oc}_{\text{enc}}(s,b,\Rate)|\bar{\Oc}_{\text{sec}}(s,b,\Rate,\delta))
\Pr(E(s,b,\delta)|{\Oc}_{\text{enc}}(s,b,\Rate)) \\
                               &\leq  \Pr({\Oc}_{\text{enc}}(s,b,\Rate)|\bar{\Oc}_{\text{sec}}(s,b,\Rate,\delta)) + \Pr(E(s,b,\delta)|\bar{\Oc}_{\text{enc}}(s,b,\Rate))\\
                                 &\leq  \delta 
\end{align*}
where the last inequality follows from the fact that $\Pr({\Oc}_{\text{enc}}(s,b,\Rate)|\bar{\Oc}_{\text{sec}}(s,b,\Rate,\delta))=0$. This concludes the achievability.
Now, we prove the converse. Consider a power allocation function $P(s,b)$, which satisfies
the average power constraint 
\begin{align}
\limsup_{S,B \to \infty} \frac{1}{SB} \sum_{s=1}^S \sum_{b=1}^B P(s,b) \leq P_{\text{avg}} \label{appa:const1}
\end{align}
Let $\delta>0$.
%Consider the problem of communicating  the messages $W^{SB}= \{W(s,b)\}_{s=1,b=1}^{SB}$ over $S$ superblocks 
%(without the secrecy outage constraint).
It follows from the converse proof of ergodic secrecy capacity \cite{Secrecy:08}, and law of large numbers
 that $\exists B_1,N_1$ such that for every $S$, $B>B_1$, and $N>N_1$,
 the time-average equivocation rate\footnote{For any reliable code that yields vanishing probability of error 
as $S,B,N \to \infty$.} is bounded as
\begin{align}
\frac{1}{SBN}\sum_{s=1}^S\sum_{b=1}^B  H(W(s,b)|\Zm^{SB},\hv^{SB}) \nonumber\\
%\leq \sum_{s=1}^S\sum_{t=1}^B \frac{1}{SB} R_s(s,b)+\delta \\
\leq \limsup_{S,B \to \infty} \sum_{s=1}^S\sum_{b=1}^B \frac{1}{SB} R_s(s,b) +\delta \label{appa:conv1}
\end{align} 
If $\Rate$ is an $\epsilon$ achievable rate, 
then $\exists B_2,N_2$ such that $\forall B>B_2, N>N_2$
\begin{align}
  \frac{1}{SBN} &\sum_{s=1}^S\sum_{b=1}^B   H(W(s,b)|\Zm^{SB},\hv^{SB}) \nonumber\\
                    &\stackrel{(a)}{\geq} \sum_{s=1}^S\sum_{b=1}^B  
\frac{1}{SB} (\Rate-\delta)\ind(\bar{\Oc}_{\text{sec}}(s,b,\Rate,\delta)) \nonumber\\ 
                    &\stackrel{(b)}{\geq} (\Rate-\delta)(1-\epsilon-\delta) \label{appa:conv2}
\end{align}
where $(a)$ follows directly from the definition of the event 
$\bar{\Oc}_{\text{sec}}(s,b,\Rate,\delta)= \{H(W(s,b)|\Zm^{SB},\hv^{SB})\geq \Rate-\delta\}\cap
\left\{\frac{1}{N}I(\Xm(s,b);\Ym(s,b))\geq \Rate\right\}$, and $(b)$
follows from  applying the secrecy outage constraint \eqref{outage}, and the law of large numbers.

From \eqref{appa:conv1}, \eqref{appa:conv2}, it follows that any $\epsilon$-achievable rate $\Rate$ satisfies
\begin{align}
\Rate \leq {\Rate}^{\ast} = \limsup_{S,B \to \infty} \sum_{s=1}^S\sum_{b=1}^B \frac{1}{SB} R_s(s,b)/(1-\epsilon) 
\end{align}
Since secrecy outage probability has to be satisfied \eqref{outage}, for $(s,b)$, $s\neq 1$ channel
outage probability also has to be satisfied, i.e., 
$\Pr({\Oc}_{\text{ch}}(s,b,\Rate))\leq \epsilon$, which implies
\begin{align}
\Pr(R_m(s,b)< {\Rate}^{\ast}) \leq \epsilon \label{appa:const3}
\end{align}
Since $R_m(s,b)$ and $R_{s}(s,b)$ are both deterministic functions of the power $P(s,b)$ and instantaneous
channel gains $\hv(s,b)$, it follows that the power allocation function that maximizes ${\Rate}^{\ast}$
under the constraints \eqref{appa:const1}, \eqref{appa:const3} is a stationary function of instantaneous channel gains
$\hv(s,b)$. Interchanging the notations $P(s,b)\equiv P(\hv)$, $R_s(s,b) \equiv R_s(\hv,P)$ and $R_m(s,b) \equiv R_m(\hv,P)$,
we can see that 
for any $\epsilon$ achievable secrecy rate, the constraints 
\eqref{SecrecyCapacity}-\eqref{eq:PowerConstraint2} are satisfied, which completes the proof.

%%%%%%%%%%%%%%%%%%%%%%%%%%%%%%%%%%%%%%%%%%%%%%%%%%%%%%%%%%%%%%%%%%%%%%%%%%%%
%%%%%%%%%%%%%%%%%%%%%%%%%%%%%%%%%%%%%%%%%%%%%%%%%%%%%%%%%%%%%%%%%%%%%%%%%%%% 
\section{Proof of Theorem~\ref{MainCSICapacity}}\label{a:MainCSI}
The proof is very similar to the proof for full CSI, hence we only point out the differences. For full CSI, 
key generation occurs at the end of every \emph{block}, using privacy amplification.  Due to lack of eavesdropper
channel state at the legitimate nodes, this is no longer possible. However, as shown in \cite{Secrecy:08},
it is still possible to generate secret key bits over a \emph{superblock}. 
The following lemma replaces Lemma~\ref{l:privacyamp} in the full CSI case.
\begin{lemma}
Let us define $W_{X}(s) = \{W_{X}(s,b)\}_{b=1}^B$, where
\begin{align*}
W_{X}(s,b)=
\begin{cases}
[W_{sec}(s,b)~ W_{x1}(s,b)],~{\Oc}_{\text{enc}}(s,b,\Rate)\mbox{ does not occur}\\
[W_{x2}(s,b)],~{\Oc}_{\text{enc}}(s,b,\Rate)\mbox{ occurs}
\end{cases} 
\end{align*}
and similarly define $\hat{W}_{X}(s)$. 
There exists $N_2>0,B_1>0$ such that, $\forall N>N_2,B>B_1$, and for any superblock $s$, 
the transmitter and receiver can generate 
secret key bits $V(s) = G(W_X(s))$ and $\hat{V}(s) = G([\hat{W}_{X}(s))$
respectively, 
such that $V(s) = \hat{V}(s)$ if none of the error events $\{E_i(s,b)\}_{b=1}^B$, $i \in \{1,2\}$   occur, and
\begin{align} 
H(V(s)) &= NB(\expect[R_s(\Hm,P)] - \delta) \label{appb:privacyamp1}\\
\frac{1}{NB} I(V(s);\Zm^{SB},\hv^{SB},G) &\leq \delta \label{appb:privacyamp2} 
\end{align}
\end{lemma}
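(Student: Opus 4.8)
The plan is to mimic the privacy-amplification argument of Lemma~\ref{l:privacyamp}, but applied at the granularity of an entire superblock rather than a single block, since without eavesdropper CSI the legitimate nodes cannot identify within a block how much secret key is extractable. First I would collect, over the $B$ blocks of superblock $s$, the confusion messages $W_X(s)=\{W_X(s,b)\}_{b=1}^B$ defined in the statement; when no encoder outage occurs this carries $N(R_m(s,b)-\delta)$ bits and when one does it carries the same amount via $W_{x2}$, so in all cases the receiver recovers $W_X(s)$ exactly provided none of the error events $E_i(s,b)$ fire (this is where $V(s)=\hat V(s)$ comes from, for any choice of the universal hash $G$). The target output length is $r=NB(\expect[R_s(\Hm,P)]-\delta)$, and the key estimate to establish is a lower bound on the conditional smooth min-entropy of $W_X(s)$ given the eavesdropper's entire view $(\Zm^{SB},\hv^{SB})$.

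The key steps, in order. (i) Condition on the channel realization sequence $\hv^{SB}$; then over the superblock the total rate the legitimate receiver obtains is $\sum_b N(R_m(\hv(s,b),P)-\delta)$ and the eavesdropper's information is at most $\sum_b N(R_m(\hv(s,b),P)-R_s(\hv(s,b),P)-\delta)$, so the conditional entropy $H(W_X(s)\mid \Zm(s,:),\hv^{SB})\geq N\sum_b R_s(\hv(s,b),P) - NB\cdot O(\delta)$, using the Markov chain $W_X(s)\to\Xm(s,:)\to\Zm(s,:)$ exactly as in \eqref{appa:privampeq1}. (ii) Invoke the law of large numbers over the $B$ i.i.d.\ blocks: for $B$ large, $\frac1B\sum_b R_s(\hv(s,b),P)\geq \expect[R_s(\Hm,P)]-\delta$ with probability approaching one, and on the complement event (which we can absorb into the outage/error budget) we simply do not claim the bound; this replaces the per-block statement by the per-superblock statement $H(W_X(s)\mid\Zm^{SB},\hv^{SB})\geq NB(\expect[R_s(\Hm,P)]-O(\delta))$. (iii) Pass from Shannon entropy to $\delta'$-smooth min-entropy via Theorem~1 of \cite{Nascimento} applied to the superblock-length input (block length $NB$), exactly as in \eqref{appa:privampeq4}; and handle the quantization $\Zm^\Delta=\phi(\Zm)$ of the eavesdropper's continuous observation as in \eqref{appa:privampeq5}, choosing $\Delta$ small enough that the entropy loss is below $\delta$. (iv) Apply the privacy amplification theorem (Theorem~3 of \cite{Bennett:95}) with output length $r=NB(\expect[R_s(\Hm,P)]-\delta)$: since $r$ is below the min-entropy by a positive gap proportional to $NB$, the leakage term $2^{r-H_\infty^{\delta'}(W_X(s)\mid\Zm^\Delta)}/\ln 2$ decays exponentially in $NB$, giving $\frac{1}{NB}I(V(s);\Zm^{SB},\hv^{SB},G)\leq\delta$ for $N,B$ large, which is \eqref{appb:privacyamp2}; and \eqref{appb:privacyamp1} holds by construction since $H(V(s))=r$ and the hash output is (asymptotically) uniform. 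Finally I would choose $N_2$ and $B_1$ as the maxima of the thresholds arising in steps (ii)--(iv).

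The main obstacle is step (ii): in the full-CSI proof the per-block secrecy rate $R_s(s,b)$ is achievable \emph{within} each block because the transmitter knows $h_e(s,b)$ and can size the confusion message accordingly, but here the confusion rate is tied to $R_m$ only, and the extractable secrecy $N\sum_b R_s(\hv(s,b),P)$ emerges only in aggregate over the superblock after the eavesdropper's channel is "averaged out." Making this rigorous requires carefully tracking that the eavesdropper's total mutual information over the superblock is genuinely $\le NB\,\expect[R_m-R_s]+NB\,O(\delta)$ despite correlations introduced by the shared hash function and the outage/artificial-outage structure, and that the event where the empirical average of $R_s$ falls short of its mean can be folded into the $\delta$-slack without disturbing the equivocation bound \eqref{appa:oeq1} that this lemma feeds into. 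A secondary technical point is ensuring the quantization argument of \cite{Nascimento,TCover} still goes through when the "block length" is $NB$ and $B\to\infty$, i.e.\ that $\Delta$ can be chosen uniformly; this is routine but needs to be stated.
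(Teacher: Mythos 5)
Your proposal is correct and follows essentially the same route the paper intends: the paper omits this proof entirely, stating only that it mirrors Lemma~\ref{l:privacyamp}, and your superblock-level adaptation (aggregating the confusion messages over the $B$ blocks, bounding the eavesdropper's total information by $NB\,\expect[R_m-R_s]$ up to $\delta$-terms, invoking the law of large numbers to absorb the event where $\frac{1}{B}\sum_b R_s$ falls short of its mean into the $\delta$-slack, and then applying the quantization, smooth min-entropy, and universal-hashing steps of \cite{Nascimento,Bennett:95} at blocklength $NB$) is exactly that intended argument, spelled out in more detail than the paper itself provides.
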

The proof is very similar to the proof of  Lemma~\ref{l:privacyamp},
and is omitted here. Following the same equivocation and error analysis
in the full CSI case, we can see that any rate $\Rate< \expect[R_s(\Hm,P)]/(1-\epsilon)$ %with the constraints \eqref{eq:OutageConstraint2}, \eqref{eq:PowerConstraint3} 
is achievable. The converse proof is also
the same as in full CSI case, and is omitted here.
%%%%%%%%%%%%%%%%%%%%%%%%%%%%%%%%%%%%%%%%%%%%%%%%%%%%%%%%%%%%%%%%%%%%%%%%%%%%
%%%%%%%%%%%%%%%%%%%%%%%%%%%%%%%%%%%%%%%%%%%%%%%%%%%%%%%%%%%%%%%%%%%%%%%%%%%%
\section{Proofs in Section~\ref{s:FullCSIPower}}
\subsection{Proof of Lemma~\ref{bmaxlemma}}\label{a:bmaxlemma}
The parameter ${\Rate}_{\max}$ is the maximum value for which the problem \eqref{eq:PowerControl}-\eqref{serviceconstraint} has a solution; hence
the average power constraint \eqref{powerconstraint} is active. Moreover, the outage constraint \eqref{serviceconstraint}
is also active,
and 
due to the fact that $R_m(\hv,P)$ is a concave increasing function of $P(\hv)$, we have $\Pr(R_m(\Hm,{\Rate}_{\max})={\Rate}_{\max})= (1-\epsilon)$,
since otherwise one can further increase ${\Rate}_{\max}$ to find a power allocation function that
satisfies the equality. Since for a given $\hv$, the power allocation
function that yields ${\Rate}_{\max}$ is $P_{\text{inv}}(\hv,{\Rate}_{\max})$, we have
\begin{align*}
P_{\text{avg}} = \int_{\hv \in \Kc} P_{\text{inv}}(\hv,{\Rate}_{\max})f(\hv)d\hv  
\end{align*}
where $\Kc$ the set of channel gains for which the system operates at rate ${\Rate}_{\max}$, and $\Pr(\Hm \in \Kc)=(1-\epsilon)$.
 The set $\Kc$ contains channel gains $\hv$ for which $P_{\text{inv}}(\hv,{\Rate}_{\max})$ takes minimum values, so that the average power constraint is
 satisfied for the maximum possible ${\Rate}$.
Since $P_{\text{inv}}(\hv,P)=\frac{2^{\Rate}-1}{h_m}$ is a decreasing function of $h_m$, one can see that the choice of $\Kc$ that yields ${\Rate}_{\max}$
is $\Kc=\{ \hv: h_m \geq c \}$. Since the probability density function of $\Hm$ is well defined, $\Pr(H_m=0)=0$, hence $c>0$,
which, along with $P_{\text{avg}}>0$, implies that $\Rate_{\max}>0$.
%%%%%%%%%%%%%%%%%%%%%%%%%%%%%%%%%%%%%%%%%%%%%%%%%%%%%%%%%%%%%%%%%%%%%%%%%%%%
\subsection{Proof of Lemma~\ref{l:Rb}}\label{al:Rb}
Let ${\Rate}_{\max}>{\Rate}>{\Rate}'>0$. Then, any $P(\hv)$ that satisfies $\Pr(R_m(\Hm,P)<{\Rate})\leq \epsilon$, would also satisfy
$\Pr(R_m(\Hm,P)<{\Rate}')\leq \epsilon$. So, the set of power allocation functions that satisfy
\eqref{serviceconstraint} shrinks as ${\Rate}$ increases, hence $\expect[R_s(\Hm,P^{\Rate})]$ is a non-increasing function of ${\Rate}$.
Now, we prove that $\expect[R_s(\Hm,P^{\Rate})]$ is continuous. 
From Lemma~\ref{l:FullCSIPower}, we know that
\begin{align*}
{P}^{{\Rate}}(\hv) &= P_{\text{wf}}(\hv,\lambda_{\Rate})+ \ind(\hv \in \Gc(\lambda_{\Rate},k_{\Rate})) (P_{\text{inv}}(\hv,{\Rate}')-P_{\text{wf}}(\hv,\lambda_{\Rate}))^+ \\
{P}^{{\Rate}'}(\hv) &= P_{\text{wf}}(\hv,\lambda_{\Rate'})+ \ind(\hv \in \Gc(\lambda_{\Rate'},k_{\Rate'})) (P_{\text{inv}}(\hv,{\Rate'})-P_{\text{wf}}(\hv,\lambda_{\Rate'}))^+
\end{align*}
where $(\lambda_{\Rate},k_{\Rate})$ and $(\lambda_{\Rate'},k_{\Rate'})$ are constants that satisfy
\eqref{optimalPower:3} and \eqref{optimalPower:4} with equality with respect to parameters $\Rate$ and $\Rate'$, respectively.
%Let $b_{\max}>b>b'>0$. Denote the power policies that yields $R(b)$ 
%and $R(b')$ are $P^b(\hv)$ and $P^{b'}(\hv)$, respectively.
Let us define another power allocation function $\tilde{P}^{{\Rate}'}(\hv)$ such that
$$\tilde{P}^{{\Rate}'}(\hv) = P_{\text{wf}}(\hv,\lambda_{\Rate})+ \ind(\hv \in \Gc(\lambda_{\Rate},k_{\Rate})) (P_{\text{inv}}(\hv,{\Rate}')-P_{\text{wf}}(\hv,\lambda_{\Rate}))^+ $$
It is easy to see that
$\expect[R_s(\Hm,\tilde{P}^{{\Rate}'})] \leq \expect[R_s(\Hm,P^{{\Rate}'})]$.
Combining the facts i) for any $\hv$, $P_{\text{inv}}(\hv,{\Rate})$ is a 
continuous function of ${\Rate}$ ii) $R_s(\hv,P)$ is a continuous function of $P$
iii) integration preserves continuity, we can see 
that $\int R_s(\hv,P_{\text{inv}}^{\Rate}) - R_s(\hv,P_{\text{inv}}^{{\Rate}'}) 
\ind(\hv \in \Gc(\lambda_{\Rate},k_{\Rate})) f(\hv)d\hv$ is a continuous function 
of ${\Rate}'$. Hence, for any $\gamma>0$, one can find a $\delta>0$ such that for any $\Rate'<\Rate$, $|\Rate'-\Rate|<\delta$,
\begin{align}
\expect[R_s(\Hm,P^{\Rate})]  - \expect[R_s(\Hm,P^{{\Rate}'})] & \leq 
\expect[R_s(\Hm,P^{{\Rate}})]- \expect[R_s(\Hm,\tilde{P}^{{\Rate}'})]  \\                  
      & \leq \int R_s(\hv,P_{\text{inv}}^{\Rate}) - R_s(\hv,P_{\text{inv}}^{{\Rate}'}) \ind(\hv \in \Gc(\lambda_{\Rate},k_{\Rate})) f(\hv)d\hv \nonumber \\
          &\leq \gamma \nonumber
\end{align}
which proves that $\expect[R_s(\Hm,P^{{\Rate}})]$ is a left continuous function of ${\Rate}$. 
Following a similar approach, it can also be shown that $\expect[R_s(\Hm,P^{\Rate})]$ is continuous
from the right.

%%%%%%%%%%%%%%%%%%%%%%%%%%%%%%%%%%%%%%%%%%%%%%%%%%%%%%%%%%%%%%%%%%%%%%%%%%%%
\subsection{Proof of Lemma~\ref{fullCSIuniquepwr}}\label{a:fullCSIuniquepwr}
%Note that $\expect[R_s(\Hm,P^{\Rate})]>0$, since otherwise $C_F^{\epsilon}$ is $0$, because of the fact that %$\expect[R_s(\Hm,P^{\Rate})]$ is a non-increasing function
%of ${\Rate}$.
%Note that $\Rate_{\max}>0$, since the probability density function of $\Hm$ is well defined, therefore $\Pr(\Hm=0)=0$.
If $\expect[R_s(\Hm,P^{\Rate})]|_{\Rate=0}=0$,  then the unique solution of $\Rate = \expect[R_s(\Hm,P^{\Rate})]/(1-\epsilon)$
is $\Rate=0$. So, consider $\expect[R_s(\Hm,P^{\Rate})]|_{\Rate=0}=0$.
 It is easy to see that, $\frac{\expect[R_s(\Hm,P^{{\Rate}_{\max}})]}{{\Rate}_{\max}}\leq(1-\epsilon)$, since
\begin{align*}
\expect[R_s(\Hm,P^{{\Rate}_{\max}})] &= \int_{h_m \geq c} R_s(\hv ,P) f(\hv) d\hv \\
                                   &\leq R_m(\hv ,P)(1-\epsilon) \\                               
                                   &= {\Rate}_{\max} (1-\epsilon)
\end{align*}
follows from definition of parameter $c$, and the inequality $R_s(\hv,P) \leq R_m(\hv,P)$.
Combining the facts that, the function  $\frac{\expect[R_s(\Hm,P^{\Rate})]}{{\Rate}}$ is continuous and strictly decreasing on $(0,{\Rate}_{\max}]$, $\lim_{{\Rate}\to 0^+}\frac{\expect[R_s(\Hm,P^{\Rate})]}{{\Rate}}= \infty$
and $\frac{\expect[R_s(\Hm,P^{{\Rate}_{\max}})]}{{\Rate}_{\max}}\leq(1-\epsilon)$, by the intermediate value theorem,
 there exists a unique $\Rate>0$, which satisfies $\Rate=\expect[R_s(\Hm,P^{\Rate})]/(1-\epsilon)$.
%%%%%%%%%%%%%%%%%%%%%%%%%%%%%%%%%%%%%%%%%%%%%%%%%%%%%%%%%%%%%%%%%%%%%%%%%%%%
%%%%%%%%%%%%%%%%%%%%%%%%%%%%%%%%%%%%%%%%%%%%%%%%%%%%%%%%%%%%%%%%%%%%%%%%%%%%
\section{Proof of Lemma~\ref{l:FullCSIPower}}\label{a:FullCSIPower}
We use Lagrangian optimization approach to find $P^{\Rate}(\hv)$. 
We can express $\expect[R_s(\Hm,P^{\Rate})]$ given in \eqref{eq:PowerControl}-\eqref{serviceconstraint} as
\begin{align}
\max_{P(\hv),\Gc}J(P(\Hm)) \nonumber\\
\textrm{s.t}\hspace{0.1cm} R_m(\hv,P) \geq {\Rate}, \quad \forall \hv \in \Gc\nonumber\\
\Pr(\Hm \in \Gc)= 1-\epsilon \label{p2problem}
\end{align}
where the Lagrangian $J(P(\Hm))$ is given by the equation\footnote{Note that we leave the constraint \eqref{serviceconstraint} as is, and
not include it in $J(P(\Hm))$.}
\begin{align}
J(P(\Hm))&=\int R_s(\hv,P)f(\hv)d\hv  \nonumber\\
&-\lambda \left[ \int P(\hv)f(\hv)d\hv -P_{\text{avg}} \right]  \label{Lagrangian2}
\end{align}
Here, $\Gc$ is a set which consists of $\hv$ for which
 $R_m(\hv,P) \geq {\Rate}$ must be satisfied. We will show in this proof that it is of the form \eqref{setG}. 
% \begin{equation}
% \expect[R_s(\Hm,P^{\Rate})]=\min_{\lambda,\lambda_2}\max_{P(\Hm)}J(P(\Hm))
% \end{equation}
% where the Lagrangian $J(P(\Hm))$ is given by the equation
% \begin{align}
% J(P(\Hm))&=\int R_s(\hv,P)f(\hv)d\hv  \nonumber\\
% &-\lambda \left[ \int P(\hv)f(\hv)d\hv -P_{\text{avg}} \right]  \nonumber\\
% &- \lambda_2 \left[ \int_{\{\hv: R_m(\hv,P)<{\Rate}\}} f(\hv)d\hv -\epsilon \right] \label{Lagrangian}
% \end{align}
% Parameters $\lambda,\lambda_2$ are Lagrange multipliers, and
%  $f(\hv)$ is the probability density function of channel gains.
% The third term in \eqref{Lagrangian} makes the problem challenging, as the integral is over 
% the region $(\hv: R_m(\hv,P)<{\Rate})$, which is a function of the power control policy, hence if we take derivative of $J(P(\Hm))$
% with respect to $P(\Hm)$, discontinuities will arise.
% Rather than directly solving $J(P(\Hm))$, we consider the following equivalent problem.
% First, we define $J'(P(\Hm))$. 
% \begin{align}
% J'(P(\Hm))&=\int R_s(\hv,P)f(\hv)d\hv  \nonumber\\
% &-\lambda \left[ \int P(\hv)f(\hv)d\hv -P_{\text{avg}} \right]  \label{Lagrangian2}
% \end{align}
% In the following parts of the proof, we consider  $\lambda>0$ is fixed.
% We artificially define a measurable constraint set $\Gc$, which consists of $\hv$ for which
%  $R_m(\hv,P) \geq {\Rate}$ must be satisfied. 
% Let us consider the problem
% \begin{align}
% \max_{P(\Hm),\Gc}J'(P(\Hm)) \nonumber\\
% \textrm{s.t}\hspace{0.1cm} R_m(\hv,P) \geq {\Rate}, \quad \forall \hv \in \Gc\nonumber\\
% \Pr(\Hm \in \Gc)= 1-\epsilon \label{p2problem}
% \end{align}
This problem is identical to \eqref{eq:PowerControl}, since their constraint sets are identical.
Hence solution of this problem would also
yield $P^{\Rate}(\hv)$. 
In the following two-step approach, we proceed to find $P^{\Rate}(\hv)$.
Let us fix $\lambda>0$.
%%%%%%%%%%%%%%%%%%%%%%%
\begin{enumerate}
\item 
For any $\Gc \subseteq [0, \infty) \times [0, \infty)$, we find $P_{\Gc}(\hv)$, which is defined as
\begin{align}
P_{\Gc}(\hv)=\arg\max_{P(\hv)} J(P(\Hm)) \nonumber\\
\textrm{s.t} \hspace{0.5cm} R_m(\hv,P)\geq {\Rate}, \forall \hv \in \Gc \label{objective1}
\end{align} 
\item Using the result of part 1, we find $P^{\Rate}(\hv)$, by finding the set $\Gc$ that maximizes $J'(P(\Hm))$, subject
to a constraint $\Pr(\Hm \in \Gc)= 1- \epsilon$.
\end{enumerate}
%%%%%%%%%%%%%%%%%%%%%%%%%%%%%%%%%%
We start with step 1. Since both $\lambda$ and ${\Rate}$ are fixed, 
therefore we drop them from $P_{\text{inv}}(\cdot)$ and
$P_{\text{wf}}(\cdot)$, in the following parts to simplify the notation.
\begin{lemma}\label{appc:arbitraryGlemma}
If the problem \eqref{objective1} has a feasible solution, then it could be expressed as 
\begin{equation}
P_{\Gc}(\hv)=P_{\text{wf}}(\hv)+[P_{\text{inv}}(\hv)-P_{\text{wf}}(\hv)]^+\textbf{1}(\hv \in \Gc) \label{objective1soln}
\end{equation}
where $P_{\text{wf}}(\hv)$ and $P_{\text{inv}}(\hv)$ are given in \eqref{Waterfilling} and \eqref{ChannelInversion}, respectively. 

\end{lemma}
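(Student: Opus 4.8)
The plan is to exploit the separable structure of the Lagrangian in \eqref{Lagrangian2} and reduce \eqref{objective1} to a family of one-dimensional, concave scalar maximizations indexed by $\hv$. First I would rewrite $J(P(\Hm))$ as $\int\bigl[R_s(\hv,P)-\lambda P(\hv)\bigr]f(\hv)\,d\hv+\lambda P_{\text{avg}}$; the additive constant $\lambda P_{\text{avg}}$ is irrelevant to the maximization, and the constraint $R_m(\hv,P)\geq\Rate$ in \eqref{objective1} is imposed separately at each $\hv\in\Gc$. Hence the optimization decouples over channel states: for each fixed $\hv$ we must maximize the scalar function $g_{\hv}(p):=R_s(\hv,p)-\lambda p$ over $p\in[0,\infty)$, subject to the extra lower bound $p\geq P_{\text{inv}}(\hv,\Rate)$ precisely when $\hv\in\Gc$. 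This bound is equivalent to the channel-rate constraint because, by \eqref{eqn:Rm(t)}, $R_m(\hv,p)=\log(1+ph_m)$ is strictly increasing in $p$, with $R_m(\hv,p)\geq\Rate \iff p\geq(2^{\Rate}-1)/h_m=P_{\text{inv}}(\hv,\Rate)$ (cf.\ \eqref{ChannelInversion}). The resulting pointwise optimizer is given by an explicit formula in $\hv$, so measurability of $P_{\Gc}(\cdot)$ is immediate once $\Gc$ is measurable.

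The crux — and the step I expect to be the only real work — is to show that $g_{\hv}$ is concave on $[0,\infty)$, and that its unconstrained maximizer over $p\ge0$ is exactly $P_{\text{wf}}(\hv,\lambda)$ of \eqref{Waterfilling}. For $h_m\le h_e$ this is trivial: by \eqref{eqn:Rs(t)}, $R_s(\hv,p)\equiv0$, so $g_{\hv}(p)=-\lambda p$ is linear and maximized at $p=0$. For $h_m>h_e$ the clipping in \eqref{eqn:Rs(t)} is inactive on $p\ge0$, so $g_{\hv}(p)=\log(1+ph_m)-\log(1+ph_e)-\lambda p$ and a direct differentiation gives $g_{\hv}''(p)=-\left(\frac{h_m}{1+ph_m}\right)^{2}+\left(\frac{h_e}{1+ph_e}\right)^{2}<0$, since $t\mapsto t/(1+pt)$ is increasing; thus $g_{\hv}$ is strictly concave and, because $g_{\hv}(p)\to-\infty$ as $p\to\infty$, its maximum over any interval $[\ell,\infty)$ is attained (so \eqref{objective1} is always feasible and the supremum is a genuine maximum). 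Solving $g_{\hv}'(p)=\frac{h_m}{1+ph_m}-\frac{h_e}{1+ph_e}-\lambda=0$ with the substitution $a=1/h_m$, $b=1/h_e$ turns the stationarity condition into $(a+p)(b+p)=(b-a)/\lambda$, whose positive root is $\tfrac12\bigl(\sqrt{(b-a)^2+(4/\lambda)(b-a)}-(a+b)\bigr)$; this is exactly \eqref{Waterfilling}, and the $[\,\cdot\,]^+$ there absorbs the case $g_{\hv}'(0)=h_m-h_e-\lambda\le0$ (and the degenerate case $h_m\le h_e$), where the maximizer is $p=0$. Verifying carefully that the closed form \eqref{Waterfilling} really is the root of $g_{\hv}'=0$ is the main obstacle; everything else is bookkeeping.

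It remains to combine the cases. For $\hv\notin\Gc$ there is no lower bound, so $P_{\Gc}(\hv)=P_{\text{wf}}(\hv,\lambda)$, which agrees with \eqref{objective1soln} because the indicator vanishes. For $\hv\in\Gc$ we maximize the concave $g_{\hv}$ over $[P_{\text{inv}}(\hv,\Rate),\infty)$: if $P_{\text{wf}}(\hv,\lambda)\geq P_{\text{inv}}(\hv,\Rate)$ the unconstrained peak is feasible and $P_{\Gc}(\hv)=P_{\text{wf}}(\hv,\lambda)$; if $P_{\text{wf}}(\hv,\lambda)<P_{\text{inv}}(\hv,\Rate)$ then the feasible interval lies entirely to the right of the peak, where $g_{\hv}$ is non-increasing, so the maximizer is the left endpoint $P_{\text{inv}}(\hv,\Rate)$. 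In either case $P_{\Gc}(\hv)=\max\bigl(P_{\text{wf}}(\hv,\lambda),P_{\text{inv}}(\hv,\Rate)\bigr)=P_{\text{wf}}(\hv,\lambda)+\bigl[P_{\text{inv}}(\hv,\Rate)-P_{\text{wf}}(\hv,\lambda)\bigr]^+$, which together with the $\hv\notin\Gc$ case yields precisely \eqref{objective1soln}.
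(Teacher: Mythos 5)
Your proposal is correct and follows essentially the same route as the paper: both reduce the problem to pointwise stationarity of $R_s(\hv,P)-\lambda P(\hv)$ (the paper phrases this via Euler--Lagrange equations on the set where the rate constraint is inactive), identify the unconstrained optimizer with $P_{\text{wf}}(\hv,\lambda)$ and the binding-constraint value with $P_{\text{inv}}(\hv,\Rate)$, and combine them into $\max\bigl(P_{\text{wf}},P_{\text{inv}}\bigr)$ on $\Gc$. Your treatment is somewhat more self-contained, since you verify concavity and the left-endpoint argument for the active-constraint case directly rather than citing concavity from prior work, but the substance of the argument is the same.
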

%%%%%%%%%%%%%%%%%%%%%%%%%%%%%%%%%%%%%%%%%%%%%
%%%%%%%%%%%%%%%%%%%%%%
\begin{proof}
%This is a standard variational optimization problem, and the proof
%is similar to the proof in \cite{Luo:03}. 
We will interchangeably use $\hv = [h_m~h_e].$
Due to \eqref{objective1},
$R_m(\hv,P)=\log(1+P(\hv)h_m)\geq {\Rate}$, $\forall \hv \in \Gc$.
Hence, there is a minimum power constraint for set $\Gc$, as
\begin{equation}
P(\hv)\geq P_{\text{inv}}(\hv)=\frac{2^{\Rate}-1}{h_m}  , \forall \hv \in \Gc \label{minpwrsoln}
\end{equation}
%However, the only constraint for $P(\hv)$ outside set $\Gc$ is, $P(\hv)\geq 0$.
%%%%%%%%%%%%%%%%%%%
%$P_min$ is the minimum average power required to satisfy the minimum main
%channel rate constraint. $P_min$ is found as
%\begin{equation}
%P_{min}=\int_{\Gc}\frac{2^b-1}{h_m}f(\hv)d\hv  \nonumber
%\end{equation}
%Therefore, for any $\bar{P}< P_{min}$, problem \eqref{objective1} does not have a solution.
%%%%%%%%%%%%%%%%%%%
Define $\Kc$ as the set in which
the minimum power constraint \eqref{minpwrsoln} is not active, i.e.,
\begin{equation}
\Kc= \left\{  \hv \in \Gc : P(\hv) > P_{\text{inv}}(\hv)
\right\} \cup  \bar{\Gc}  \nonumber
\end{equation}
%Also, we define the boundary points $\hv \in \Gc_c$, as
%\begin{equation}
%\Gc_c=\left\{  \hv \in \Gc : P(\hv) = \frac{2^b-1}{h_m} \right\}   \nonumber
%\end{equation}
where $\bar{\Gc}$ is complement of $\Gc$. First, we focus on the solution in the nonboundary set.
Since the optimal solution must satisfy the Euler-Lagrange
equations,
\begin{equation}
\frac{dJ(P(\hv))}{dP(\hv)}=0, \hv \in \Kc \nonumber
\end{equation}
For $\hv \in \Kc$, we get the following condition
\begin{equation}
\frac{h_m}{1+h_mP(\hv)}-\frac{h_e}{1+h_eP(\hv)}-\lambda=0  \nonumber
\end{equation}
whose solution yields 
\begin{align}
P(\hv)=
\frac{1}{2}\left[\sqrt{\left(\frac{1}{h_e}-\frac{1}{h_m}\right)^2+\frac{4}{\lambda}\left(\frac{1}{h_e}-\frac{1}{h_m}\right)}
-\left(\frac{1}{h_e}+\frac{1}{h_m}\right)\right]\nonumber
\end{align}
If for some $\hv \in \Kc$, the value $P(\hv)$ is negative, then
due to the concavity of $J(P(\hv))$ with respect to $P(\hv)$, the optimal 
value of $P(\hv)$ is zero \cite{Secrecy:08}. Therefore, the solution yields
\begin{align}
P(\hv)=P_{\text{wf}}(\hv), \quad \forall \hv \in \Kc
\end{align}
Combining the result with the minimum power constraint inside set $\Gc$,
the solution of \eqref{objective1} yields
%\begin{align}
%P(\hv)=\quad& P_{\text{wf}}(\hv), &  & \hv \in \Gc, P(\hv)>P_{\text{inv}}(\hv) \nonumber\\
%&P_{\text{inv}}(\hv), & & \hv \in \Gc, P_{\text{wf}}(\hv)\leq P_{\text{inv}}(\hv)     \nonumber\\
%&P_{\text{wf}}(\hv), & & \hv \notin \Gc \label{objective1soln2}                            
%\end{align}
%Simplifying \eqref{objective1soln2} would yield \eqref{objective1soln}
%
%\begin{align}
%P_{\Gc}(\hv) = \left\{ 
%\begin{array}{l l}
%  P_{\text{inv}}(\hv) & \quad \mbox{if $\hv \in \Gc$ \& $P_{\text{wf}}(\hv)\leq P_{\text{inv}}(\hv)$}\\
%  P_{\text{wf}}(\hv) & \quad \mbox{otherwise}\\ 
%\end{array} \right. 
%\end{align}
\eqref{objective1soln}, which concludes the proof.
%Further, showing that the optimal solution satisfies the Karush Kuhn
%Tucker(\textrm{K.K.T}) equations in the boundary set, which is
%\begin{equation}
%\frac{dJ'(P(\hv))}{dP(\hv)}\leq 0, \hv \notin \bar{\Gc_c}    \nonumber
%\end{equation}
%\begin{eqnarray}\label{kkt2}
%\frac{h_m}{1+h_mP(\hv)}-\frac{h_e}{1+h_eP(\hv)}-\lambda \leq 0
%\end{eqnarray}
%Note that in boundary points, the main channel constraint is active. Hence, the
%power control policy in the boundary is, $P(\hv)=\frac{2^b-1}{h_m}$. Placing $P(\hv)$ in
%equation \eqref{kkt2}, the region in which channel inversion is used is found.
%For $h_e=0$, we get
%\begin{equation}
%h_m \leq \lambda 2^b
%\end{equation}
%Also, solving for nonzero $h_e$, we find the results \eqref{pwrsoln},\eqref{pwrsoln2}. 
\end{proof} 
%%%%%
Now, we find $P^{\Rate}(\hv)$.
We proceed by further simplifying the Lagrangian in \eqref{Lagrangian2},
for the case where $P(\hv)=P_{\Gc}(\hv)$, for a given $\Gc$ as follows.
\begin{align}
J(P_{\Gc}(\Hm))&= \quad \int_{\hv  \in \Gc}\left[ R_s(\hv,P)-\lambda P(\hv) \right]f(\hv)d\hv \nonumber\\   
         & \quad + \int_{\hv \notin \Gc}\left[ R_s(\hv,P)-\lambda P(\hv) \right]f(\hv)d\hv \nonumber\\   
         & = \quad \int \left[ R_s(\hv,P_{\text{wf}})-\lambda P_{\text{wf}}(\hv) \right]f(\hv)d\hv \nonumber\\ 
         & \quad+ \int_{\Gc} \left\{ \left[R_s(\hv,P_{\text{inv}})-R_s(\hv,P_{\text{wf}})\right]^+ \right.\nonumber\\
         & \qquad-\lambda \left.\left[P_{\text{inv}}(\hv)-P_{\text{wf}}(\hv)\right]^+\right\} f(\hv)d\hv\label{Jeqnfinal2}  
\end{align}
After this simplification, the first term in \eqref{Jeqnfinal2} does not depend on $\Gc$.
We conclude the proof by showing that 
$P^{\Rate}(\hv) = P_{\Gc^{\ast}}(\hv)$
where the set $\Gc^{\ast}$ is defined as follows, 
\begin{align}
\Gc^{\ast}=\left\{ \hv: \left[R_s(\hv,P_{\text{inv}})-R_s(\hv,P_{\text{wf}})\right]^+ 
-\lambda \left[P_{\text{inv}}(\hv)-P_{\text{wf}}(\hv)\right]^+ \geq k \right\}\label{optimalG}
\end{align}
where the parameter $k$ is a constant that satisfies $\Pr (\Hm \in \Gc^{\ast}) = (1- \epsilon)$.
%So, according to lemma ~\ref{p1p2}, this is the optimal solution to problem \eqref{OptimalAlphapre}.
%%%%%%%%%%%%%%%%%%%%%%%%%%%%%%%%%%%%%%%%%%%%
%%%%%%%%%%%%%%%%%%%%%%%%%%%%%%%%%%%%%%%%%%%% 
%Note that \eqref{p2problem} is an extension of \eqref{objective1}, where the arbitrary region $\Gc$ is constrained such that
%$\Pr (\hv \in \Gc)$ is constant. Note that, for any constant $\Gc$, the solution to \eqref{objective1} is $P_{\Gc}(\hv)$,
%hence the power allocation function that maximizes \eqref{optimalG} is $P_{\Gc^k}(\hv)$, where $\Gc$ satisfies $P(\hv \in \Gc)=1-\epsilon$.
We prove this by contradiction. First define 
$\xi(\hv)= \left[R_s(\hv,P_{\text{inv}})-R_s(\hv,P_{\text{wf}})\right]^+ -\lambda \left[P_{\text{inv}}(\hv)-P_{\text{wf}}(\hv)\right]^+$. 
Then,  it follows from \eqref{Jeqnfinal2} that $\Gc^{\ast}$ is the set that maximize \eqref{Jeqnfinal2}, so
\begin{equation}
\Gc^{\ast}= \arg\max_{\Gc} \int_{\Gc} \xi(\hv) f(\hv)d\hv   \nonumber
\end{equation}
%This problem resembles the bin packing problem. 
Assume that some other $\Gc' \neq \Gc^{\ast}$ is optimal, where $\Pr(\Hm \in \Gc')=1-\epsilon$. However, we have
\begin{align}
&\hspace{1cm} J(P_{\Gc^{\ast}}(\Hm))-J(P_{\Gc'}(\Hm)) \nonumber\\
&=\int_{\Gc^{\ast}}\xi(\hv)f(\hv)d\hv-\int_{\Gc'}\xi(\hv)f(\hv)d\hv \nonumber\\
&=\int_{\Gc^{\ast} \char`\\ \Gc'}\xi(\hv)f(\hv)d\hv-\int_{\Gc' \char`\\ \Gc^{\ast}}\xi(\hv)f(\hv)d\hv \nonumber\\
&\geq 0
\end{align} 
since
\begin{equation}
\int_{\Gc^{\ast} \char`\\ \Gc'}f(\hv)d\hv=\int_{\Gc' \char`\\ \Gc^{\ast}}f(\hv)d\hv \nonumber
\end{equation}
and 
\begin{equation}
\xi(\hv)|_{\hv \in \Gc^{\ast}} \geq \xi(\hv)|_{\hv \in \Gc'},\hspace{0.2cm}\forall \hv \nonumber
\end{equation}
by definition. This contradicts our assumption that $\Gc'$ is optimal.
Note that, $\Gc^{\ast}$ is identical to \eqref{optimalPower}.
This concludes the proof.
%%%%%%%%%%%%%%%%%%%%%%% 

%%%%%%%%%%%%%%%%%%%%%%%%%%%%%%%%%%%%%%%%%%%%%%%%%%%%%%%%%%%%%%%%%%%%%%%%%%%%%%
%%%%%%%%%%%%%%%%%%%%%%%%%%%%%%%%%%%%%%%%%%%%%%%%%%%%%%%%%%%%%%%%%%%%%%%%%%%%%% 
\section{Proof of Lemma~\ref{t:MainCSIPower}}\label{a:MainCSIPower}
The proof goes along similar lines as in Appendix~\ref{a:FullCSIPower}, so we skip the details here. 
%First, we define the Lagrangian $J'(P(H_m))$,
%\begin{align*}
%J'(P(H_m)) = \int R_s(\hv,P)f(\hv)d\hv - \lambda \left[  \int P(h_m)f(\hv) d\hv  -P_{\text{avg}} \right]
%\end{align*}
We solve the problem for a fixed $\lambda > 0$.
First, for any given  
$\Gc \in [0,\infty)$, we define the following problem, the solution of which yields $P_{\Gc}(h_m)$.
\begin{align}
P_{\Gc}(h_m) & = \arg\max_{P(h_m)} J(P(H_m))  \label{appd:generalpower} \\
          & \mbox{subject to: } R_m([h_m,h_e],P) \geq {\Rate}, \forall h_m \in \Gc \label{appd:minrateconst} 
\end{align}
\begin{lemma}\label{appd:GeneralPowerLemma}
If the problem \eqref{appd:generalpower} has a feasible solution, then it can be expressed as
\begin{align}
P_{\Gc}(h_m) = P_w(h_m,\lambda) + \ind(h_m \in \Gc) \left(  P_{\text{inv}}(h_m,{\Rate}) - P_{w}(h_m,\lambda)  \right)^{+} \label{appd:generalsoln}
\end{align}
\end{lemma}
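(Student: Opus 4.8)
The plan is to mimic the proof of Lemma~\ref{appc:arbitraryGlemma} in Appendix~\ref{a:FullCSIPower}, using the fact that, with main CSI, the Lagrangian
$$J(P(H_m)) = \expect[R_s(\Hm,P)] - \lambda\left(\expect[P(H_m)] - P_{\text{avg}}\right)$$
still decouples across $h_m$, because $H_m$ and $H_e$ are independent. First I would rewrite, using this independence and the fact that the $[\cdot]^+$ in $R_s$ is inactive when $h_e\le h_m$ and $R_s=0$ when $h_e>h_m$,
$$J(P(H_m)) = \lambda P_{\text{avg}} + \int g_\lambda(h_m,P(h_m))\, f(h_m)\, dh_m, \qquad g_\lambda(h_m,p) = \int_0^{h_m}\!\!\big[\log(1+p h_m)-\log(1+p h_e)\big] f(h_e)\, dh_e - \lambda p.$$
Since the constraint in \eqref{appd:minrateconst} is also pointwise — it forces $P(h_m)\ge P_{\text{inv}}(h_m,\Rate)$ for $h_m\in\Gc$ and imposes nothing for $h_m\notin\Gc$ — maximizing $J$ reduces, separately for each $h_m$, to maximizing $g_\lambda(h_m,\cdot)$ over $[0,\infty)$ when $h_m\notin\Gc$ and over $[P_{\text{inv}}(h_m,\Rate),\infty)$ when $h_m\in\Gc$. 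To keep the exposition parallel to the full-CSI case I would phrase this via the Euler--Lagrange condition $dJ/dP(h_m)=0$ on the set $\Kc=\{h_m\in\Gc:P(h_m)>P_{\text{inv}}(h_m,\Rate)\}\cup\bar{\Gc}$ where the minimum-power constraint is slack.

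Next I would establish that $p\mapsto g_\lambda(h_m,p)$ is concave on $[0,\infty)$. The key computation is
$$\partial_p^2\big[\log(1+p h_m)-\log(1+p h_e)\big] = -\left(\frac{h_m}{1+p h_m}\right)^2 + \left(\frac{h_e}{1+p h_e}\right)^2 \le 0 \quad\text{for } 0\le h_e\le h_m,$$
because $x\mapsto x/(1+px)$ is nondecreasing on $[0,\infty)$; integrating over $h_e\in[0,h_m]$ and subtracting the linear term $\lambda p$ preserves concavity. Given concavity, the unconstrained maximizer over $[0,\infty)$ is the unique nonnegative stationary point, or $0$ if that stationary point is negative; setting $\partial_p g_\lambda(h_m,p)=0$ yields exactly
$$\frac{h_m\,\Pr(h_e\le h_m)}{1+h_m p} - \int_0^{h_m}\frac{h_e}{1+h_e p}\, f(h_e)\, dh_e - \lambda = 0,$$
i.e. $p=P_{w}(h_m,\lambda)$, which is precisely the equation defining $P_{w}$ in the main-CSI part of Section~\ref{section:power} (the $\max(0,\cdot)$ absorbing the corner case). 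For $h_m\in\Gc$, concavity of $g_\lambda(h_m,\cdot)$ means the maximizer over the half-line $[P_{\text{inv}}(h_m,\Rate),\infty)$ is the projection of $P_{w}(h_m,\lambda)$ onto that half-line, namely $\max\!\big(P_{w}(h_m,\lambda),P_{\text{inv}}(h_m,\Rate)\big)=P_{w}(h_m,\lambda)+\big(P_{\text{inv}}(h_m,\Rate)-P_{w}(h_m,\lambda)\big)^+$. Assembling the two cases gives \eqref{appd:generalsoln}.

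The main obstacle is the concavity verification, since it is what legitimizes (i) upgrading the Euler--Lagrange stationary point to a genuine maximizer, (ii) the $\max(0,\cdot)$ truncation in the definition of $P_{w}$, and (iii) the projection-onto-a-half-line description of the solution on $\Gc$; everything else transfers essentially verbatim from Appendix~\ref{a:FullCSIPower}. A secondary point to treat carefully is feasibility: the lemma is conditioned on \eqref{appd:generalpower} admitting a feasible solution, which here just means $P_{\text{inv}}(h_m,\Rate)<\infty$ on $\Gc$, i.e. $\Gc\subseteq\{h_m>0\}$ up to a null set, so that all the pointwise constrained maxima are finite and the claimed formula is well defined.
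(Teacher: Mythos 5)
Your proposal is correct and follows essentially the same route as the paper: pointwise decoupling of the Lagrangian, the Euler--Lagrange stationarity condition giving $P_w(h_m,\lambda)$ on the set where the minimum-rate constraint is slack, concavity to justify truncation at zero, and the channel-inversion floor $P_{\text{inv}}(h_m,\Rate)$ where the constraint binds. The only difference is that you verify concavity of the pointwise objective explicitly (and phrase the constrained case as projection onto a half-line), whereas the paper simply cites the concavity from~\cite{Secrecy:08}; this is a welcome filling-in of detail, not a different argument.
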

\begin{proof}
The proof uses the same approach as in proof of Lemma~\ref{appc:arbitraryGlemma}. We define the set $\Kc$ such that for any $h_m \in \Kc$,
the minimum rate constraint in \eqref{appd:minrateconst} is not active. Since the optimal solution must satisfy the Euler Lagrange
equations, we have
$$ \frac{dJ(P(h_m))}{dP(h_m)}=0,\mbox{ }h_m \in \Kc $$ 
If we solve the equation for any given $h_m$, we get
\begin{align}
\frac{h_m \Pr (H_e \leq h_m)}{1+h_mP(h_m)}-\int_0^{h_m}\left( \frac{h_e}{1+h_eP(h_m)} \right)f(h_e)dh_e - \lambda = 0
\end{align}
If the power allocation function that solves the equation is negative, then by the convexity of the objective function  \cite{Secrecy:08},
the optimal value of $P(h_m)$ is $0$. Hence, we get $P_w(\hv,\lambda)$ as the resulting power allocation function. 
Whenever the minimum rate constraint \eqref{mainserviceconstraint} is active, 
we get the channel inversion power allocation function, $P_{\text{inv}}(\hv,{\Rate})$.
\end{proof}
Now, using Lemma~\ref{appd:GeneralPowerLemma}, we solve the following problem,
\begin{align}
\max_{P(h_m),\Gc}J(P(H_m)) \label{appd:lemma2}\\
\textrm{s.t}\hspace{0.1cm} R_m(\hv,P) \geq {\Rate}, \quad \forall \hv \in \Gc\nonumber\\
\Pr(H_m \in \Gc)= 1-\epsilon \nonumber
\end{align}
the solution of which yields $P^{\Rate}(h_m)$. Lemma~\ref{appd:GeneralPowerLemma} proves that the solution is a time-sharing between
$P_w(h_m,\lambda)$ and $P_{\text{inv}}(h_m,{\Rate})$. Now, we find the optimal $\Gc$.
\begin{lemma}
The solution of \eqref{appd:lemma2} is of the form \eqref{appd:generalsoln}, with the set $\Gc^{\ast} = [c, \infty)$, where
$c$ is a constant which solves $\Pr(H_m \geq c)=1-\epsilon$.
\end{lemma}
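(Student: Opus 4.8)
The plan is to follow the two-step scheme of Appendix~\ref{a:FullCSIPower}. By Lemma~\ref{appd:GeneralPowerLemma}, for any measurable $\Gc\subseteq[0,\infty)$ the maximizer of $J(P(H_m))$ in \eqref{appd:lemma2} subject to the rate constraint being enforced on $\Gc$ is the $P_{\Gc}$ of the form \eqref{appd:generalsoln}, which equals $\max\bigl(P_{w}(h_m,\lambda),P_{\text{inv}}(h_m,\Rate)\bigr)$ for $h_m\in\Gc$ and $P_{w}(h_m,\lambda)$ for $h_m\notin\Gc$. Substituting this into $J(P(H_m))=\expect[R_s(\Hm,P)]-\lambda\bigl(\expect[P(H_m)]-P_{\text{avg}}\bigr)$ and splitting every expectation according to whether $H_m\in\Gc$, the $P_{w}$ contributions assemble into a term independent of $\Gc$, leaving
\[
J(P_{\Gc}(H_m))=\textrm{const}+\int_{\Gc}\xi(h_m)\,f(h_m)\,dh_m,\qquad
\xi(h_m)\eqdef\psi\bigl(h_m,\max(P_{w},P_{\text{inv}})\bigr)-\psi\bigl(h_m,P_{w}\bigr),
\]
where $\psi(h_m,P)\eqdef\expect\bigl[R_s([h_m,H_e],P)\bigm|H_m=h_m\bigr]-\lambda P$ and the arguments of $P_{w},P_{\text{inv}}$ are as above. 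So \eqref{appd:lemma2} reduces to: among all $\Gc$ with $\Pr(H_m\in\Gc)=1-\epsilon$, maximize $\int_{\Gc}\xi\,f$.

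Two facts about $\xi$ are needed. First, for each fixed $h_m$ the map $P\mapsto\psi(h_m,P)$ is concave on $[0,\infty)$ (each $R_s([h_m,h_e],\cdot)$ is concave) and $P_{w}(h_m,\lambda)$ is, by its defining first-order condition, its maximizer; hence $\xi(h_m)\le 0$ for every $h_m$, and $\xi(h_m)=0$ whenever $P_{\text{inv}}(h_m,\Rate)\le P_{w}(h_m,\lambda)$. Second, $P_{\text{inv}}(h_m,\Rate)=(2^{\Rate}-1)/h_m$ is strictly decreasing in $h_m$ while $P_{w}(h_m,\lambda)$ is non-decreasing in $h_m$ (rewrite its first-order condition as $\int_0^{h_m}\!\bigl(\tfrac{h_m}{1+Ph_m}-\tfrac{h_e}{1+Ph_e}\bigr)f(h_e)\,dh_e=\lambda$; the left side is increasing in $h_m$ and decreasing in $P$). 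Consequently $P_{\text{inv}}-P_{w}$ is strictly decreasing and positive near $0$, so there is a threshold $h_m^{\ast}\in(0,\infty]$ with $\xi<0$ on $[0,h_m^{\ast})$ and $\xi\equiv 0$ on $[h_m^{\ast},\infty)$.

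With these facts I would run the exchange argument of Appendix~\ref{a:FullCSIPower}: an optimal $\Gc$ must coincide, up to a null set, with a superlevel set $\{h_m:\xi(h_m)\ge k\}$ for some $k\le 0$ chosen so that $\Pr(H_m\in\Gc)=1-\epsilon$. It then suffices to show these superlevel sets are half-lines. Since $\xi\equiv 0$ on $[h_m^{\ast},\infty)$, the only remaining point is that $\xi$ is non-decreasing on $[0,h_m^{\ast})$. On that interval $P_{\text{inv}}>P_{w}$, so $\partial_P\psi\le 0$ on $[P_{w},P_{\text{inv}}]$ and
\[
\xi(h_m)=\int_{P_{w}(h_m,\lambda)}^{P_{\text{inv}}(h_m,\Rate)}\partial_P\psi(h_m,P)\,dP
=-\int_{P_{w}(h_m,\lambda)}^{P_{\text{inv}}(h_m,\Rate)}\Bigl(\lambda-\!\int_0^{h_m}\!\bigl(\tfrac{h_m}{1+Ph_m}-\tfrac{h_e}{1+Ph_e}\bigr)f(h_e)\,dh_e\Bigr)dP .
\]
As $h_m$ increases the upper limit decreases, the lower limit increases (the interval of integration shrinks), and the non-negative integrand decreases pointwise in $P$ (the inner integral is increasing in $h_m$); all three effects raise $\xi$. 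Because $\Hm$ has a density, $c\mapsto\Pr(H_m\ge c)$ is continuous and there is a unique $c$ with $\Pr(H_m\ge c)=1-\epsilon$; whether $c\ge h_m^{\ast}$ (then $\xi\equiv0$ on $[c,\infty)$, the maximal value of $\xi$) or $c<h_m^{\ast}$ (then $[c,\infty)$ is a top superlevel set of $\xi$ by the monotonicity), $\Gc^{\ast}=[c,\infty)$ is optimal, so $P^{\Rate}(h_m)=P_{\Gc^{\ast}}(h_m)$ has the asserted form. The one genuinely new ingredient beyond Appendix~\ref{a:FullCSIPower} is this monotonicity of $\xi$ — equivalently, that it is cheapest, in Lagrangian terms, to enforce the rate target at the largest channel gains — and the displayed integral representation makes it routine; everything else transcribes from the full CSI proof.
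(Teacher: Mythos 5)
Your proposal is correct and follows essentially the same route as the paper's proof: reduce the choice of $\Gc$ to maximizing $\int_{\Gc}\xi(h_m)f(h_m)\,dh_m$ over sets of probability $1-\epsilon$, show the per-state Lagrangian gain is non-decreasing in $h_m$ (using that $P_{\text{inv}}$ decreases, $P_{w}$ does not decrease, and concavity of the secrecy-rate term in $P$), and conclude by the same exchange/packing argument that $[c,\infty)$ is optimal. The only difference is one of detail: you explicitly verify the monotonicity of $P_{w}(h_m,\lambda)$ via the first-order condition and package the concavity step as an integral representation of $\xi$, facts the paper asserts or argues pointwise in $h_e$ before integrating.
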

\begin{proof}
Let $P_{{\Gc}^{\ast}}(h_m)$ and $P_{{\Gc}'}(h_m)$ be the power allocation
functions that are solutions of \eqref{appd:generalsoln} given the sets ${\Gc}^{\ast}$ and $\Gc'$, respectively.
We show that, any choice of $\Gc' \neq \Gc^{\ast}$,
such that $\Pr(H_m \in \Gc') = 1-\epsilon$ is suboptimal, i.e.,  
%Let us define $P_{res}(h_m)= (P_{\text{inv}}(h_m,b)-P_{w}(h_m,\lambda))^+$, which is the residual power required to satisfy the 
%main channel rate constraint.
\begin{align*}
J(P_{{\Gc}^{\ast}}(H_m)) - J(P_{{\Gc}'}(H_m)) \geq 0
\end{align*}
We continue as follows.
\begin{align*}
&J(P_{{\Gc}^{\ast}}(H_m)) - J(P_{{\Gc}'}(H_m)) = \\
& \int\left\{  \int_{h_m \in \Gc^{\ast}} \left( \left[R_s([h_m,h_e],P_{\text{inv}})-R_s([h_m,h_e],P_{w})\right]^+ 
-\lambda \left[P_{\text{inv}}(h_m,{\Rate})-P_{\text{wf}}(h_m,\lambda)\right]^+ \right) f(h_m)dh_m \right\} f(h_e) dh_e \\
&- \int \left\{ \int_{h_m \in \Gc'}\left( \left[R_s([h_m,h_e],P_{\text{inv}})-R_s([h_m,h_e],P_{w})\right]^+ 
- \lambda \left[P_{\text{inv}}(h_m,{\Rate})-P_{\text{wf}}(h_m,\lambda)\right]^+ \right) f(h_m)dh_m \right\} f(h_e) dh_e
\end{align*}
Note that,
for any $h_m' \in \Gc^{\ast} \backslash \Gc'$ and $ h_m'' \in \Gc' \backslash \Gc^{\ast}$, we have $h_m' > h_m''$. 
Since $P_{w}(h_m',\lambda)\geq P_{w}(h_m'',\lambda)$
and $P_{\text{inv}}(h_m',\lambda) < P_{\text{inv}}(h_m'',\lambda)$, we have 
$$[P_{\text{inv}}(h_m',{\Rate})-P_{w}(h_m',\lambda)]^+ \leq \left[P_{\text{inv}}(h_m'',{\Rate})-P_{w}(h_m'',\lambda)\right]^+$$
Since $R_s(\cdot,P)$ is a concave increasing function of $P(\cdot)$ \cite{Secrecy:08},
and for $P_{w}(\cdot,P)$, we have $\frac{d P_{w}(\cdot,P)}{dP}=\lambda$. 
Therefore for any $h_e$, we have
\begin{align*}
&\left[R_s([h_m',h_e],P_{\text{inv}})-R_s([h_m',h_e],P_{w})\right]^+ 
-\lambda \left[P_{\text{inv}}(h_m',{\Rate})-P_{w}(h_m',\lambda)\right]^+ \\
-&\left[R_s([h_m'',h_e],P_{\text{inv}})-R_s([h_m'',h_e],P_{w})\right]^+ 
+\lambda \left[P_{\text{inv}}(h_m'',{\Rate})-P_{w}(h_m'',\lambda)\right]^+ \geq 0
\end{align*}
Combining this result with the packing arguments following \eqref{optimalG} in Appendix~\ref{a:FullCSIPower}, we get
\begin{align*}
J(P_{{\Gc}^{\ast}}(H_m)) - J(P_{{\Gc}'}(H_m)) \geq 0
\end{align*}
hence concluding the proof. Note that, this result can also be proved using
the arguments of Section 4 in \cite{Luo:03}. 
\end{proof}
%%%%%%%%%%%%%%%%%%%%%%%%%%%%%%%%%%%%%%%%%%%%%%%%%%%%%%%%%%%%%%%%%%%%%
%%%%%%%%%%%%%%%%%%%%%%%%%%%%%%%%%%%%%%%%%%%%%%%%%%%%%%%%%%%%%%%%%%%%%
\section{Proofs in Section~\ref{s:finitebuffer}}
\subsection{Proof of Lemma~\ref{l:stationarity}} \label{app:stationarity}
Due to Theorem 1.2 of Section VI in \cite{As87}, it suffices to show
that $Q_M(t)$ is a positive recurrent regenerative process.
Note that $Q_M(t)$ is a Markov process with an uncountable state space $[0 ~ M]$, since
$Q_M(t)$ can be written as
$Q_M(t+1) = \min( M, Q_M(t)+ R_s(t) - \ind(\bar{\Oc}_{\text{x}}(t) \cap \bar{\Oc}_{\text{key}}(t))) $
where $R_s(t)$ and $\bar{\Oc}_{\text{x}}(t)$ are i.i.d., and 
$\bar{\Oc}_{\text{key}}(t)= \big{\{} Q_M(t) + R_s(t) -\Rate \geq 0  \big{\}}$
depends only on  $Q_M(t)$ and $R_s(t)$. Therefore, $Q_M(t+1)$ is independent of $\{Q_M(i)\}_{i=1}^{t-1}$ given $Q_M(t)$,
hence Markovity follows. Now, we prove that $Q_M(t)$ is a recurrent regenerative process
where regeneration occurs at times $t_1,t_2,\cdots$ such that $Q_M(t_i) = M$.
A sufficient condition for this is to show that $Q_M(t)$ has an accessible atom \cite{Markovian}.
\begin{definition}
An accessible atom $M$ is a state that is hit with positive probability starting from any state,
i.e., $\sum_{t=1}^{\infty}\Pr(Q_M(t)=M| Q_M(1) = i) > 0 $ $\forall i$.
\end{definition}
\begin{lemma}\label{l:atom}
$Q_M(t)$ has an accessible atom $M$.
\end{lemma}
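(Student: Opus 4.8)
The plan is to verify directly that $M$ is an accessible atom: I will show that for every initial level $i\in[0,M]$ there is a finite horizon $n$ with $\Pr(Q_M(n+1)=M\mid Q_M(1)=i)>0$, which immediately gives $\sum_{t\ge1}\Pr(Q_M(t)=M\mid Q_M(1)=i)>0$. The driving mechanism is a run of consecutive outage blocks: on the event $\Oc_{\text{x}}(t)\subseteq\Oc_{\text{enc}}(t)$ no key bits are pulled, so \eqref{HatQkDynamics} gives $Q_M(t+1)=\min(M,Q_M(t)+R_s(t))\ge Q_M(t)$, i.e.\ the queue is nondecreasing; if in addition $R_s(t)\ge\delta_0$ for a fixed $\delta_0>0$ on each block of the run, then after $n=\lceil M/\delta_0\rceil$ such blocks the (clipped) queue is pinned at $M$ regardless of where it started, by the elementary identity $\min(M,\min(M,a)+\delta_0)=\min(M,a+\delta_0)$ valid for $a\ge0$.

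Concretely, I would first fix $\delta_0$: the policy $P^{\Rate}(\hv)$ that solves \eqref{eq:PowerControl}--\eqref{serviceconstraint} has strictly positive expected secrecy rate $\expect[R_s(\Hm,P^{\Rate})]>0$ (otherwise $C_F^{\epsilon}=0$ and there is nothing to size), so there is $\delta_0>0$ with $\Pr(R_s(\Hm,P^{\Rate})\ge\delta_0)>0$. Setting
\begin{align*}
p_0\,=\,\Pr\!\left(\Oc_{\text{x}}(t)\cap\{R_s(\Hm(t),P^{\Rate})\ge\delta_0\}\right),
\end{align*}
and noting that the achievability construction can be arranged so that $\{(\Oc_{\text{x}}(t),\Hm(t))\}_{t}$ is i.i.d., the event $\Ec_n=\bigcap_{t=1}^{n}\big(\Oc_{\text{x}}(t)\cap\{R_s(\Hm(t),P^{\Rate})\ge\delta_0\}\big)$ has probability $p_0^{\,n}$. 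A short induction then shows that on $\Ec_n$ one has $Q_M(t+1)\ge\min(M,Q_M(t)+\delta_0)$ for $1\le t\le n$, so starting from $Q_M(1)=i\in[0,M]$ we get $Q_M(n+1)\ge\min(M,i+n\delta_0)=M$, i.e.\ $Q_M(n+1)=M$. Hence $\Pr(Q_M(n+1)=M\mid Q_M(1)=i)\ge p_0^{\,n}$ for every $i$, and the lemma follows once we know $p_0>0$.

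The main obstacle is precisely $p_0>0$, i.e.\ that an outage block and a secrecy-productive channel realization can co-occur with positive probability. In the generic situation this is immediate: the artificial outages $\{\Oc_{\text{a}}(t)\}$ are introduced solely to pad $\Pr(\Oc_{\text{ch}}(t))$ up to $\epsilon$ and may be generated by exogenous randomization independent of the channel, whence $p_0\ge\Pr(\Oc_{\text{a}}(t))\,\Pr(R_s(\Hm,P^{\Rate})\ge\delta_0)>0$. The delicate boundary case is $\Pr(\Oc_{\text{ch}}(t))=\epsilon$, where $\Oc_{\text{a}}(t)=\emptyset$ and one must argue that the genuine channel-outage region $\{R_m(\Hm,P^{\Rate})<\Rate\}$ overlaps $\{R_s(\Hm,P^{\Rate})\ge\delta_0\}$ on a set of positive measure; here I would invoke the explicit form of $P^{\Rate}(\hv)$ from Lemma~\ref{l:FullCSIPower} --- secure waterfilling on the complement of the rate-maintaining region $\Gc(\lambda,k)$ --- which keeps $R_s$ bounded away from zero on a non-null subset of the outage region. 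Everything else is routine bookkeeping.
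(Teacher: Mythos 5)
Your overall mechanism is the same as the paper's: a run of consecutive i.i.d.\ blocks, each contributing a (pre-clipping) increment of at least $\delta_0>0$, pins the clipped queue \eqref{HatQkDynamics} at $M$ from any initial level, so the hitting probability is bounded below by $p_0^{\,n}>0$. The genuine gap is the positivity of your $p_0=\Pr\big(\Oc_{\text{x}}(t)\cap\{R_s(\Hm,P^{\Rate})\geq \delta_0\}\big)$, which you acknowledge but do not establish. Requiring an outage to co-occur with strictly positive key generation is strictly stronger than what the recursion needs, and your proposed fix in the boundary case $\Pr(\Oc_{\text{ch}}(t))=\epsilon$ (so that $\Oc_{\text{a}}(t)$ is empty) is an unproved assertion that can fail: under $P^{\Rate}(\hv)$ of Lemma~\ref{l:FullCSIPower}, outside the region where channel inversion is enforced the power is $P_{\text{wf}}(\hv,\lambda)$, and from \eqref{Waterfilling} one has $P_{\text{wf}}(\hv,\lambda)>0$ only when $h_m>h_e+\lambda$; hence $R_s(\hv,P^{\Rate})=0$ on every part of the channel-outage region with $h_m\leq h_e+\lambda$, and nothing prevents the entire outage region (which consists precisely of the blocks with poor main channel) from lying in that set. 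In such a case $p_0=0$, your runs of outage blocks never raise the queue, and your argument proves nothing, even though $M$ may still be accessible through non-outage blocks on which $R_s>\Rate$.

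The paper avoids this by working with the weaker event $\{R_s(t)-\Rate\,\ind(\bar{\Oc}_{\text{x}}(t))>\gamma\}$: since $\bar{\Oc}_{\text{enc}}(t)\subseteq\bar{\Oc}_{\text{x}}(t)$, the true increment $R_s(t)-\Rate\,\ind(\bar{\Oc}_{\text{enc}}(t))$ in \eqref{HatQkDynamics} dominates $R_s(t)-\Rate\,\ind(\bar{\Oc}_{\text{x}}(t))$, so a run of blocks on which the latter exceeds $\gamma>0$ also drives $Q_M$ to $M$, and this event depends only on the i.i.d.\ pair (channel state, artificial-outage randomization), so the run probability factorizes exactly as in your argument. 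This event covers both outage blocks with $R_s>0$ and non-outage blocks with $R_s>\Rate$, and its positive probability is exactly the right non-degeneracy condition: if $\Pr\big(R_s(t)-\Rate\,\ind(\bar{\Oc}_{\text{x}}(t))>0\big)=0$ the key queue can never grow, overflows never occur, and the buffer-sizing statement is vacuous (the paper sets this case aside in a footnote). To repair your proof, replace $\Oc_{\text{x}}(t)\cap\{R_s\geq\delta_0\}$ by $\{R_s(t)-\Rate\,\ind(\bar{\Oc}_{\text{x}}(t))\geq\delta_0\}$ and add this degeneracy caveat; your induction and the product bound then go through verbatim, and the problematic overlap claim about the outage region is no longer needed.
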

\begin{proof}
Assume $Q_M(1)= i$, $i \in [0,M]$. 
Note that, $R_s(t)$ and $\Oc_{\text{x}}(t)$ are both i.i.d.
Also note that, $\Pr\big{(} R_s(t)-\Rate \ind(\bar{\Oc}_{\text{x}}(t)) >0 \big{)} >0 $ $\forall t$ \footnote{Since considering otherwise
would lead to the uninteresting scenario where there are no buffer overflows (since the key queue cannot grow),
hence any buffer size $M>C_{F}^{\epsilon'}$ is sufficient to achieve $\epsilon'$ secrecy outage probability.}.
Find $\gamma >0$ such that $\Pr\big{(}R_s(t)-\Rate \ind(\bar{O}_{\text{x}}(t)) >\gamma \big{)} = \gamma $ $\forall t$.
Let $\eta_i = \lceil \frac{M-i}{\gamma} \rceil $. Then,
\begin{align*}
\Pr(Q_M(\eta_i+1)=M| Q_M(1) = i) &\geq \prod_{t=1}^{\eta_{i}} \Pr\big{(}R_s(t)+ \ind(\bar{\Oc}_{\text{x}}(t)) >\delta \big{)} \\
                           &\geq  \gamma^{\eta_i}  \\
                           & >0
\end{align*}
\end{proof}
Since $Q_M(t)$ is a regenerative process, we know that $t_2-t_1,t_3-t_2,\cdots$ are i.i.d. random variables. Define a random variable $\tau$, with distribution identical to $t_{i+1}-t_{i}$.
Now we show that $Q_M(t)$ is positive recurrent, by showing 
$\expect[\tau]< \infty$.
Consider another recursion 
\begin{align}
 Q'_M(t+1)= \min \big{(} M,Q'_M(t)+R_s(t)- \Rate\ind(\bar{\Oc}_{\text{x}}(t)) \big{)}^+ \label{QpDynamics}
\end{align} 
with $Q'_M(1) = Q_M(1)$. It is clear that $Q'_M(t)$ is also regenerative, where regeneration occurs
at $\{ t_i' \}$, where $Q_M(t_i')=M$, and let $\tau'$ be equal in distribution to $t_{i+1}'-t_{i}'$.
\begin{lemma}\label{Tlemma}
\begin{align*}
\expect[\tau] \leq \expect[\tau']
\end{align*}
\end{lemma}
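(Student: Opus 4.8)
The plan is to derive Lemma~\ref{Tlemma} from a pathwise coupling of the two recursions, reducing the comparison of mean cycle lengths to a comparison of first return times to the atom $M$. I would run \eqref{HatQkDynamics} and \eqref{QpDynamics} on a common probability space, both driven by the \emph{same} i.i.d.\ sequence $\{(R_s(t),\Oc_{\text{x}}(t))\}_{t\ge 1}$ and both started at $Q_M(1)=Q'_M(1)=M$. Because the successive inter-regeneration times of a regenerative process are i.i.d.\ with a law depending only on the kernel and the atom, and not on the initial state, starting at the atom makes $t=1$ a regeneration point and gives $\tau\eqd \inf\{t\ge 2:Q_M(t)=M\}-1$ and $\tau'\eqd\inf\{t\ge 2:Q'_M(t)=M\}-1$ for these coupled versions; so it suffices to show the first return time of $Q_M$ to $M$ is pathwise no larger than that of $Q'_M$.

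The heart of the argument is the claim $Q_M(t)\ge Q'_M(t)$ for all $t$, proved by induction. The base case holds by construction. For the step, abbreviate $q=Q_M(t)$, $q'=Q'_M(t)$ (with $q\ge q'\ge 0$) and $r=R_s(t)\ge 0$, and split on the drivers. If $\Oc_{\text{x}}(t)$ occurs, neither process subtracts $\Rate$, so $Q_M(t+1)=\min(M,q+r)$ and $Q'_M(t+1)=\min(M,q'+r)$, and $q\ge q'$ gives the inequality. If $\Oc_{\text{x}}(t)$ does not occur and $q+r\ge\Rate$ (no key outage for $Q_M$), then $Q_M(t+1)=\min(M,q+r-\Rate)$ and $Q'_M(t+1)=\min(M,(q'+r-\Rate)^+)$, and $q+r-\Rate\ge\max\{0,\,q'+r-\Rate\}$ closes it. If $\Oc_{\text{x}}(t)$ does not occur and $q+r<\Rate$, then $Q_M$ declares encoder outage and moves to $\min(M,q+r)\ge 0$, while $q'+r-\Rate\le q+r-\Rate<0$ forces $Q'_M(t+1)=\min(M,0)=0$, so again $Q_M(t+1)\ge Q'_M(t+1)$. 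Throughout, the maps $x\mapsto\min(M,x)$ and $x\mapsto x^+$ enter only through monotonicity and through $r,q'\ge 0$.

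Given the domination $0\le Q'_M(t)\le Q_M(t)\le M$ for all $t$, whenever $Q'_M(t)=M$ we must have $Q_M(t)=M$; hence $\{t:Q'_M(t)=M\}\subseteq\{t:Q_M(t)=M\}$ and in particular $\inf\{t\ge 2:Q_M(t)=M\}\le\inf\{t\ge 2:Q'_M(t)=M\}$ pathwise under the coupling, i.e.\ $\tau\le\tau'$. Taking expectations gives $\expect[\tau]\le\expect[\tau']$, which is the assertion; combined with a separate (and easier) bound $\expect[\tau']<\infty$ for the reflected walk \eqref{QpDynamics}, which has strictly negative drift $\mu_{\Rate}<0$ in the regime $\Rate>C_F^{\epsilon}$, it yields $\expect[\tau]<\infty$ and hence the positive recurrence needed in Lemma~\ref{l:stationarity}.

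I expect the induction in the second paragraph to be the main obstacle, not for depth but because the update map of $Q_M(t)$ is \emph{not} monotone in the queue content: on the key-outage branch $\Rate$ is \emph{not} subtracted, which produces a downward discontinuity at $q=\Rate-r$, so one cannot invoke a generic monotone-coupling lemma and the case split must be carried out by hand. The reason it still closes is precisely that on that non-monotone branch the competing process $Q'_M(t)$ is simultaneously reflected down to $0$, which dominates from below regardless. A minor point to record is the identification used in the first paragraph of the generic inter-regeneration time $\tau$ (resp.\ $\tau'$) with the first return time to $M$ started from $M$, which is immediate from the i.i.d.\ cycle structure already established via the accessible atom of Lemma~\ref{l:atom}.
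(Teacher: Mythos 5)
Your proposal is correct and takes essentially the same route as the paper's own proof: couple the two recursions under the common driver $\{(R_s(t),\Oc_{\text{x}}(t))\}$, establish the pathwise domination $Q'_M(t)\le Q_M(t)$ by induction on the same case split (no subtraction under $\Oc_{\text{x}}$, subtraction when no key outage, reflection of $Q'_M$ at $0$ on the key-outage branch), and conclude that $Q_M$ returns to the atom $M$ no later than $Q'_M$, giving $\expect[\tau]\le\expect[\tau']$. Your write-up is merely a bit more explicit than the paper's (which states the domination only when $Q_M(t)\neq M$ and leaves the hitting-time comparison implicit), but the argument is the same.
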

\begin{proof}
It suffices to show that when $Q_M(t)\neq M$, $Q'_M(t) \leq Q_M(t)$.
By induction,  assuming $Q_M'(t)\leq Q_M(t)$, we need to verify
that $ Q_M'(t+1)\leq Q_M(t+1)$.
%If $Q_M(t+1)=M$, then it is clear from \eqref{QMDynamics} and \eqref{QpDynamics} that
%$Q'_M(t+1)=M$.
Consider $Q_M(t+1)<M$. Then,
\begin{align}
 Q_M(t+1) &={\Big (} Q_M(t) + R_s(t) - {\Rate} \ind( \bar{\Oc}_{\text{x}}(t) \cap \bar{\Oc}_{\text{key}}(t) {\Big )}^+\nonumber\\
&\geq  \left(Q_M(t) + R_s(t)-{\Rate} \ind(\bar{\Oc}_{\text{x}}(t)) \right)^+ \nonumber\\
&\geq \left(Q'_M(t) + R_s(t)-{\Rate} \ind(\bar{\Oc}_{\text{x}}(t)) \right)^+
\nonumber\\
&=Q'_M(t+1) \nonumber
\end{align}
\end{proof}
Note that $Q'_M(t)$ is regenerative both at states $0$ and $M$.
Let $\expect[\tau'_1]$ denote the expected time for the process $Q'_M(t)$ 
to hit $0$ from $M$, and $\expect[\tau'_2]$ denote
the expected time to hit $M$ from $0$. Then,
\begin{align}
\expect[\tau'] \leq \expect[\tau'_1] + \expect[\tau'_2] \label{Tbar}
\end{align}
Since the key queue has a negative drift, i.e.,
 $\mu_{\Rate}= \expect[R_s(\Hm,P) - \Rate\ind(\bar{\Oc}_{\text{x}}(t))]<0$,
it is clear that 
$\expect[\tau'_1]< \infty$. 
Now, we show that $\expect[\tau'_2]< \infty$. Following the approach of Lemma~\ref{l:atom}, find $\gamma>0$ such that
 $\Pr\big{(} R_s(t)-\Rate \ind(\bar{\Oc}_{\text{x}}(t)) >\gamma \big{)} = \gamma $ $\forall t$.
Let $\eta = \lceil M/\gamma \rceil$. Then,
$\Pr(Q_M(\eta+1)=M | Q_M(1) = 0 )\geq \gamma^{\eta}>0$,
and
\begin{align*}
\expect[\tau'_2]&\leq \sum_{i=0}^{\infty}(\eta + i(\expect[\tau'_1]+\eta))\gamma^{\eta}(1-\gamma^{\eta})^i  \\
            & \leq  \eta \gamma^{\eta}\sum_{i=0}^{\infty}(1-\gamma^{\eta})^i + \sum_{i=0}^{\infty}(1-\gamma^{\eta})^i i(\expect[\tau'_1]+\gamma^{\eta}) \\
            & < \infty
\end{align*}
The first inequality follows from the fact that with probability $\gamma^{\eta}$, $Q_M(t)$ hits $M$ at $\eta$'th block 
and with probability $(1-\gamma^{\eta})$, key queue goes back to state $0$ at $(\expect[\tau'_1]+\gamma^{\eta})$'th block (on average).
The last inequality follows from $0<\gamma^{\eta}<1$, and ratio test. This result, along with \eqref{Tbar} and Lemma~\ref{Tlemma} concludes that
$Q_M(t)$ is a positive recurrent regenerative process, which concludes the proof.

\subsection{Proof of Lemma~\ref{t:lossPr}} \label{app:BufferOverflowProof}
We follow an indirect approach to prove the lemma.
Let $\{Q(t)\}_{t=1}^{\infty}$ denote the key queue dynamics
of the same system for the infinite buffer case ($M=\infty$).  
First, we use the heavy traffic results in \cite{Kingman62} 
 to calculate the overflow probability of the infinite buffer queue.
Then, we relate the overflow probability
of infinite buffer system to the loss ratio of the finite buffer queue.
The dynamics of the infinite buffer queue is characterized by
\begin{align}
Q(t+1)= Q(t)+R_s(t)-\ind(\bar{\Oc}_{\text{enc}}(t))\Rate \label{QkDynamics}
\end{align}  
where $Q(1)=0$. 
The heavy traffic results we will use are for queues that have a stationary distribution.
Since it is not clear whether $Q(t)$ is stationary or not, we will upper bound
$Q(t)$ by another stationary process $Q'(t)$, and the buffer overflow probability
result we will get for $Q'(t)$ will serve as an upper bound 
for $Q(t)$.

Let $\{Q'(t)\}_{t\geq 1}$ be the process that satisfies the
following recursion
\begin{align}\label{eq:newQstar}
Q'(t+1) = \left(Q'(t) + R_s(t)-{\Rate} \ind(\bar{\Oc}_{\text{x}}(t)
\right)^+
\end{align}
 with $Q'(1) = 0$.
First, we relate $Q'(t)$ to $Q(t)$.
\begin{lemma}\label{lemma:Qstar}
\begin{align}\label{eq:sanwich}
%Q'(t) \leq Q(t) \leq Q'(t)+{\Rate},~ \forall t
Q(t) \leq Q'(t)+{\Rate},~ \forall t
\end{align}
\end{lemma}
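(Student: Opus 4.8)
The plan is to prove $Q(t)\le Q'(t)+\Rate$ for all $t$ by induction on $t$, working directly from the two recursions \eqref{QkDynamics} and \eqref{eq:newQstar}. Throughout I would use that $\Oc_{\text{enc}}(t)=\Oc_{\text{x}}(t)\cup\Oc_{\text{key}}(t)$, so $\bar{\Oc}_{\text{enc}}(t)=\bar{\Oc}_{\text{x}}(t)\cap\bar{\Oc}_{\text{key}}(t)$, together with the defining property of the key-outage event, namely that $\Oc_{\text{key}}(t)$ is exactly the event $\{Q(t)+R_s(t)-\Rate<0\}$. The base case $t=1$ is immediate because $Q(1)=Q'(1)=0\le\Rate$.

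For the inductive step, assume $Q(t)\le Q'(t)+\Rate$, i.e.\ $Q'(t)\ge Q(t)-\Rate$, and split into three cases according to which of $\bar{\Oc}_{\text{x}}(t)$, $\bar{\Oc}_{\text{key}}(t)$ hold. (i) If $\Oc_{\text{key}}(t)$ occurs, then $\ind(\bar{\Oc}_{\text{enc}}(t))=0$ and, by the definition of the key-outage event, $Q(t+1)=Q(t)+R_s(t)<\Rate\le Q'(t+1)+\Rate$ since $Q'(t+1)\ge0$. (ii) If $\bar{\Oc}_{\text{key}}(t)$ holds but $\Oc_{\text{x}}(t)$ occurs, then $\ind(\bar{\Oc}_{\text{enc}}(t))=\ind(\bar{\Oc}_{\text{x}}(t))=0$, so $Q(t+1)=Q(t)+R_s(t)$ and $Q'(t+1)=(Q'(t)+R_s(t))^+\ge Q'(t)+R_s(t)\ge(Q(t)-\Rate)+R_s(t)=Q(t+1)-\Rate$. (iii) If both $\bar{\Oc}_{\text{key}}(t)$ and $\bar{\Oc}_{\text{x}}(t)$ hold, then $\bar{\Oc}_{\text{enc}}(t)$ holds, $Q(t+1)=Q(t)+R_s(t)-\Rate$, and $Q'(t+1)=(Q'(t)+R_s(t)-\Rate)^+\ge Q'(t)+R_s(t)-\Rate\ge(Q(t)-\Rate)+R_s(t)-\Rate=Q(t+1)-\Rate$. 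In every case $Q(t+1)\le Q'(t+1)+\Rate$, completing the induction.

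I do not expect a serious obstacle here; the only point that needs care — and the reason the artificial-outage bookkeeping set up earlier in the section is used — is that the common event $\Oc_{\text{x}}(t)=\Oc_{\text{ch}}(t)\cup\Oc_{\text{a}}(t)$ enters the dynamics of $Q(t)$ and $Q'(t)$ identically, whereas the key outage $\Oc_{\text{key}}(t)$ suppresses a departure of $Q(t)$ but is absent from the recursion for $Q'(t)$; this is precisely what can make $Q(t)$ exceed $Q'(t)$. Case (i) is where that discrepancy is controlled: whenever $Q(t)$ withholds a departure because of a key outage, the definition of $\Oc_{\text{key}}(t)$ forces $Q(t+1)<\Rate$, so the gap between the two processes can never exceed $\Rate$. (One should also note that the $(\cdot)^+$ reflection only ever increases $Q'$, so it cannot hurt the inequality.)
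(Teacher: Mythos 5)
Your proof is correct and follows essentially the same route as the paper: induction on $t$ directly from the recursions \eqref{QkDynamics} and \eqref{eq:newQstar}, using $\bar{\Oc}_{\text{enc}}(t)=\bar{\Oc}_{\text{x}}(t)\cap\bar{\Oc}_{\text{key}}(t)$ and the definition of $\Oc_{\text{key}}(t)$ to control the step where the key outage suppresses a departure of $Q(t)$. Your three-way split on $\Oc_{\text{key}}(t)$, $\bar{\Oc}_{\text{key}}(t)\cap\Oc_{\text{x}}(t)$, and $\bar{\Oc}_{\text{key}}(t)\cap\bar{\Oc}_{\text{x}}(t)$ is only an organizational variant of the paper's case analysis (which splits on the sign of $Q'(t)+R_s(t)-\Rate\,\ind(\bar{\Oc}_{\text{x}}(t))$ and additionally invokes $Q'(t)\leq Q(t)$), and in fact avoids needing that auxiliary fact.
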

\begin{proof}
%First, we prove the lower bound $Q'(t) \leq Q(t)$. By
%induction,  assuming $Q'(t)\leq Q(t)$, we need to verify
%that $ Q'(t+1)\leq Q(t+1)$. Using the queue recursions, we
%obtain
%\begin{align}
% Q(t+1) &={\Big (} Q(t) + R_s(t) - {\Rate} \ind( \bar{\Oc}_{\text{x}}(t) \cap \bar{\Oc}_{\text{key}}(t) {\Big )}^+\nonumber\\
%&\geq  \left(Q(t) + R_s(t)-{\Rate} \ind(\bar{\Oc}_{\text{x}}(t)) \right)^+ \nonumber\\
%&\geq \left(Q'(t) + R_s(t)-{\Rate} \ind(\bar{\Oc}_{\text{x}}(t)) \right)^+
%\nonumber\\
%&=Q'(t+1) \nonumber
%\end{align}
%which finishes the proof of the lower bound.
%Next, we prove the upper bound. Again, we use induction.  Assuming
Assuming $Q(t)\leq Q'(t)+{\Rate}$, we need to show by induction that $ Q(t+1)\leq
Q'(t+1)+{\Rate}$. There are two different scenarios.
\begin{enumerate}
\item If $Q'(t)+R_s(t)-{\Rate}\ind \left( \bar{\Oc}_{\text{x}}(t) \right)\geq 0$, then,
using the facts $\bar{\Oc}_{\text{enc}}(t) = \bar{\Oc}_{\text{x}}(t) \cap  \bar{\Oc}_{\text{key}}(t)$
and
 $Q'(t)\leq Q(t)$, we obtain
\begin{align}
Q(t)  +R_s(t)  -  {\Rate} \ind \left( \bar{\Oc}_{\text{enc}}(t)\right) 
&\geq Q'(t) + R_s(t)
 - {\Rate} \ind \left(  \bar{\Oc}_{\text{x}}(t) \right) \nonumber\\
&\geq 0 \nonumber
\end{align}
which, using the described key queue recursions in \eqref{QkDynamics}, implies
\begin{align}\label{eq:queueBound1}
Q(t+1) = Q(t) + R_s(t)  
 - {\Rate} \ind \left(  \bar{\Oc}_{\text{x}}(t) \right)
\end{align}
Observe that, by (\ref{eq:newQstar}),
\begin{align}%\label{eq:queueBound2}
&Q'(t+1)= Q'(t) + R_s(t)- {\Rate} \ind \left( \bar{\Oc}_{\text{x}}(t)
 \right) \nonumber
\end{align}
which, in conjunction with (\ref{eq:queueBound1}) and $Q(t)\leq
Q'(t)+{\Rate}$, yields  $Q(t+1)\leq Q'(t+1)+{\Rate}$.
\item If $Q'(t)+R_s(t)-{\Rate}\ind \left(  \bar{\Oc}_{\text{x}}(t) \right)<
0$, then $Q'(t+1)=0$.
We further consider two cases. First, if $Q(t)+R_s(t)-{\Rate} \geq 0$,
then,
\begin{align}\label{eq:queueBound3}
Q(t+1)&={\Big (} Q(t) + R_s(t)  - {\Rate} \ind \left(\bar{\Oc}_{\text{x}}(t)\right) {\Big )}^+ \nonumber\\
&\leq {\Big (} Q'(t) +{\Rate} + R_s(t)  - {\Rate} \ind \left( \bar{\Oc}_{\text{x}}(t)\right) {\Big )}^+ 
\leq {\Rate} \nonumber\\
&= Q'(t+1)+{\Rate}
\end{align}
Next, if $Q(t)+R_s(t)-{\Rate} < 0$, then $$Q(t+1)=Q(t)+R_s(t) <
{\Rate}=Q'(t+1)+{\Rate}$$ which,  combined with (\ref{eq:queueBound3}),
yields $$Q(t+1) \leq Q'(t+1)+{\Rate}$$
\end{enumerate}
\end{proof}
Now, we show that $Q'(t)$ converges in distribution to an almost surely finite random variable $Q'$.
First, we need to show that the expected drift of $Q'(t)$ is negative. It is clear from 
\eqref{eq:newQstar} that
the expected drift of the process $Q'(t)$ is equal to $\mu_{\Rate} = \expect[R_s(\Hm,P^R)]-\Rate(1-\epsilon)$.
\begin{lemma} \label{muLemma}
For ${\Rate}>C_F^{\epsilon}$, we have ${\mu}_{\Rate}<0$, and ${\mu}_{\Rate}$ is a
 continuous decreasing function of ${\Rate}$.
\end{lemma}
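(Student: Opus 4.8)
The plan is to read both assertions off the decomposition $\mu_{\Rate} = \expect[R_s(\Hm,P^{\Rate})] - \Rate(1-\epsilon)$, where $P^{\Rate}$ is the solution of the sub-problem \eqref{eq:PowerControl}--\eqref{serviceconstraint}. First I would treat the two summands separately: the term $\Rate(1-\epsilon)$ is affine in $\Rate$ with strictly positive slope $1-\epsilon>0$, hence continuous and strictly increasing on $[0,\Rate_{\max}]$, while by Lemma~\ref{l:Rb} the term $\expect[R_s(\Hm,P^{\Rate})]$ is continuous and non-increasing in $\Rate$. Being the difference of a continuous non-increasing function and a continuous strictly increasing function, $\mu_{\Rate}$ is therefore continuous and strictly decreasing in $\Rate$, which already gives the continuity/monotonicity part of the statement.

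For the sign, I would invoke the defining property of the capacity. By Lemma~\ref{fullCSIuniquepwr} — together with the observation that at $\Rate=C_F^{\epsilon}$ the sub-problem coincides with \eqref{SecrecyCapacity}--\eqref{eq:PowerConstraint2} — the point $C_F^{\epsilon}$ is the unique $\Rate$ with $C_F^{\epsilon}=\expect[R_s(\Hm,P^{C_F^{\epsilon}})]/(1-\epsilon)$, i.e. $\mu_{C_F^{\epsilon}}=0$. Then for any $\Rate$ with $C_F^{\epsilon}<\Rate\le\Rate_{\max}$ I would write
\[
\mu_{\Rate}-\mu_{C_F^{\epsilon}}=\big(\expect[R_s(\Hm,P^{\Rate})]-\expect[R_s(\Hm,P^{C_F^{\epsilon}})]\big)-(1-\epsilon)(\Rate-C_F^{\epsilon}),
\]
observe that the first parenthesis is $\le 0$ by Lemma~\ref{l:Rb} while the second term is $<0$ since $1-\epsilon>0$ and $\Rate>C_F^{\epsilon}$, and conclude $\mu_{\Rate}<\mu_{C_F^{\epsilon}}=0$.

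The only point requiring a little care is the domain: $P^{\Rate}$, hence $\mu_{\Rate}$, is defined only for $\Rate\le\Rate_{\max}$, so "$\Rate>C_F^{\epsilon}$" must be read as $C_F^{\epsilon}<\Rate\le\Rate_{\max}$, and one should note this interval is non-degenerate, i.e. $C_F^{\epsilon}<\Rate_{\max}$. This is implicit in the proof of Lemma~\ref{fullCSIuniquepwr}: there $\expect[R_s(\Hm,P^{\Rate_{\max}})]/\Rate_{\max}\le 1-\epsilon$ gives $\mu_{\Rate_{\max}}\le 0$, while $\lim_{\Rate\to 0^+}\expect[R_s(\Hm,P^{\Rate})]/\Rate=\infty$ gives $\mu_{\Rate}>0$ for small $\Rate>0$, so strict monotonicity pins the unique zero $C_F^{\epsilon}$ strictly below $\Rate_{\max}$ (whenever $C_F^{\epsilon}>0$). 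I do not expect a real obstacle — the lemma is essentially a bookkeeping corollary of Lemmas~\ref{l:Rb} and~\ref{fullCSIuniquepwr}; the one thing to stay vigilant about is that the monotonicity of $\expect[R_s(\Hm,P^{\Rate})]$ is only weak, so the strict decrease (and hence the strict inequality $\mu_{\Rate}<0$) must be sourced entirely from the linear term $-(1-\epsilon)\Rate$.
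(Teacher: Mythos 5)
Your proposal is correct and follows essentially the same route as the paper: both rely on Lemma~\ref{l:Rb} for continuity and monotonicity of $\expect[R_s(\Hm,P^{\Rate})]$, note that $\mu_{C_F^{\epsilon}}=0$ by the defining relation $C_F^{\epsilon}=\expect[R_s(\Hm,P^{\ast})]/(1-\epsilon)$, and source the strict decrease (hence $\mu_{\Rate}<0$ for $\Rate>C_F^{\epsilon}$) from the linear term $-(1-\epsilon)\Rate$. Your added remarks on the domain $\Rate\le\Rate_{\max}$ are a minor refinement the paper leaves implicit.
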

\begin{proof}
From Lemma~\ref{l:Rb} in Section~\ref{s:FullCSIPower}, we know that $\expect[R_s(\Hm,P^{\Rate})]$ is
a non-increasing continuous function of ${\Rate}$. 
%Since $\mu_{\Rate} = \expect[R_s(\Hm,P^{\Rate})] - \Pr(R_m(\Hm,P^{\Rate})\geq \tilde{\Rate}){\Rate}$ 
%is equal to $\expect[R_s(\Hm,P^{\Rate})]- (1- \epsilon){\Rate}$ due to definition of $\tilde{\Rate}$, it is a continuous decreasing function of %${\Rate}$.
Therefore, $\mu_{\Rate}$ it is a continuous  function of ${\Rate}$.
Furthermore, by definition of $C_F^{\epsilon}$ in \eqref{SecrecyCapacity}, $\mu_{C_F^{\epsilon}}=0$.
Combining these two facts,
we conclude that $\mu_{\Rate}<0$, for ${\Rate}>C_F^{\epsilon}$. 
\end{proof}
\begin{lemma}\label{l:stablelemma}
There exists an almost surely finite random variable
$Q'$ such that, for all $x$,
%\begin{equation}
%\liminf_{t\to\infty} \Pr(Q(t)>x) \geq \Pr(Q'>x)
%\end{equation}
%and
\begin{equation}\label{eq:stable1}
 \limsup_{t\to\infty} \Pr(Q(t)>x) \leq \Pr(Q'+{\Rate} >x)
\end{equation}
\end{lemma}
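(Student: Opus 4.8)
The plan is to recognize that the auxiliary process $\{Q'(t)\}$ defined by \eqref{eq:newQstar} is a Lindley (reflected random walk) recursion driven by the i.i.d.\ increments $X(t) \eqdef R_s(t) - \Rate\,\ind(\bar{\Oc}_{\text{x}}(t))$, whose common mean is $\mu_{\Rate} = \expect[R_s(\Hm,P^{\Rate})] - \Rate(1-\epsilon)$. Since $\Rate > C_F^\epsilon$, Lemma~\ref{muLemma} gives $\mu_{\Rate} < 0$. First I would invoke the classical exchangeability/time-reversal identity for the Lindley recursion with $Q'(1)=0$: for each fixed $t$, $Q'(t) \eqd \max_{0\leq k\leq t-1} S_k$, where $S_0 = 0$ and $S_k = \sum_{i=1}^k X'(i)$ with $\{X'(i)\}$ i.i.d.\ copies of $X(1)$. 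The right-hand side is non-decreasing in $t$ pathwise, so $\max_{0\leq k\leq t-1} S_k \uparrow M \eqdef \sup_{k\geq 0} S_k = \max\bigl(0,\sup_{k\geq 1} S_k\bigr)$.

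Next I would show $M < \infty$ almost surely: by the strong law of large numbers $S_k/k \to \mu_{\Rate} < 0$ a.s., hence $S_k \to -\infty$ a.s., so the supremum of the walk is attained and finite a.s. Set $Q' \eqd M$. To transfer the bound, recall Lemma~\ref{lemma:Qstar}, which gives $Q(t) \leq Q'(t) + \Rate$ for every $t$, so $\Pr(Q(t) > x) \leq \Pr(Q'(t) > x - \Rate)$ for all $x$. Since the events $\{\max_{0\leq k\leq t-1} S_k > y\}$ increase to $\{M > y\}$ as $t\to\infty$, monotone convergence yields $\lim_{t\to\infty}\Pr(Q'(t) > y) = \Pr(M > y)$ for every $y$. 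Taking $\limsup_{t\to\infty}$ then gives $\limsup_{t\to\infty}\Pr(Q(t) > x) \leq \Pr(M > x - \Rate) = \Pr(Q' + \Rate > x)$, which is exactly \eqref{eq:stable1}.

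The only delicate point is justifying the time-reversal representation of $Q'(t)$ and the monotonicity of the associated events cleanly; both follow from the i.i.d.\ structure of the increments and the inclusion $\{\max_{0\leq k\leq t-1} S_k > y\} \subseteq \{\max_{0\leq k\leq t} S_k > y\}$, so this is routine once stated. Everything else is bookkeeping. Alternatively, one could appeal directly to Loynes' stationary-construction argument, or to the same positive-recurrence reasoning used for Lemma~\ref{l:stationarity} applied to the simpler recursion \eqref{eq:newQstar} (which is dominated by $Q_M(t)$'s dynamics with $M=\infty$), but the self-contained random-walk argument above is the most transparent.
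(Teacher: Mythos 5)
Your proof is correct and follows essentially the same route as the paper: establish that the reflected recursion \eqref{eq:newQstar} has i.i.d.\ increments with negative mean $\mu_{\Rate}<0$ (Lemma~\ref{muLemma}), conclude that $Q'(t)$ converges in distribution to an a.s.\ finite limit, and then transfer the bound to $Q(t)$ via the sandwich inequality of Lemma~\ref{lemma:Qstar}. The only difference is cosmetic: where the paper simply cites Loynes' stability theorem for this convergence step, you unpack it explicitly through the Lindley/random-walk-maximum duality $Q'(t)\eqd\max_{0\leq k\leq t-1}S_k\uparrow\sup_{k\geq 0}S_k$, which is a valid self-contained substitute for that citation.
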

 \begin{proof}
Combining Lemma~\ref{muLemma} with the classic results by Loynes \cite{LOY62},
we can see that $Q'(t)$ converges in distribution to an almost surely
finite random variable $Q'$ such that
$$
 \lim_{t\to\infty} \Pr(Q'(t)>x) = \Pr(Q'>x)
$$
Using \eqref{eq:sanwich}, we finish the
proof of the lemma.
\end{proof}
Now, we characterize the tail distribution of the key queue.
\begin{lemma}\label{l:bufferOverflow}
For any given $M\geq 0$, 
\begin{align}
 \lim_{\Rate \searrow C_F^{\epsilon}}  \limsup_{t \to \infty} \Pr\left( \frac{|\mu_{\Rate}| (Q(t)-\Rate)}{{\sigma}_{\Rate}^2}>M \right) \leq
 e^{-2M}\label{bufferOverflow}
\end{align}
\end{lemma}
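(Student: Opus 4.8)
The plan is to pass from the non-stationary queue $Q(t)$ to the stationary reflected random walk of Lemma~\ref{l:stablelemma}, bound that walk's tail by an exponential via a Cram\'er/Lundberg change-of-measure argument, and then take the heavy-traffic limit $\Rate\searrow C_F^{\epsilon}$ in the exponent; equivalently, this final step is exactly Kingman's heavy-traffic theorem \cite{Kingman62}.

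First I would rewrite the event in terms of a fixed overflow level. For $\Rate>C_F^{\epsilon}$ set $y_{\Rate}=M\sigma_{\Rate}^2/|\mu_{\Rate}|$, so that $\{\,|\mu_{\Rate}|(Q(t)-\Rate)/\sigma_{\Rate}^2>M\,\}=\{\,Q(t)>\Rate+y_{\Rate}\,\}$. Applying Lemma~\ref{l:stablelemma} with $x=\Rate+y_{\Rate}$ gives
\[
\limsup_{t\to\infty}\Pr\!\left(\frac{|\mu_{\Rate}|(Q(t)-\Rate)}{\sigma_{\Rate}^2}>M\right)\ \le\ \Pr(Q'+\Rate>\Rate+y_{\Rate})\ =\ \Pr(Q'>y_{\Rate}),
\]
where $Q'$ is the a.s.\ finite stationary limit of the reflected walk \eqref{eq:newQstar}. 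By the Loynes construction already used in the proof of Lemma~\ref{l:stationarity}, $Q'\eqd\sup_{n\ge0}S_n$ with $S_0=0$, $S_n=\sum_{k=1}^{n}X_k$, where the increments $X_k\eqdef R_s(k)-\Rate\,\ind(\bar{\Oc}_{\text{x}}(k))$ are i.i.d.\ with mean $\mu_{\Rate}<0$ (Lemma~\ref{muLemma}) and variance $\sigma_{\Rate}^2$.

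Next I would establish the exponential tail bound on $Q'$. Let $\Lambda_{\Rate}(\theta)=\log\expect[e^{\theta X_1}]$; it is convex, finite in a neighbourhood of $0$ (the increment is at most $R_s$, whose moment generating function is finite near the origin for the fading laws considered, e.g.\ Rayleigh), with $\Lambda_{\Rate}(0)=0$ and $\Lambda_{\Rate}'(0)=\mu_{\Rate}<0$; since $X_1>0$ with positive probability (cf.\ the footnote to Lemma~\ref{l:atom}), $\Lambda_{\Rate}$ is eventually positive, so there is a unique Cram\'er root $\theta_{\Rate}>0$ with $\Lambda_{\Rate}(\theta_{\Rate})=0$. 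Then $\{e^{\theta_{\Rate}S_n}\}$ is a nonnegative martingale, and optional stopping at $\tau_y=\inf\{n:S_n>y\}$ yields $1\ge\expect[e^{\theta_{\Rate}S_{\tau_y}}\ind(\tau_y<\infty)]\ge e^{\theta_{\Rate}y}\Pr(\tau_y<\infty)$, i.e.\ $\Pr(Q'>y)=\Pr(\sup_n S_n>y)\le e^{-\theta_{\Rate}y}$. Combining with the previous display,
\[
\limsup_{t\to\infty}\Pr\!\left(\frac{|\mu_{\Rate}|(Q(t)-\Rate)}{\sigma_{\Rate}^2}>M\right)\ \le\ \exp\!\left(-\,\theta_{\Rate}\,\frac{M\sigma_{\Rate}^2}{|\mu_{\Rate}|}\right).
\]

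It remains to take the heavy-traffic limit in the exponent. By Lemma~\ref{muLemma}, $\mu_{\Rate}\to0^-$ as $\Rate\searrow C_F^{\epsilon}$, and by the continuity in Lemma~\ref{l:Rb}, $\sigma_{\Rate}^2\to\sigma_{C_F^{\epsilon}}^2\in(0,\infty)$. A second-order expansion of the Cram\'er equation, $0=\Lambda_{\Rate}(\theta_{\Rate})=\theta_{\Rate}\mu_{\Rate}+\tfrac12\theta_{\Rate}^2\sigma_{\Rate}^2+o(\theta_{\Rate}^2)$, forces $\theta_{\Rate}\to0$ and $\theta_{\Rate}=-2\mu_{\Rate}/\sigma_{\Rate}^2+o(\mu_{\Rate})$, hence $\theta_{\Rate}\sigma_{\Rate}^2/|\mu_{\Rate}|\to2$; substituting into the last display gives $\lim_{\Rate\searrow C_F^{\epsilon}}\limsup_{t\to\infty}\Pr\!\left(|\mu_{\Rate}|(Q(t)-\Rate)/\sigma_{\Rate}^2>M\right)\le e^{-2M}$, which is \eqref{bufferOverflow}. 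The step I expect to be the main obstacle is precisely the uniformity here: controlling the $o(\theta_{\Rate}^2)$ remainder \emph{uniformly} in $\Rate$ near $C_F^{\epsilon}$, which needs a local uniform bound on $\expect[e^{\theta X_1}]$ (equivalently, uniform integrability of $e^{\theta R_s}$) for $\theta$ in a fixed small interval, together with $\liminf_{\Rate\searrow C_F^{\epsilon}}\sigma_{\Rate}^2>0$. Both follow from the assumed regularity of $f(\hv)$ but should be made explicit, or bypassed by direct appeal to \cite{Kingman62}.
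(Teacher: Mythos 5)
Your proposal is correct in outline and starts exactly as the paper does: reduce to the stationary queue via Lemma~\ref{l:stablelemma} (writing the event as $\{Q(t)>\Rate+M\sigma_{\Rate}^2/|\mu_{\Rate}|\}$ and bounding it by $\Pr(Q'>M\sigma_{\Rate}^2/|\mu_{\Rate}|)$, with $Q'$ the Loynes limit $\sup_n S_n$). Where you diverge is the core tail estimate. The paper does not derive an exponential bound for each fixed $\Rate$; it invokes Kingman's heavy-traffic limit theorem \cite{Kingman62} (Theorem 7.1 in \cite{As87}) directly, obtaining $\lim_{\Rate\searrow C_F^{\epsilon}}\Pr\bigl(|\mu_{\Rate}|Q'/\sigma_{\Rate}^2>y\bigr)=e^{-2y}$ after verifying only three conditions: $\mu_{\Rate}\to 0$, $\liminf\sigma_{\Rate}^2>0$, and uniform integrability of the squared increments $\bigl(R_s(\Hm,P^{\Rate})-\Rate\ind(\bar{\Oc}_{\text{x}}(t))\bigr)^2$. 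You instead prove a Lundberg bound $\Pr(Q'>y)\le e^{-\theta_{\Rate}y}$ via the Cram\'er root and optional stopping, and then expand $\Lambda_{\Rate}$ to second order to get $\theta_{\Rate}\sigma_{\Rate}^2/|\mu_{\Rate}|\to 2$. Your route buys a non-asymptotic, self-contained bound for every $\Rate>C_F^{\epsilon}$ (and only needs the upper-bound half of Kingman's result), but it costs more in hypotheses: existence of the Cram\'er root and uniform control of the $o(\theta^2)$ remainder require $\expect[e^{\theta(R_s(\Hm,P^{\Rate})-\Rate\ind(\bar{\Oc}_{\text{x}}))}]<\infty$ on a fixed neighbourhood of $0$, uniformly in $\Rate$ near $C_F^{\epsilon}$ — an exponential-moment condition on $R_s$ that is not implied by the paper's standing assumption that $f(\hv)$ is merely a well-defined density (heavy-tailed fading can defeat it while second moments survive), whereas the paper's conditions are second-moment/UI in nature. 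You identify this obstacle yourself and correctly note it can be bypassed by citing \cite{Kingman62} directly, which is precisely the paper's proof; to keep your version self-contained you would need to state the exponential-moment hypothesis explicitly (or verify it for the fading laws considered), and also note that your appeal to Lemma~\ref{l:Rb} gives continuity of $\expect[R_s(\Hm,P^{\Rate})]$, not of $\sigma_{\Rate}^2$ — for the exponent limit you only need $\liminf_{\Rate\searrow C_F^{\epsilon}}\sigma_{\Rate}^2>0$, which is the paper's condition (ii).
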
 
%Therefore, if the drift is negative, then the workload in the
%secret key queue roughly follows an exponential distribution under infinite buffer assumption.
\begin{proof}
First, we prove that
\begin{align}
 \lim_{{\Rate} \searrow C_F^{\epsilon}} \Pr\left( \frac{|\mu_{{\Rate}}| Q'}{\sigma_{{\Rate}}^2}>y \right) =
 e^{-2y}, \label{q'bufferOverflow}
\end{align}
which is based on the heavy traffic limit for queues developed in \cite{Kingman62},
see also Theorem 7.1 in \cite{As87}.
In order to prove \eqref{q'bufferOverflow}, we only need to verify the following
three conditions: 
i) $\lim_{{\Rate} \searrow C_F^{\epsilon}}\mu_{{\Rate}}=0$; ii) $\lim_{{\Rate}
\searrow C_F^{\epsilon}}\sigma_{{\Rate}}^2 > 0$; and iii) the set 
$\left\{\left(R_s(\Hm,P^{\Rate})-{\Rate}\ind(\bar{\Oc}_{\text{x}}(t))\right)^2\right\}$
of random
variables 
indexed by ${\Rate}$ is uniformly integrable. 

i) From Lemma~\ref{muLemma}, we obtain
$\lim_{{\Rate} \searrow C_F^{\epsilon}}\mu_{{\Rate}} = 0$.

ii) %Recall that $\tilde{\Rate}$ is a function of $P^{\Rate}$, hence is also a function of ${\Rate}$. 
%Let ${\Rate}^{\ast}_{th} = \lim_{{\Rate} \to C_F^{\epsilon}}\tilde{\Rate}$. 
Since $R_s(\Hm,P^{\ast})-C_F^{\epsilon}\bar{\Oc}_{\text{x}}(t)$ is not a
constant random variable, almost surely
\begin{align}
\lim_{{\Rate} \searrow C_F^{\epsilon}}\sigma^2_{{\Rate}}&= \var
[R_s(\Hm,P^{\ast})-C_F^{\epsilon}(\bar{\Oc}_{\text{x}}(t))]>0 \nonumber
\end{align}

iii) Note that, ${\Rate}$ lies on the interval $[0 ~ {\Rate}_{\max}]$, where ${\Rate}_{\max}$,  defined in Lemma~\ref{bmaxlemma}
%is the maximum ${\Rate}$ that guarantees $\tilde{\Rate} \geq {\Rate}$. Hence,
then we have
\begin{align*} 
 \left(R_s(\Hm,P^{\Rate})-{\Rate}\ind(\bar{\Oc}_{\text{x}}(t))\right)^2 = & R_s(\Hm,P^{{\Rate}})^2 - 2 R_s(\Hm,P^{\Rate}) {\Rate} \ind\left( \bar{\Oc}_{\text{x}}(t)\right) + \\
& {\Rate}^2 \ind\left( \bar{\Oc}_{\text{x}}(t)\right)   \\
    \leq & R_s(\Hm,P^{\Rate})^2 + {\Rate}_{\max}^2
\end{align*}
Since $R_s(\hv,P)$ is a continuous function of $P(\hv)$, and for any ${\Rate}$ on 
the interval $[0 ~ {\Rate}_{\max}]$, $\lim_{c\to\infty}\Pr(P^{\Rate}(\Hm)>c) = 0$, 
hence we can see that $\lim_{c \to \infty} \Pr(R_s(\Hm,P^{\Rate})>c) = 0$. Therefore,
this class of random
variables is uniformly integrable.
This completes the proof of \eqref{q'bufferOverflow}.
This result, in conjunction with  Lemma~\ref{l:stablelemma} completes the
proof.
\end{proof}
Using Lemma 1 in \cite{Ness}, we relate the loss ratio of our finite
buffer queue $Q_M(t)$ to the overflow probability of the infinite buffer queue $Q(t)$
as follows
\begin{align} 
\expect[R_s(\Hm,P^{\Rate})] \limsup_{T \to \infty} L^T(M) \leq \int_{x=M}^{\infty} \limsup_{t\to\infty}\Pr(Q(t)>x) dx   \label{LossOverflow}
\end{align}
Combining Lemma~\ref{l:bufferOverflow} with \eqref{LossOverflow}, the proof is complete.
%%%%%%%%%%%%%%%%%%%%%%%%%%%%%%%%%%%%%%%%%%%%%%%%%%%%%%%%%%%%%%%%%%%%%
%%%%%%%%%%%%%%%%%%%%%%%%%%%%%%%%%%%%%%%%%%%%%%%%%%%%%%%%%%%%%%%%%%%%%

\end{appendices}
%%%%%%%%%%%%%%%%%%%%%%%%%%%%%%%%%%%%%%%%%%%%%%%%%%%%%%%%%%%%%%%%%%%%%%%%%%%%%

%%%%%%%%%%%%%%%%%%%%%%%%%%%%%%%%%%%%%%%%%%%%%%%%%%%%%%%%%%%%%%%%%%%%%%%%%%%%%%
%%%%%%%%%%%%%%%%%%%%%%%%%%%%%%%%%%%%%%%%%%%%%%%%%%%%%%%%%%%%%%%%%%%%%%%%%%%%%%

\end{document}